\newtheorem{theorem}{Claim}[section]
\newtheorem{definition}{Definition}
\definecolor{acolor}{rgb}{0.1,.6,0}
\definecolor{rcolor}{rgb}{0.9,0.1,0.1}
\title{When left and right disagree: Entropy and von Neumann algebras in quantum gravity with general AlAdS boundary conditions}
\author[a]{Donald Marolf} \emailAdd{marolf@ucsb.edu}
\author[b]{and Daiming Zhang}\emailAdd{zhang-dm20@mails.tsinghua.edu.cn}
\affiliation[a]{Department of Physics, University of California, Santa Barbara, CA 93106, USA}
\affiliation[b]{Tsinghua University, 100084
Beijing, China}
\abstract{Euclidean path integrals for UV-completions of $d$-dimensional bulk quantum gravity were recently studied in \cite{Colafranceschi:2023urj} by assuming that they satisfy axioms of finiteness, reality, continuity, reflection-positivity, and factorization.  Sectors ${\cal H}_{\cal B}$ of the resulting Hilbert space were then defined for any $(d-2)$-dimensional surface ${\cal B}$, where ${\cal B}$ may be thought of as the boundary $\partial \Sigma$ of a bulk Cauchy surface in a corresponding Lorentzian description, and where ${\cal B}$ includes the specification of appropriate boundary conditions for bulk fields.  Cases where ${\cal B}$ was the disjoint union $B \sqcup B$ of
two identical  $(d-2)$-dimensional surfaces $B$ were studied in detail  and, after the inclusion of finite-dimensional `hidden sectors,' were shown to provide a Hilbert space interpretation of the associated Ryu-Takayanagi entropy. The analysis was performed by constructing type-I von Neumann algebras $\mathcal A_L^B, \mathcal A_R^B$ that act respectively at the left and right copy of $B$ in $B\sqcup B$. %

Below, we consider the case of general ${\cal B}$, and in particular for ${\cal B} = B_L \sqcup B_R$ with $B_L, B_R$ distinct.
For any $B_R$, we find that the von Neumann algebra at $B_L$ acting on the off-diagonal Hilbert space sector ${\cal H}_{B_L \sqcup B_R}$ is a central projection of the corresponding type-I 
 von Neumann algebra on the `diagonal' Hilbert space  ${\cal H}_{B_L \sqcup B_L}$. As a result, the von Neumann algebras $\mathcal A_L^{B_L},\mathcal A_R^{B_L}$ defined in \cite{Colafranceschi:2023urj} using the diagonal Hilbert space $\mathcal H_{B_L \sqcup B_L}$ turn out to coincide precisely with the analogous algebras defined using the full Hilbert space of the theory (including all sectors $\mathcal H_{\mathcal B}$).  A second implication is that, for any ${\cal H}_{B_L \sqcup B_R}$,  including  the same hidden sectors as in the diagonal case again provides a Hilbert space interpretation of the Ryu-Takayanagi entropy. We also show the above central projections to satisfy consistency conditions that lead to a universal central algebra relevant to all choices of $B_L$ and $B_R$.
}
\begin{document}
\maketitle
\flushbottom

\section{Introduction}
\label{sec:intro}

As emphasized in \cite{Colafranceschi:2023urj}, a number of arguments regarding gravitational entropy that were originally motivated by the AdS/CFT correspondence have now been understood to follow directly from bulk physics.   A primary example is 
the derivation \cite{Almheiri:2019qdq,Penington:2019kki} of the Island Formula for the entropy of  Hawking radiation transferred from an asymptotically-locally-AdS (AlAdS) gravitational system to a non-gravitational bath.  This derivation simply combines the gravitational path integral arguments of \cite{Lewkowycz:2013nqa,Faulkner:2013ana,Dong:2016hjy,Dong:2017xht} with the setting studied in \cite{Penington:2019npb,Almheiri:2019psf}. And as explained in \cite{Marolf:2020rpm}, in this context the results may be safely interpreted in terms of standard von Neumann entropies without invoking holography at any intermediate step.  

Another class of examples  involves taking the semiclassical limit in which Hilbert space densities of states diverge.  Using purely bulk methods, one can show that the algebra of observables is generated by a type-II von Neumann factor and its commutant.  
This observation then leads to an entropy on these algebras that agrees with the quantum-corrected RT formula up to an additive constant \cite{Witten:2021unn,Chandrasekaran:2022eqq,Penington:2023dql,Kudler-Flam:2023qfl}.

Motivated by such results, it was suggested in \cite{Colafranceschi:2023urj} that purely bulk arguments (i.e., without assuming the existence of a holographic dual field theory) should suffice to provide a Hilbert-space interpretation of an entropy 
defined by regions of an AlAdS boundary for which the semiclassical limit is given by
the Ryu-Takayangi formula \cite{Ryu:2006bv,Ryu:2006ef} (or its covariant Hubeny-Rangamani-Takayanagi generalization \cite{Hubeny:2007xt}).

By assuming certain Euclidean-signature axioms, this was then shown to be the case in 
so-called `diagonal' settings where the boundary $\partial \Sigma$ of a Cauchy surface  $\Sigma$ in a corresponding Lorentz-signature spacetime took the form $\partial \Sigma = B \sqcup B$, where $B$ was a compact $(d-2)$-dimensional manifold (with $\partial B=\emptyset$), and on which appropriate boundary conditions were specified for bulk fields.

The argument of \cite{Colafranceschi:2023urj} was formulated in terms of a supposed path integral for a UV-complete finite-coupling\footnote{Gravitational path integrals associated with familiar classical actions diverge in the limit $G\rightarrow 0$ (or $S_0\rightarrow \infty$ for JT gravity).  As a result, the asymptotic expansion of such path integrals in powers of $G$ will generally fail to satisfy the axioms of \cite{Colafranceschi:2023urj}; see \cite{Colafranceschi:2023txs} and especially section 5.4 of \cite{Colafranceschi:2023urj} for further discussion of this issue.} bulk asymptotically-locally-AdS$_d$ (AlAdS$_d$) theory.  The previously-advertised axioms for such path integrals were called finiteness, reality\footnote{The reality axiom implies that the theory to be invariant under a notion of time-reversal symmetry.  We expect that this axiom is not in fact necessary.}, reflection-positivity, continuity, and factorization. In the above diagonal setting, these properties sufficed to show the associated Hilbert space sectors $\mathcal H_{B \sqcup B}$ to be direct sums 
$\bigoplus_\mu \mathcal H_{B \sqcup B}^\mu$ of Hilbert spaces that factorize as $\mathcal H_{B \sqcup B}^\mu =\mathcal H_{B \sqcup B,L}^\mu \otimes \mathcal H_{B \sqcup B,R}^\mu$, where the left and right factors $\mathcal H_{B \sqcup B,L}^\mu$ and $\mathcal H_{B \sqcup B,R}^\mu$ for given $\mu$ are isomorphic but are associated with operator algebras that, in an appropriate sense, act at the respective left and right copy of $B$ in $\mathcal H_{B \sqcup B}$.  It was then further shown that the path integral defines a trace on such algebras that agrees with the standard sum-over-diagonal-matrix-elements Hilbert-space trace associated with the Hilbert space
\begin{eqnarray}
    \label{eq:tildeHs}
\undertilde {\cal H}_{B} = \bigoplus_\mu {\cal H}_{B \sqcup B, L}^\mu \otimes {\mathbb C}^{n_\mu} = \bigoplus_\mu {\cal H}_{B \sqcup B, R}^\mu \otimes {\mathbb C}^{n_\mu}
\end{eqnarray}
for appropriate integers $n_\mu.$  The corresponding entropies thus agree as well and, by first making use of an appropriate embedding of ${\cal H}_{B \sqcup B}$ in $\undertilde{\mathcal H}_B \otimes \undertilde{\mathcal H}_B$ and then connecting with the Lewkowycz-Maldacena argument \cite{Lewkowycz:2013nqa} and its generalizations, one obtains a Hilbert space interpretation of the Ryu-Takayanagi entropy of either boundary $B$.

The reader should note that the Hilbert space $\undertilde {\cal H}_{B}$ was not explicitly  introduced in \cite{Colafranceschi:2023urj}, though its use will simplify our discussion.  
Indeed, the tilde decoration at the bottom of $\undertilde {\cal H}_{B}$ is intended to  help to distinguish $\undertilde {\cal H}_{B}$ from the various Hilbert spaces defined in \cite{Colafranceschi:2023urj} that were denoted by symbols with upper tildes.
As shown by the 2nd equality in \eqref{eq:tildeHs}, it would be unnatural to assign either an $L$ or an $R$ label to $\undertilde {\cal H}_{B}$.  Furthermore, in the diagonal case we can lose nothing by using only a subscript $B$ instead of $B\sqcup B$.  This will turn out to be a good choice of notation as we will see below that the same space $\undertilde {\cal H}_{B}$ arises in the analogous analysis of any off-diagonal sector ${\cal H}_{B \sqcup B'}$, where now 
${\cal H}_{B \sqcup B'}$ is to be embedded in $\undertilde {\cal H}_{B} \otimes  \undertilde {\cal H}_{B'}$.

The focus of \cite{Colafranceschi:2023urj} on diagonal sectors $\mathcal H_{B\sqcup B}$ had two primary drawbacks.  The most obvious was of course that it provided the desired Hilbert space interpretation of RT entropy only when the two boundaries are identical.  Directly generalizing the arguments of \cite{Colafranceschi:2023urj} to the case with $(d-2)$-dimensional boundary $B \sqcup B'$ turns out to be nontrivial due to the reliance of \cite{Colafranceschi:2023urj} on special properties of cylinders $C_\epsilon = B \times [0,\epsilon]$.  The obstacle is that such cylinders are  intrinsically diagonal in the sense that $\partial C_\epsilon = B \sqcup B$. We will thus seek other arguments below.

The second issue arises from the fact that the choice of Hilbert space plays a role in the construction of the desired operator algebras.  In particular, while an algebra of simple operators can be defined directly using smooth boundary conditions in the path integral, the most useful mathematical structures turn out to be the left and right von Neumann algebras constructed by using a Hilbert space to define an appropriate notion of a completion.  While this may seem like a mathematical technicality, it raises the interesting question of whether using the full quantum gravity Hilbert space might lead to von Neumann algebras (and thus to entropies) that differ from the ones constructed using only the diagonal sectors $\mathcal H_{B \sqcup B}$ as in \cite{Colafranceschi:2023urj}.

We will address both of these shortfalls below.  As further motivation for our study,  it is useful to take inspiration from the AdS/CFT correspondence, in which the bulk path integral is simply equal to a path integral for the dual CFT.  While it is not necessarily the most general allowed setting, this case certainly satisfies the axioms of \cite{Colafranceschi:2023urj}.  Furthermore, in the AdS/CFT context, for any choice of $B$ and $B'$ we always have the so-called Harlow-factorization property\footnote{The name comes from the emphasis on this property in \cite{Harlow:2015lma}.} $\mathcal H_{B \sqcup B'} = \mathcal H_{B}\otimes \mathcal H_{B'}$ so that in particular, the left Hilbert space factor (and the associated type-I von Neumann factor) is manifestly independent of the choice of $B'$. \footnote{Indeed, since for AdS/CFT there is only a single value of $\mu$ and it has $n_\mu=1$, for this case \eqref{eq:tildeHs} gives  $\undertilde{\mathcal H}_B=\mathcal H_B.$}

One might then hope that a similar independence of $B'$ follows more generally from the axioms of \cite{Colafranceschi:2023urj}.  Below, it will be useful to rename $B,B'$ as $B_L, B_R$ and to refer to $B_L, B_R$ as the left and right boundaries.  In that notation, while we have already seen in the diagonal context that a given boundary $B_L$ is associated with a {\it set} of left Hilbert space factors $\mathcal H_{B_L \sqcup B_L, L}$, one may nevertheless hope that the full set of such factors has no dependence on $B_R$.

As foreshadowed above, 
this will turn out to be nearly true in the sense that the off-diagonal Hilbert spaces $\mathcal H_{B_L \sqcup B_R}$ again decompose according to\footnote{Here we use notation chosen to mirror that of \cite{Colafranceschi:2023urj}.  In this notation, we emphasize that an object labelled with $B_L\sqcup B_R$ may in fact depend on the partition of ${\cal B} = B_L\sqcup B_R$ into $B_L$ and $B_R$, and thus that it is generally not determined entirely by ${\cal B}$ alone.  Symbols in which an explicit $\sqcup$ does not appear will be free of this issue.}
\begin{eqnarray}
\label{eq:LRsumintro}
    \mathcal H_{B_L \sqcup B_R} &=& \bigoplus_\mu \mathcal H_{B_L \sqcup B_R}^\mu \ \ \ {\rm with} \ \ \
    \mathcal H_{B_L \sqcup B_R}^\mu = \mathcal H_{B_L \sqcup B_R,L}^\mu \otimes \mathcal H_{B_L \sqcup B_R,R}^\mu,
\end{eqnarray}
such that every left Hilbert space factor $\mathcal H_{B_L \sqcup B_R,L}^\mu$ is in fact canonically isomorphic to a left factor $\mathcal H_{B_L \sqcup B_L, L}^\mu$ associated with the diagonal Hilbert space 
$\mathcal H_{B_L \sqcup B_L}$. However, for given $B_R$ it may be that we find only a subset of the diagonal left Hilbert-space factors $\mathcal H_{B_L \sqcup B_L, L}^\mu$. Indeed, we will see that there is a natural sense in which the isomorphic factors $\mathcal H_{B_L \sqcup B_R,L}^\mu$ and $\mathcal H_{B_L \sqcup B_L,L}^\mu$ can be said to be associated with the same value of $\mu$, so that for any $\mu$ appearing in \eqref{eq:LRsumintro} it is natural to write
\begin{equation}
\label{eq:HmuBL}
\mathcal H_{B_L \sqcup B_R,L}^\mu = \mathcal H_{B_L \sqcup B_L,L}^\mu =: \mathcal H_{B_L}^\mu, 
\end{equation}
where this defines the symbol $\mathcal H_{B_L}^\mu$ used on the right-hand-side. Here we have refrained from adding an additional ${}_{,L}$ subscript on $\mathcal H_{B_L}^\mu$ since the identical Hilbert space arises from the corresponding construction when $B_L$ is used as a right boundary (instead of a left boundary as above).  In addition, for given $\mu$ the integer $n_\mu$  will be shown to be independent of the choice of boundary so that, in the notation of \eqref{eq:HmuBL}, replacing $B$ in \eqref{eq:tildeHs} by $B_L, B_R$ we may write both
\begin{equation}
\label{eq:utLR}
\undertilde {\mathcal H}_{B_L} := \bigoplus_\mu \left( \mathcal H_{B_L}^\mu \otimes {\mathbb C}^{n_\mu} \right) \ \ \ {\rm and}
\ \ \ \undertilde {\mathcal H}_{B_R} := \bigoplus_\mu  \left( {\mathcal H}_{B_R}^\mu \otimes {\mathbb C}^{n_\mu}\right) 
\end{equation}
for the same integers $n_\mu$.

The results of \cite{Colafranceschi:2023urj} then imply that the trace defined by the path integral on operators that act at any $B_L$ or $B_R$ coincides with the sum-over-diagonal-matrix-elements trace defined by the Hilbert spaces  \eqref{eq:utLR}.  A Hilbert space interpretation of the Ryu-Takayanagi entropy associated with either $B_L$ or $B_R$ of states in ${\cal H}_{B_L \sqcup B_R}$ then follows from an appropriate embedding of ${\cal H}_{B_L \sqcup B_R}$ in $\undertilde {\cal H}_{B_L} \otimes \undertilde {\cal H}_{B_R}$.

Finally, we will also verify the analogous statements for the above-mentioned von Neumann algebras, showing in particular that the algebras constructed in \cite{Colafranceschi:2023urj} using only the diagonal sectors $\mathcal H_{B \sqcup B}$ do in fact coincide with von Neumann algebras completed by using the topology defined by the entire quantum gravity Hilbert space.
More specficially, we will see that the von Neumann algebra acting at $B_L$ associated with an off-diagonal sector $\mathcal H_{B_L \sqcup B_R}$ is always a central projection of the corresponding 
algebra defined by the diagonal Hilbert space sector $\mathcal H_{B_L \sqcup B_L}$.  Furthermore, these projections will be shown to satisfy compatibility conditions that allow us to assemble such central projections into a universal central algebra, independent of the choice of any particular $B_L$, from which the central algebra for each pair $B_L$,$B_R$ can be recovered by acting with an appropriate projection. Such algebraic results are in fact more fundamental than the Hilbert-space results described above and will thus be addressed first in the work below.

This paper is organized as follows. We begin in section \ref{sec:rev} by reviewing the construction of algebras and Hilbert spaces from Euclidean path integrals as described in~\cite{Colafranceschi:2023urj}. This includes the definition of general off-diagonal Hilbert space sectors $\mathcal H_{B_L \sqcup B_R}$, as well as algebras  $\hat A_L^{B_L \sqcup B_R}$, $\hat A_R^{B_L \sqcup B_R}$ of operators on $\mathcal H_{B_L \sqcup B_R}$ defined by attaching surfaces respectively to the left and right boundaries $B_L,B_R$.  However, these algebras are not complete in any natural topology, and \cite{Colafranceschi:2023urj} defined von Neumann completions only in the diagonal context $B_L=B_R$.  The new results begin in section \ref{sec:alg}, which shows that the off-diagonal left-algebra $\hat A_L^{B_L \sqcup B_R}$ is canonically identified with a quotient of the diagonal left-algebra 
$\hat A_L^{B_L \sqcup B_R}$, and similarly for the right-algebras.  It then remains to study the completions that define the off-diagonal von Neumann algebras in section \ref{sec:hom}.  After developing some useful technology, we demonstrate that the off-diagonal von Neumann algebras are again canonically identified with quotients of the diagonal von Neumann algebras.  Section \ref{sec:proj} then shows this identification to take the form of a central projection, discusses the relationship between the off-diagonal and diagonal Hilbert spaces, and organizes the discussion of centers in terms of a universal central algebra that is independent of the choices of boundaries. The fact that the left and right von Neumann algebras are commutants on $\mathcal H_{B_L \sqcup B_R}$ is also established in this section by making use of further supporting results from appendix \ref{app:Riefel}.  With all of the above results in place, it is then straightforward to describe the Hilbert space interpretation of RT entropy in the off-diagonal context.  This is done in  section \ref{sec:TrEnt}, after which further discussion and final comments are provided in section \ref{sec:disc}.

\section{Algebras and Hilbert spaces from gravitational path integrals}
\label{sec:rev}

The results of \cite{Colafranceschi:2023urj} were established within an axiomatic framework for the Euclidean path integral in UV completions of quantum theories of gravity.  The five axioms used in \cite{Colafranceschi:2023urj} are briefly summarized below, though we refer the reader to \cite{Colafranceschi:2023urj} for full details and additional discussion.
\begin{enumerate}
    \item \textbf{Finiteness:}
    The boundary conditions for the path integral are assumed to form a space $X^d$ of \(d\)-dimensional `source manifolds' \(X^d\).  The path integral then defines a map \(\zeta : X^d \to \mathbb{C}\) to the complex numbers; i.e., \(\zeta(M)\) is well-defined and finite for every \(M \in X^d\).  Local restrictions on the sources may be imposed as needed to achieve this property.  As an example, one may choose to require source manifolds to have non-negative scalar curvature.
    \item \textbf{Reality:} Let $\underline X^d$ denote formal finite linear combinations of source manifolds with coefficients in $\mathbb C$. We extend $\zeta$ to elements of $\underline X^d$ by linearity.  For every \( M \in \underline X^d \), we have both \( M^* \in \underline X^d \) and \( [\zeta(M)]^* = \zeta(M^*) \).  This axiom is trivial if the original space $X^d$ of source manifolds is taken to be real; i.e., if ${}^*$ is taken to act trivially on $X^d$.  Furthermore, as noted in the introduction, this axiom also implies a time-reversal symmetry.  We thus expect that it can be dropped without significant harm, though we leave such a study for future work.
    \item \textbf{Reflection-Positivity:} $\zeta(M)$ is a non-negative real number for reflection-symmetric source manifolds $M$, i.e. \( M \in \underline X^d \) can be written in the form \( M = \sum_{I,J=1}^{n} \gamma_I^* \gamma_J M_{I,J} \) where \( \gamma_I \in \mathbb{C} \), \( \gamma_I^* \) denotes the complex conjugate of \( \gamma_I \), and where each \( M_{I,J} \) can be sliced into two parts \( N_I^* \) and \( N_J \), for some \( n \in \mathbb{Z}^+ \).
    \item \textbf{Continuity:} Suppose that the source manifold $M\in X^d$ contains a  ‘cylinder’ \( C_\epsilon \) of the form \( B \times [0, \epsilon] \).  Then \(\zeta\) is continuous in the length \( \epsilon \) of this cylinder for all $\epsilon >0$.
    \item \textbf{Factorization:} For closed boundary manifolds \( M_1, M_2 \) and their disjoint union \( M_1 \sqcup M_2 \), we have \(\zeta(M_1 \sqcup M_2) = \zeta(M_1)\zeta(M_2) \).
\end{enumerate}
The framework can also be applied to contexts like those in \cite{Saad:2019lba} and \cite{Marolf:2020xie} where factorization fails, but where the path integral can be expressed as an integral ($\zeta = \int d\alpha \ \zeta_\alpha$) over path integrals $\zeta_\alpha$ in which all of the above axioms hold.  The results of \cite{Colafranceschi:2023urj} then clearly hold for each $\zeta_\alpha$, with corresponding implications for the full path integral $\zeta$.

Another important ingredient in the discussion of
\cite{Colafranceschi:2023urj} was the concept of a source-manifold $N$ with boundary $\partial N$. An operation ${}^*$ (also used in axiom 3) was defined on $Y_{\mathcal B}^d$ by complex-conjugating the sources on $N$ and simultaneously reflecting $N$ about its boundary.  This ${}^*$ was then used to define 
the notion of a {\it rimmed} source-manifold-with-boundary $N$, which is a was allowed to have a non-trivial boundary $\partial N$ so long as some open set containing $\partial N$ was a cylinder of the form $C_\epsilon = B \times [0, \epsilon]$  described above with $C_\epsilon = C_\epsilon^*$.  We will discuss only boundaries $B$ for which there exist cylinders satisfying this condition.

The space of such rimmed source-manifolds with boundary $\mathcal B$ was denoted $Y^d_{\mathcal B}$. We see that any $N_1,N_2\in Y^d_{\mathcal B}$ can be naturally sewn together across $\mathcal B$ to define a smooth closed source-manifold $M_{N_1^*N_2}$. The space of formal finite linear combinations $\underline Y_{\mathcal B}^d$,  equipped with a pre-inner product defined by the path integral of glued source manifolds $\braket{N_1|N_2}=\zeta(M_{N_1^*N_2})$, then forms a pre-inner product space\footnote{This is the same $H_{\mathcal B}$ as in \cite{Colafranceschi:2023urj}, where it was called a pre-Hilbert space.} which we denote as $H_{\mathcal B}$. The Hilbert space $\mathcal H_{\mathcal B}$ is obtained by first taking the quotient by the space of null vectors $\mathcal N_{\mathcal B}$, and then taking the completion with respect to the norm. To reflect the fact that it is a dense subspace of $\mathcal H_{\mathcal B}$,
we introduce the notation $\mathcal D_{\mathcal B}=H_{\mathcal B}/\mathcal N_{\mathcal B}$ for the pre-Hilbert space defined before taking the completion.

The analysis of~\cite{Colafranceschi:2023urj} focused on the case when the boundary ${\mathcal B}$ is a disjoint union of two closed boundary manifolds ${\mathcal B}=B_1\sqcup B_2$.  The spaces $\underline Y^d_{B_1\sqcup B_1}$ and $\underline Y^d_{B_2\sqcup B_2}$ can then be promoted to algebras $A_L^{B_1}, A_R^{B_2}$ by defining products that simply glue together the two surfaces being multiplied.  The product $a\cdot_L b$ on $\underline Y^d_{B_1\sqcup B_1}$ (which is used to define $A_L^{B_1}$) is defined by gluing the right boundary of $a$ to the left boundary of $b$, while
the product $c\cdot_R d$ on $\underline Y^d_{B_2\sqcup B_2}$ (which is used to define $A_R^{B_2}$) is defined by gluing the left boundary of $c$ to the right boundary of $d$. Since left and right products are related by $a\cdot_L b=b\cdot_R a$, we may define $ab:=a\cdot_L b=b\cdot_R a$. There is also a natural involution $\star$ defined by $a^\star :=\left(a^{t}\right)^*$,  where the transpose operation ${}^t$ simply interchanges the labels left and right on the boundaries of $a$.  Thus $a^t$ is the same source manifold as $a$ but with the left boundary of $a$ now called the right boundary of $a^t$, and vice versa.  

If we interchange the two boundaries to instead use $\mathcal B = B_2 \sqcup B_1$, the same construction defines analogous algebras $A_L^{B_2}$ and $A_R^{B_1}$.  The involution $\star$ then defines an anti-linear isomorphism between $A^{B_i}_L$ and $A^{B_i}_R$. Furthermore, a trace operation $tr$ can be defined on these surface algebras using the path integral, $tr(a):=\zeta (M(a))$, where $M(a)\in \underline X^d$ denotes the source manifold obtained from gluing together the two copies of $B$ in the boundary of $a\in \underline Y^d_{B\sqcup B}$.

Representations of the surface algebras $A^{B_1}_L$ and $A^{B_2}_R$ on the sector $\mathcal H_{B_1\sqcup B_2}$ were then constructed in two steps. The first step was to consider the natural actions of $A_L^{B_1}, A_R^{B_2}$ on the pre-inner product space $H_{B_1\sqcup B_2}$ by gluing the relevant surfaces along corresponding boundary components ($B_1$ or $B_2$).  For example, $a\in A^{B_1}_L$ is represented by an operator $\hat a_L$ that acts on $\ket b\in H_{B_1\sqcup B_2}$ by gluing the right boundary of $a$ to the left boundary of $b$ so that $\hat a_L \ket b=\ket {ab}$. The next step used the trace inequality
\begin{equation}
\label{eq:trIn}
tr(b^\star a a^\star b) \le tr(a^\star a) tr(b^\star b).
\end{equation}
derived in~\cite{Colafranceschi:2023urj} for\footnote{\label{foot:genTrIn}In fact, the inequality \eqref{eq:trIn} was derived in \cite{Colafranceschi:2023txs} for any $a\in \underline Y^d_{B_1 \sqcup B_2}$, $b\in \underline Y^d_{B_1 \sqcup B_3}$.  In that context, we can still define a corresponding operation $\star$ such that 
$a^\star\in \underline Y^d_{B_2 \sqcup B_1}$, $b^\star\in \underline Y^d_{B_3 \sqcup B_1}$, and concatenation of surfaces then defines $a^\star a  \in A_L^{B_2}$ and $b^\star b, b^\star a a^\star b \in A_L^{B_3}$.  This more general version will be useful in appendix \ref{app:Psi}. } $a$ in either $A^{B_1}_L$ or $A^{B_2}_R$ and any $b \in H_{B_1\sqcup B_2}$.  As shown in figure~\ref{fig:ineq}, the relation \eqref{eq:trIn} is equivalent to the inequality
 \begin{equation}
\label{eq:bounded}
\langle b| \hat a_L \hat a_L^\dagger | b \rangle \le tr(a^\star a) \langle b|b\rangle,
\end{equation}
which immediately implies that each operator in the representation is bounded.  It thus preserves the null space $\mathcal N_{B_1\sqcup B_2}$, and induces a (bounded) operator on $\mathcal D_{B_1\sqcup B_2}$.  It follows that there is a unique bounded extension to the Hilbert space $\mathcal H_{B_1\sqcup B_2}$.

\begin{figure}[htbp]
\centering
\begin{minipage}[b]{0.8\textwidth}
  \centering
  \includegraphics[clip, trim=0.5cm 20cm 0.5cm 11cm, width=\linewidth]{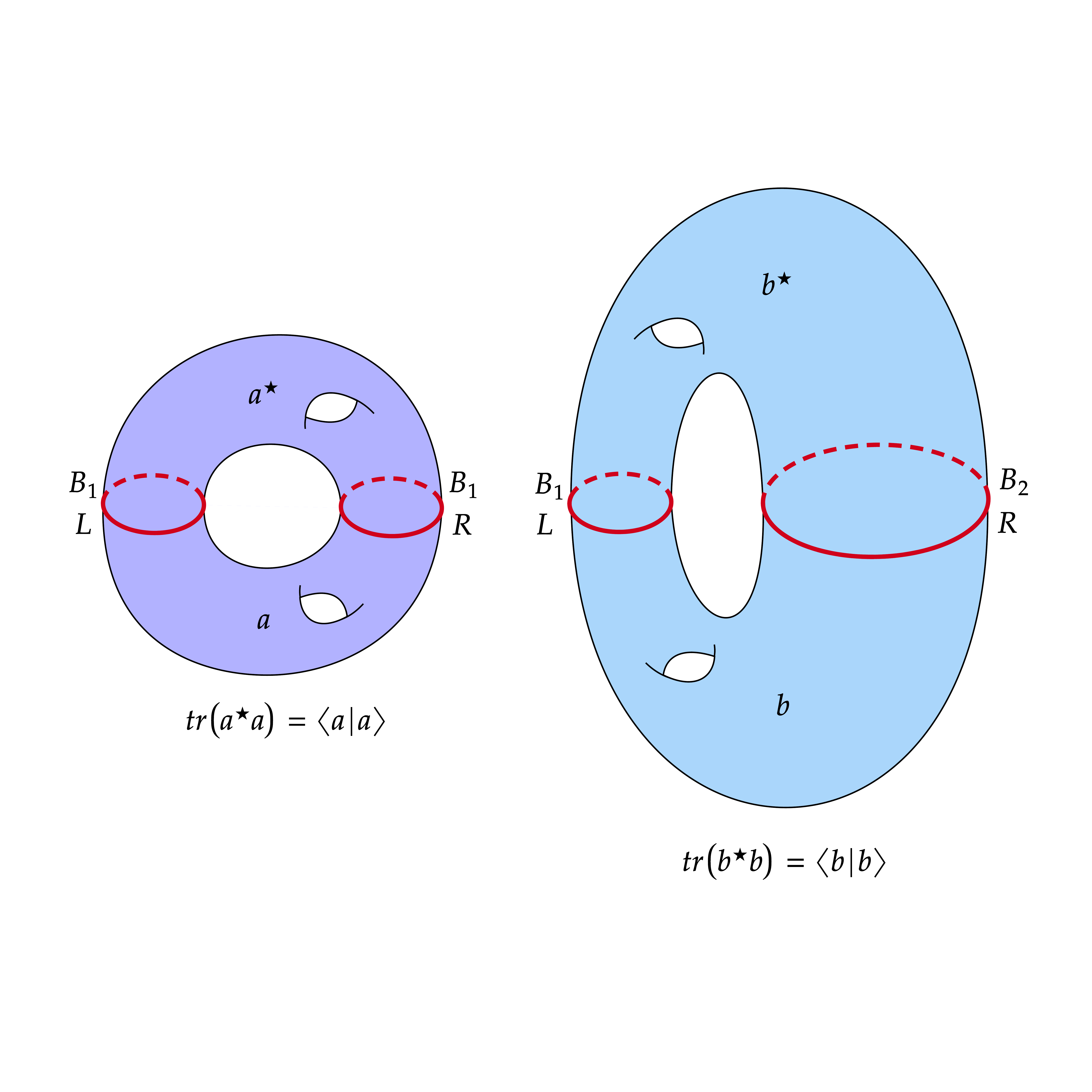}
\end{minipage}

\vspace{-1cm}

\begin{minipage}[b]{0.6\textwidth}
  \centering
  \includegraphics[width=\linewidth]{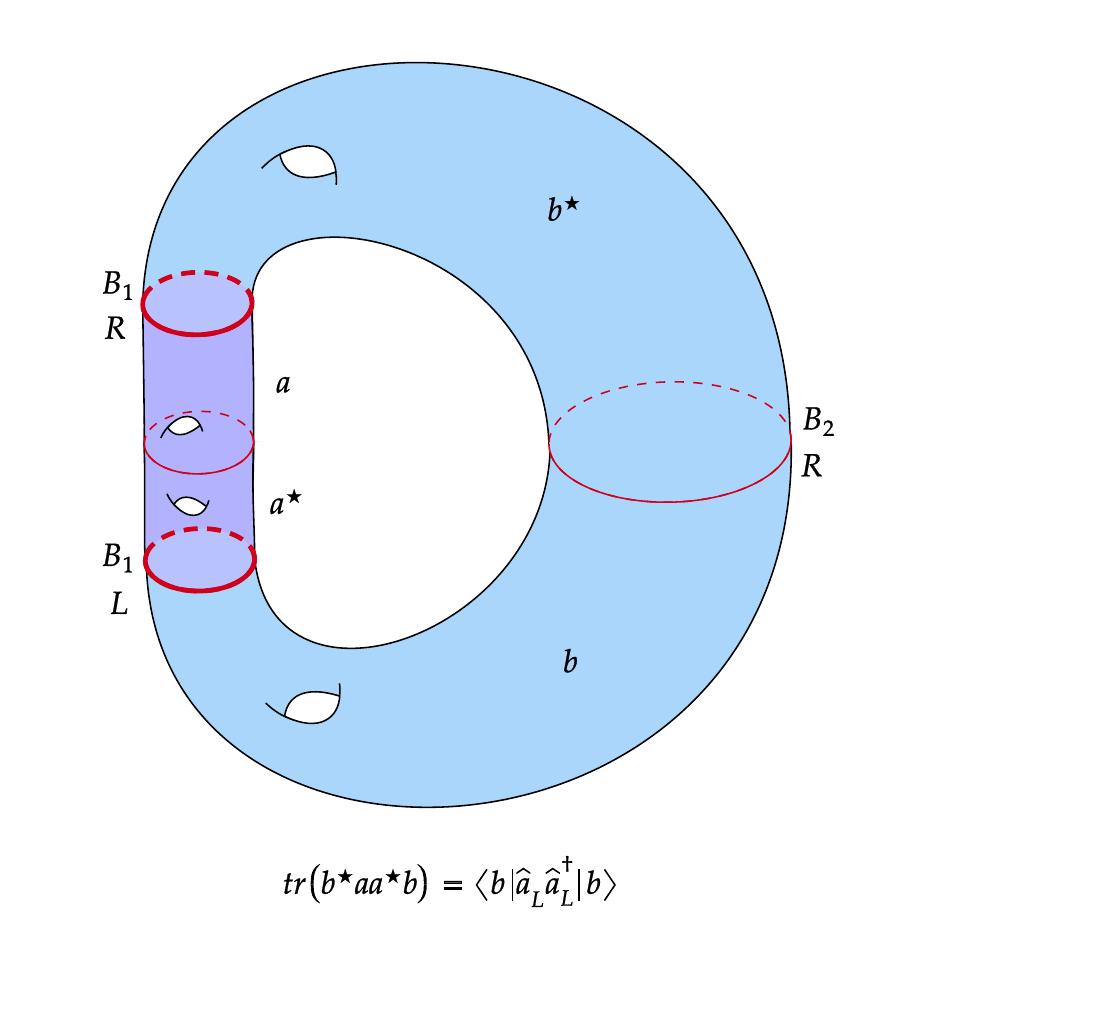}
\end{minipage}
\caption{For surfaces $a \in Y^d_{B_L\sqcup B_L}$ and $b \in Y^d_{B_L\sqcup B_R}$, the traces of   $a^\star a$ and  $b^\star b$  coincide with  $\langle a|a\rangle$ and $\langle b|b\rangle$ as shown in the upper panel. The left hand side $tr(b^\star a a^\star b)$ of the trace inequality \eqref{eq:trIn} computes the inner product $\langle b|\hat a_L\hat a_L^\dagger|b \rangle$, as shown in the lower panel.}\label{fig:ineq}
\end{figure}

The left and right representations established on the sector $\mathcal H_{B_1\sqcup B_2}$ are denoted by $\hat A^{B_1\sqcup B_2}_L$ and $\hat A^{B_1\sqcup B_2}_R$.  The adjoint operation defined by $\mathcal H_{B_1\sqcup B_2}$ then satisfies  $(\hat a_L)^\dagger=\widehat{( a^\star)}_L$, and similarly for the right algebra. Finally, it is clear that operators in $\hat A^{B_1\sqcup B_2}_L$ commute with those in $\hat A^{B_1\sqcup B_2}_R$.

The remaining analysis of \cite{Colafranceschi:2023urj} was restricted to so-called diagonal sectors of the form $\mathcal H_{B_1\sqcup B_1}$; i.e., with $B_1$ diffeomorphic to $B_2$.  In that context, left and right von Neumann algebras $\mathcal A_L^{B_1}$ and $\mathcal A_R^{B_1}$ were constructed by taking completions of the representations $\hat A_L^{{B_1}\sqcup {B_1}}$ and $\hat A_R^{{B_1}\sqcup {B_1}}$ in the weak operator topology. The adjoint operation again acts as an involution on these von Neumann algebras.

A key point was then that the above trace $tr$ can be extended to positive elements of the von Neumann algebras $\mathcal A_L^{B_1}$ and $\mathcal A_R^{B_1}$ by taking it to be defined by 
\begin{equation}
\label{eq:trvN}
    tr(a) : = \lim_{\beta \downarrow 0} \langle C_\beta |a | C_\beta \rangle,
\end{equation}
where $C_\beta$ is an appropriate cylinder\footnote{Ref \cite{Colafranceschi:2023urj} instead used so-called normalized cylinders $\tilde C_\beta$, but this is unnecessary since the appendix of \cite{Colafranceschi:2023urj} shows that the appropriate norm approaches $1$ as $\beta \rightarrow 0$.} of length $\beta$.  
The result is faithful, normal, and semifinite.  In addition, it continues to satisfy the trace inequality \eqref{eq:trIn}, as well as related inequalities derived using larger numbers of boundaries.  Together, these results require $tr(P)$ to be a non-negative integer for any projection $P$ with finite trace.  Each of the algebras $\mathcal A^{B_1}_L, \mathcal A^{B_1}_R$ must thus be a direct sum of type I factors.  Furthermore, the Hilbert space $\mathcal H_{{B_1}\sqcup {B_1}}$ is a direct sum of Hilbert spaces $\mathcal H_{{B_1}\sqcup {B_1}}^\mu$ that factorize as  $\mathcal H_{{B_1}\sqcup {B_1}}^\mu =\mathcal H_{{B_1}\sqcup {B_1}, L}^\mu \otimes \mathcal H_{{B_1}\sqcup {B_1}, R}^\mu$ with $\mathcal H_{{B_1}\sqcup {B_1}, L}^\mu$ canonically isomorphic to $\mathcal H_{{B_1}\sqcup {B_1}, R}^\mu$ up to an overall phase.  Finally, it was shown that such Hilbert-space factors $\mathcal H_{{B_1}\sqcup {B_1}, L}^\mu$, $H_{{B_1}\sqcup {B_1},R}^\mu$ could be supplemented with finite dimensional Hilbert spaces ${\mathbb C}^{n_\mu}$ such that the trace $\widetilde{Tr}$ defined by summing diagonal matrix elements of operators on $\undertilde {\mathcal H}_{B_1} := \bigoplus_\mu \mathcal H_{{B_1}\sqcup {B_1}, L}^\mu\otimes {\mathbb C}^{n_\mu}= \bigoplus_\mu \mathcal H_{{B_1}\sqcup {B_1},R}^\mu\otimes {\mathbb C}^{n_\mu}$ coincides on $\mathcal A_L^{B_1}$ and $\mathcal A_R^{B_1}$ with the trace $tr$ defined above. Since $\mathcal H_{B_1 \sqcup B_1} \subset \undertilde {\mathcal H}_{B_1} \otimes \undertilde {\mathcal H}_{B_1},$ this provided a Hilbert space interpretation of the  entropy described by the gravitational replica trick.  And  by the argument of \cite{Lewkowycz:2013nqa}, this entropy is well approximated by the Ryu-Takayanagi entropy \cite{Ryu:2006bv,Ryu:2006ef} when the bulk theory admits an appropriate semiclassical limit.

\section{Off-diagonal representations from the diagonal representation}
\label{sec:alg}

The main goal of this paper is to generalize the above results to off-diagonal sectors ${\cal H}_{B_L \sqcup B_R}$ with   $B_L\ne B_R$.  We perform the first steps of that analysis in this section, focusing on the surface algebras  $A^{B_L}_L$ and $A^{B_R}_R$ and  their
representations $\hat A_L^{B_L\sqcup B_R}$ and $\hat A_R^{B_L\sqcup B_R}$  on ${\cal H}_{B_L \sqcup B_R}$.  In particular, we will show that any off-diagonal representation $\hat A_L^{B_L\sqcup B_R}$ can be identified with a quotient of $\hat A_L^{B_L\sqcup B_L}$.

The construction of these objects with $B_L \neq B_R$ was already given in \cite{Colafranceschi:2023urj} and was reviewed in section \ref{sec:rev}.  We may thus  proceed rapidly.
The surface algebras 
 $A^{B_1}_L$ and $A^{B_2}_R$ were in fact defined in section \ref{sec:rev} using only properties that are intrinsic to the spaces of surfaces $Y^d_{B_1 \sqcup B_1}, Y^d_{B_2 \sqcup B_2}$, without mention of any Hilbert space sector.  We thus need only set $B_1 = B_L$ and $B_2 = B_R$ to obtain surface algebras $A^{B_L}_L$ and $A^{B_R}_R$ which are identical to those used in the diagonal context. 
 
We will show below that the representation of $A^{B_L}_L$ on any sector ${\cal H}_{B_L \sqcup B_R}$ is always a quotient of the representation on the diagonal sector ${\cal H}_{B_L \sqcup B_L}$, and similarly for the right surface algebra.  This statement is equivalent to the claim that, if $n$ lies in the diagonal null space ${\mathfrak{N}}_L^{B_L \sqcup B_L}$ of elements of $A^{B_L}_L$  that annihilate all states in the diagonal sector
${\cal H}_{B_L \sqcup B_L}$, then $n$ must also annihilate all states in any non-diagonal sector ${\cal H}_{B_L \sqcup B_R}$. 

To streamline our notation for boundaries,  we now introduce the shorthand $LR=B_L\sqcup B_R$,  $LL=B_L\sqcup B_L$, and  $RR=B_R\sqcup B_R$. 
We will in particular write  $\hat A^{LR}_L = \hat A_L^{B_L\sqcup B_R}$ and $\hat A^{LR}_R= \hat A_R^{B_L\sqcup B_R}$.
As in the diagonal case, the adjoint operation defined by $\mathcal H_{B_L\sqcup B_R}$ satisfies  $(\hat a_L)^\dagger=\widehat{( a^\star)}_L$, and similarly for the right algebra. It is also again clear that operators in $\hat A^{LR}_L$ commute with those in $\hat A^{LR}_R$.

The representation $\hat A_L^{LR}$ is a faithful representation of the quotient algebra $A_L^{B_L}/{\mathfrak{N}}_L^{LR}$, where ${\mathfrak{N}}_L^{LR} = {\mathfrak{N}}_L^{B_L\sqcup B_R}$ is the null space consisting of elements whose operator representations annihilate the entire sector $\mathcal H_{B_L\sqcup B_R}$.  We now  make the following claim:
\begin{theorem}
\label{thm:null}
For any $B_R$, the null space ${\mathfrak{N}}_L^{B_L\sqcup B_R}$ contains the
diagonal null space  ${\mathfrak{N}}_L^{B_L\sqcup B_L}$:
    \begin{equation}
    \label{eq:null}
        {\mathfrak{N}}_L^{B_L\sqcup B_R}\supseteq {\mathfrak{N}}_L^{B_L\sqcup B_L}.
    \end{equation}
\end{theorem}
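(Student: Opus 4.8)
The plan is to leverage the general trace inequality \eqref{eq:trIn} --- in the three-boundary form recorded in footnote~\ref{foot:genTrIn} --- to bound the action of a diagonal null element on the off-diagonal sector. Let $n\in{\mathfrak{N}}_L^{B_L\sqcup B_L}$, so that $\hat n_L$ annihilates all of $\mathcal H_{B_L\sqcup B_L}$; the goal is to show that $\hat n_L$ also annihilates all of $\mathcal H_{B_L\sqcup B_R}$. Since the operators of the representation $\hat A_L^{LR}$ are bounded and $H_{B_L\sqcup B_R}$ maps to a dense subspace of $\mathcal H_{B_L\sqcup B_R}$, it suffices to prove $\hat n_L^\dagger\ket{b}=0$ for every $b\in H_{B_L\sqcup B_R}$; since $\hat n_L^\dagger=\widehat{n^\star}_L$, this is the same as $\widehat{n^\star}_L\ket{b}=0$.

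The argument rests on two ingredients. First, $n\in{\mathfrak{N}}_L^{B_L\sqcup B_L}$ forces $tr(n^\star n)=0$: by Figure~\ref{fig:ineq} one has $tr(n^\star n)=\braket{n|n}$ evaluated in $\mathcal H_{B_L\sqcup B_L}$, while $\ket{n}=\lim_{\beta\downarrow0}\hat n_L\ket{C_\beta}$ in $\mathcal H_{B_L\sqcup B_L}$ by the cylinder-continuity results of \cite{Colafranceschi:2023urj} that underlie \eqref{eq:trvN}, so $\ket{n}=0$. Second, for $b\in H_{B_L\sqcup B_R}$ there is a geometric identity: sewing $b$ to the reflected manifold $b^\star$ along the shared $B_R$ boundary and inserting $n,n^\star$ along the resulting $B_L$ collars produces the positive element $b^\star n n^\star b=(n^\star b)^\star(n^\star b)\in A_L^{B_R}$, and sewing together its two $B_R$ boundaries evaluates $\zeta$ on the very same closed source manifold that computes the $\mathcal H_{B_L\sqcup B_R}$ matrix element of $\widehat{n\,n^\star}_L$, i.e.
\begin{equation}
tr\bigl(b^\star\,n\,n^\star\,b\bigr)\;=\;\braket{b|\widehat{n\,n^\star}_L|b}_{\mathcal H_{B_L\sqcup B_R}}\;=\;\bigl\|\hat n_L^\dagger\ket{b}\bigr\|^2 .
\end{equation}
This is the mirror image --- with $B_L$ and $B_R$ interchanged --- of the diagrammatic relation displayed in the lower panel of Figure~\ref{fig:ineq}.

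With these in hand the conclusion is immediate. Applying the trace inequality of footnote~\ref{foot:genTrIn} with $B_1=B_2=B_L$, $B_3=B_R$, and $a=n$ gives
\begin{equation}
\bigl\|\hat n_L^\dagger\ket{b}\bigr\|^2=tr\bigl(b^\star\,n\,n^\star\,b\bigr)\le tr(n^\star n)\,tr(b^\star b)=0 ,
\end{equation}
using $tr(n^\star n)=0$ from the first ingredient and the finiteness of $tr(b^\star b)=\braket{b|b}$. Hence $\hat n_L^\dagger\ket{b}=0$ for all $b\in H_{B_L\sqcup B_R}$, so $\hat n_L^\dagger=0$ and therefore $\hat n_L=0$ on $\mathcal H_{B_L\sqcup B_R}$; that is, $n\in{\mathfrak{N}}_L^{B_L\sqcup B_R}$, which is \eqref{eq:null}. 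The same computation with left and right interchanged yields the companion statement ${\mathfrak{N}}_R^{B_L\sqcup B_R}\supseteq{\mathfrak{N}}_R^{B_R\sqcup B_R}$.

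The one step that requires genuine care is the geometric identity above: one must track carefully which boundary components get identified when $b^\star$ is sewn to $b$ along $B_R$, the real pitfall being a left/right (i.e.\ transpose) mismatch in how the reflected collar of $b^\star$ is attached, so that the result indeed reproduces the closed manifold computing $\braket{b|\widehat{n\,n^\star}_L|b}$ on $\mathcal H_{B_L\sqcup B_R}$. This bookkeeping mirrors the identification already implicit in Figure~\ref{fig:ineq}, and once it is in place everything else is a one-line consequence of the trace inequality, exactly paralleling the boundedness argument reviewed in section~\ref{sec:rev}. (An essentially equivalent route instead writes $\braket{b'|\hat n_L|b}_{\mathcal H_{B_L\sqcup B_R}}=tr\bigl(n\,b\,(b')^\star\bigr)$ with $b(b')^\star\in A_L^{B_L}$, which vanishes because $\widehat{n\,b(b')^\star}_L=\hat n_L\,\widehat{b(b')^\star}_L=0$ on $\mathcal H_{B_L\sqcup B_L}$ together with the agreement of $tr=\zeta\circ M$ with the limit \eqref{eq:trvN}; the version above is marginally cleaner since it never needs to re-identify that limit.)
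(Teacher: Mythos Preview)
Your proof is correct, but it takes a genuinely different route from the paper's. The paper argues constructively: for a rimmed surface $\psi\in Y^d_{B_L\sqcup B_R}$, it peels off a cylinder from the left rim to write $\psi=C_\beta\psi^\beta$ and then observes
\[
\hat a_L\ket\psi=\ket{aC_\beta\psi^\beta}=\hat\psi^\beta_R\,\hat a_L^{LL}\ket{C_\beta},
\]
so that the off-diagonal action literally factors through the diagonal one via the operator $\hat\psi^\beta_R:\mathcal H_{LL}\to\mathcal H_{LR}$. If $\hat a_L^{LL}=0$, the right-hand side vanishes immediately. You instead bound the norm of $\hat n_L^\dagger\ket b$ directly by the three-boundary trace inequality of footnote~\ref{foot:genTrIn}, together with the observation that $tr(n^\star n)=\braket{n|n}=0$ (which you correctly extract from $\hat n_L\ket{C_\beta}=0$ and the continuity axiom; note that only the dense-subspace version of the cylinder limit is needed here, not the full Claim~\ref{thm:Crep}, so there is no circularity). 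Your argument is more economical for this one claim and reuses exactly the machinery that proved boundedness in \eqref{eq:bounded}. The paper's route, by contrast, is doing double duty: the operators $\hat\psi^\beta_R$ it introduces here are the seed of the map $\Psi_R$ that drives the entire analysis of sections~\ref{sec:hom}--\ref{sec:proj}, so the extra overhead is amortized over the rest of the paper. Your alternative route via $\braket{b'|\hat n_L|b}=tr(n\,b(b')^\star)$ is also correct, and is in spirit closer to the paper's approach since it too passes through an element of $A_L^{B_L}$.
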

Here, and throughout the rest of this work, we could of course also discuss the corresponding properties of algebras that act on the right boundary (which here would give ${\mathfrak{N}}_R^{B_L\sqcup B_R}\supseteq {\mathfrak{N}}_R^{B_R\sqcup B_R}$).  For simplicity, we will generally refrain from doing so explicitly, though in all cases such properties clearly follow from analogous arguments.

\begin{figure}[htbp]
\hspace{1cm}\includegraphics[width=\textwidth]{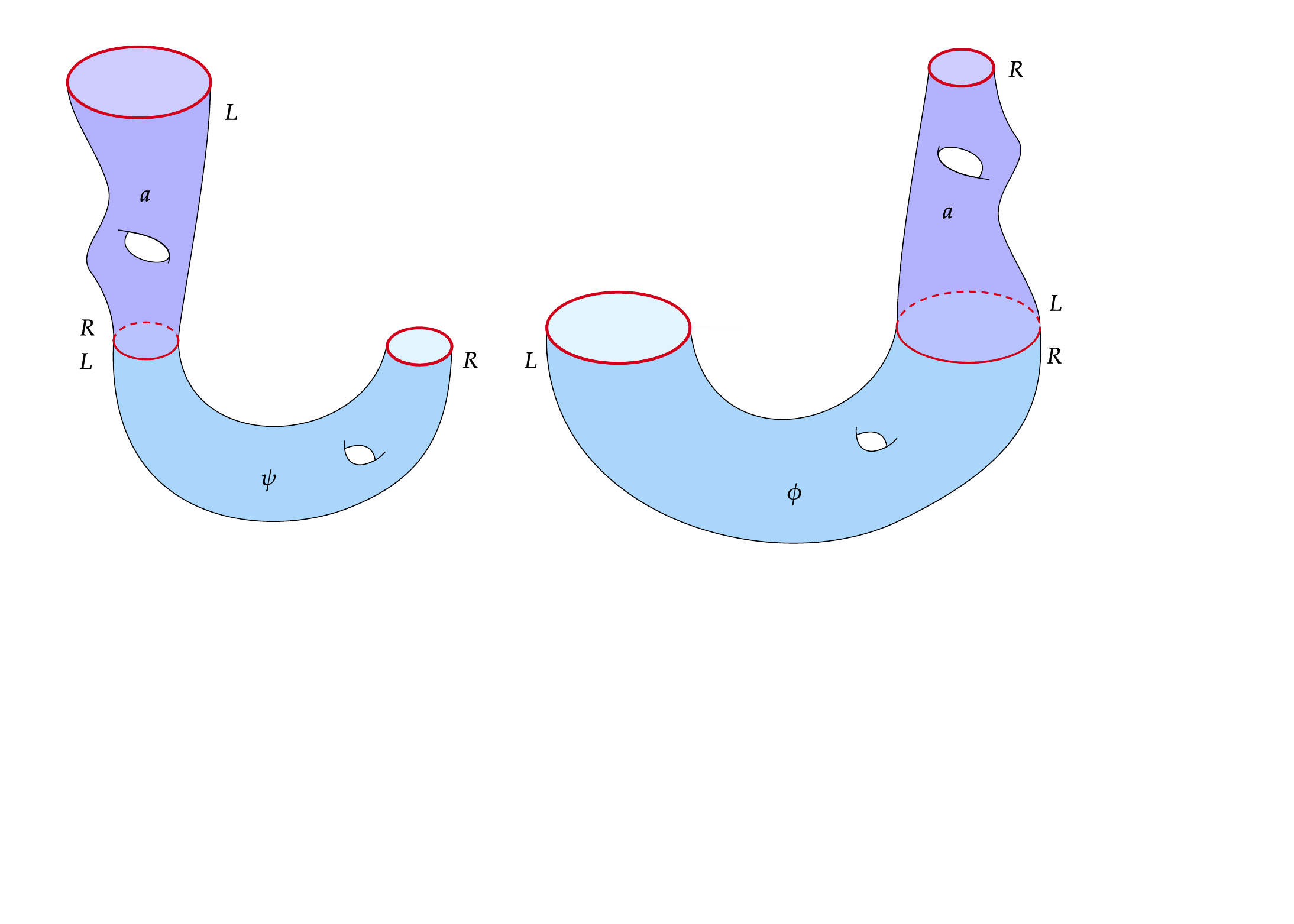}
\vspace{-5cm}
\caption{For surfaces $a\in Y^d_{LR}$, $\psi\in Y^d_{LL}$, $\phi\in Y^d_{RR}$, the left panel shows the action of $\hat a_L$ on $\ket \psi\in H_{RR}$ while the right panel shows the action of $\hat a_R$ on $\ket \phi\in H_{LL}$ as defined in equation~\eqref{eq:glue}.  Both surfaces obtained belong to $Y^d_{LR}$.\label{fig:glue}} 
\end{figure}

It is useful to introduce an additional set of bounded operators before we prove claim \ref{thm:null}. For any surface $a\in Y^d_{LR}$, we can define an operator $\hat a_R: H_{LL}\to H_{LR}$ via the usual gluing of surfaces; see figure~\ref{fig:glue}.  This is just the analogue of the action of the surface algebra $A^{B_L}_R$ on $H_{LL}$, but where the two boundaries of $a$ are now allowed to be  different (so that the set of such objects no longer forms a natural algebra, and so that $\hat a_R |\phi\rangle$ lives in a different Hilbert space sector than the original state $|\phi\rangle$).   As usual, the inequality \eqref{eq:bounded} implies that 
$\hat a_R$ maps zero-norm states to zero-norm states and so yields a well-defined operator on the image of $H_{LL}$ in $\mathcal{H}_{LL}$. The  action of the resulting operator is naturally written in the form
\begin{equation}
\label{eq:glue}
    \hat a_R\ket \phi =\ket {\phi a}\quad \forall \ket \phi \in H_{LL},
\end{equation}
where $\phi a$ denotes the surface formed by gluing the right boundary of $\phi \in Y^d_{LL}$ to the left boundary of $a \in Y^d_{LR}$.
The subscript $R$ on $ \hat a_R$ denotes the fact that this operator acts on the right boundary of $\phi$.  We  also define the analogous left operator $\hat a_L: H_{RR}\to H_{LR}$ for which
\begin{equation}
\label{eq:Lglue}
    \hat a_L\ket \psi =\ket {a \psi }\quad \forall \ket \psi \in H_{RR},
\end{equation}

Now, as described in section \ref{sec:rev}, surfaces $a$ define states $\ket a$ in the dense subspace $\mathcal D_{LR} \subset \mathcal{H}_{LR}$.  It will be useful below to think about defining the operator $\hat a_R$ directly from such $\ket a$.  To do so, we simply choose an arbitrary representative element $a\in H_{LR}$ of the equivalence class defined by $\ket a$.  This $a$ is a finite linear combination of surfaces in $Y^d_{LR}$ to which we can extend our definition of $\hat a_R$ by linearity.  We then need only observe that if two representatives $a_1, a_2$ differ by some zero-norm surface $N \in {\cal N}_{LR}$, then for any $\phi$ the norm of $\ket {\phi N}$ is given by
\begin{equation}
\langle \phi N | \phi N \rangle = \langle N| \hat \phi _L^\dagger \hat \phi _L|N\rangle \le tr(\phi ^\star \phi ) \langle N| N \rangle =0,
\end{equation}
where the inequality as usual follows from \eqref{eq:bounded}.
Thus the operators $\hat a_{1,R}$ and $\hat a_{2,R}$ are identical so that $\hat a_R$ is fully defined by the choice of the state $\ket a \in \mathcal D_{LR} \subset \mathcal{H}_{LR}$.  

For later use, we denote the associated map on $\mathcal D_{LR}$ by $\Psi_R$, and we use $\Psi_L$ for the corresponding left operator.  In particular, we will use the notation
\begin{equation}
\Psi_R(|a\rangle) := \hat a_R, \ \ \ \Psi_L(|a\rangle) := \hat a_L,
\end{equation}
where we remind the reader that $\hat a_R$ maps $H_{LL}$ to $H_{LR}$ while $\hat a_L$ maps $H_{RR}$ to $H_{LR}$.

To discuss the adjoints of  $\hat a_R$ and  $\hat a_L$, we first extend the definition of the $\star$ operation  to off-diagonal surfaces:
\begin{definition}
\label{def:inv}
    For any $a\in Y^d_{B_1\sqcup B_2}$, we define $a^\star \in Y^d_{B_2\sqcup B_1}$ to be the  source manifold-with-boundary obtained from $a$ by complex-conjugating all sources, relabeling the left boundary $B_1$ of $a$ as the right boundary of $a^\star$, and similarly  relabeling the right boundary $B_2$ of $a$ as the left boundary of $a^\star$.  As shown in figure~\ref{fig:inn}, this definition then satisfies the  relation
    \begin{equation}
    \label{eq:trprod}
        \braket{a|b}=\zeta(M(a^\star b))=tr(a^\star b)=tr(ba^\star)\quad \forall a,b\in H_{B_1\sqcup B_2},
    \end{equation}
    where the first trace acts on  $ a^\star b \in \underline Y^d_{B_2\sqcup B_2}$ and the second trace acts on  $ba^\star \in\underline Y^d_{B_1\sqcup B_1}$.
\end{definition}

\begin{figure}[htbp]
\centering
\begin{minipage}[b]{0.5\textwidth}
  \centering
  \includegraphics[clip, trim=0.5cm 15cm 0.5cm 10cm, width=\linewidth]{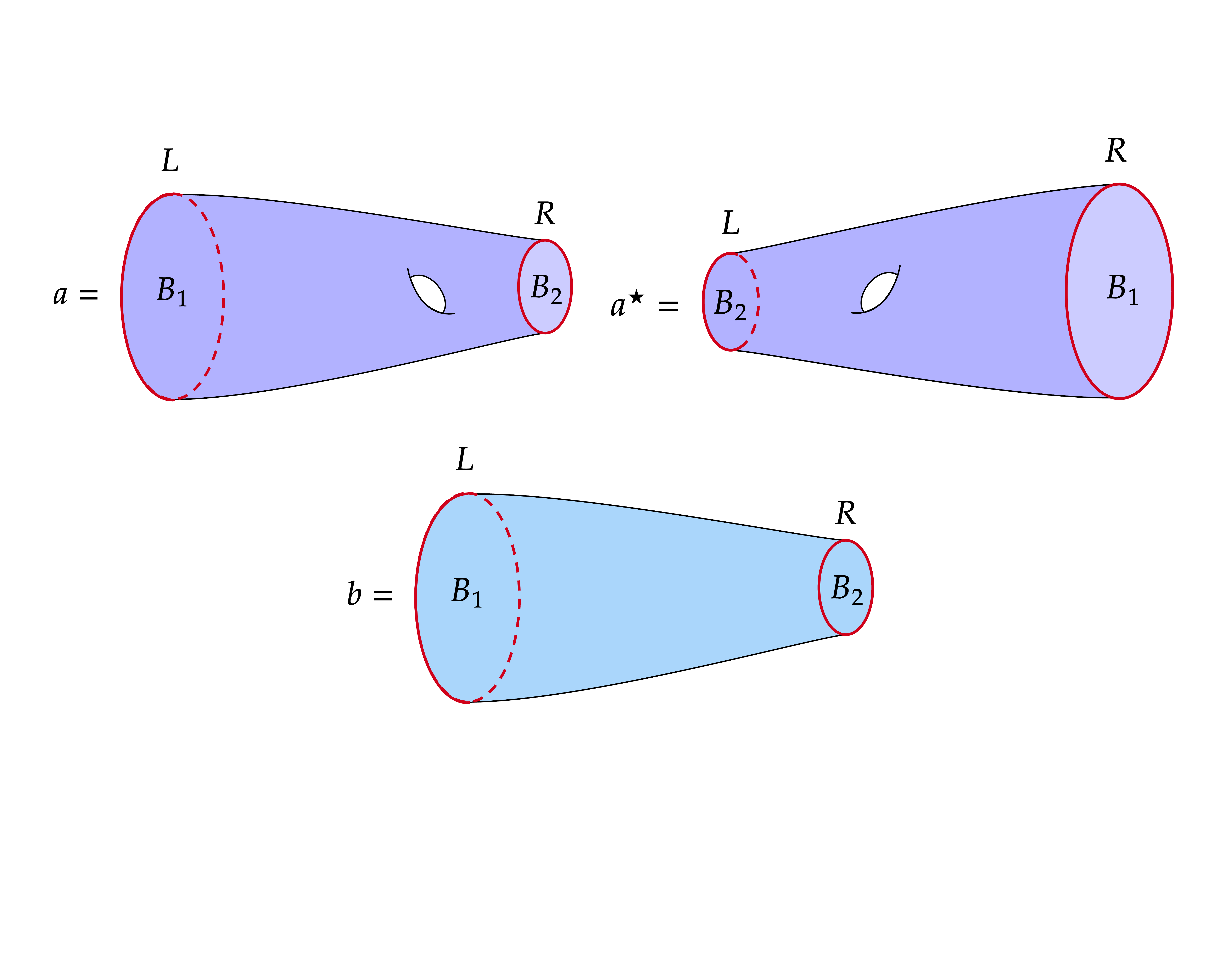}
\end{minipage}
\begin{minipage}[b]{0.4\textwidth}
  \centering
  \includegraphics[width=\linewidth]{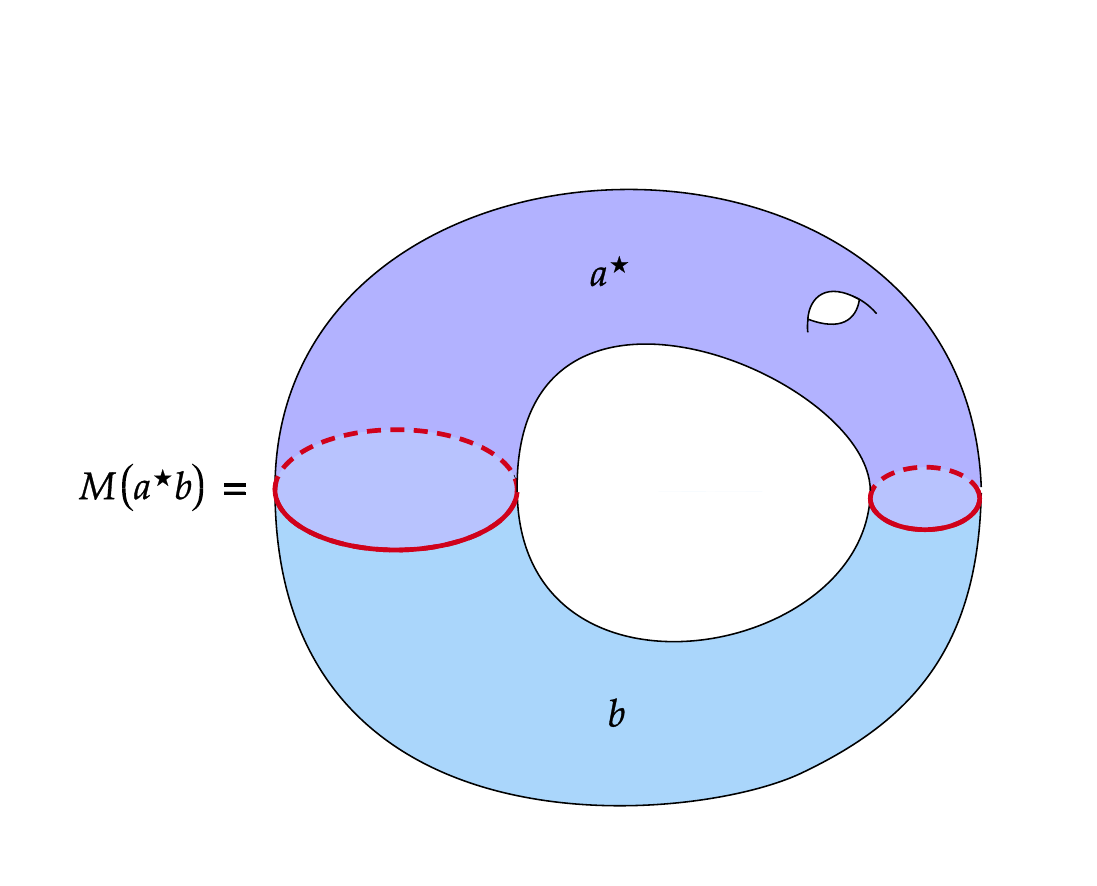}
\end{minipage}
\caption{
 For surfaces $a,b\in Y^d_{B_1\sqcup B_2}$, we can construct $a^\star$ and compute $tr(a^\star b)$ and $tr(ba^\star)$.  Both are given by evaluating $\zeta$ on the closed surface shown at right, as is $\langle a |b \rangle$  This observation yields \eqref{eq:trprod}. \label{fig:inn}}
\end{figure}

The adjoint $(\hat a_R)^\dagger:H_{LR}\to H_{LL}$ of $(\hat a_R)$ is associated with $a^\star$ in the sense that it satisfies
\begin{equation}
\label{eq:adjpsiR}
    (\hat a_R)^\dagger\ket \phi =\ket {\phi a^\star}\quad \forall \ket \phi \in H_{LR}.
\end{equation}
The result \eqref{eq:adjpsiR} can be verified by writing
\begin{equation}
\label{eq:adj}
\langle \phi |\psi a \rangle = tr(\phi ^\star \psi  a) = tr(a \phi ^\star \psi  ) = tr\left((\phi  a^\star)^\star \psi  \right) = \langle \phi  a^\star | \psi  \rangle.\end{equation}

As usual, we can use the trace inequality \eqref{eq:trIn} to show that $\hat a_R$ in fact defines a bounded operator that maps the Hilbert space $\mathcal{H}_{LL}$ to the Hilbert space $\mathcal{H}_{LR}$.    In this case we consider again $|\phi \rangle \in H_{LL}$.  After setting $b=\phi$, the right analogue of 
\eqref{eq:bounded} yields
\begin{equation}
\label{eq:bound}
\langle \phi  | \hat a_R^\dagger \hat a_R |\phi  \rangle \le \langle \phi | \phi \rangle \langle a | a \rangle.
\end{equation}
Thus $\hat a_R$ annihilates null states in $H_{LL}$ and defines a bounded operator on the dense subspace ${\cal D}_{LL} = H_{LL}/{{\mathcal{N}}}_{LL} \subset \mathcal{H}_{LL}$.  A unique bounded extension to $\mathcal{H}_{LL}$ then follows.

As a result, the operation $\Psi_R: |a \rangle \mapsto \hat a_R$ defines a linear map from $\mathcal D_{LR}\subset \mathcal H_{LR}$ to the space $\mathcal B(\mathcal H_{LL},\mathcal H_{LR})$ of bounded operators from $\mathcal H_{LL}$ to $\mathcal H_{LR}$.   We will extend this map to the entire Hilbert space in section~\ref{sec:psi}. In the diagonal context $B_L=B_R$, the map $\Psi_R: H_{LL}\to \hat A_R^{LL}$ coincides with the representation of the right surface algebra under the natural identification of $H_{LL}$  with $A_R^{B_L}$.

Returning to  claim  \ref{thm:null}, we need to show that any element $a$ of the surface algebra $A_L^{B_L}$ that is represented by the zero operator on the diagonal Hilbert space $\mathcal H_{LL}$ is also represented by zero on any $\mathcal H_{LR}$.  And since our operators are bounded, it in fact suffices to show that the representation $\hat a_L\in \hat A^{LR}_L$ of $a$ on $\mathcal H_{LR}$ annihilates every state $|\psi\rangle$ in the dense subspace $\mathcal D_{LR}$.

\begin{figure}[htbp]
\centering
\includegraphics[clip, trim=0.5cm 1.5cm 0.5cm 0cm, width=0.8\textwidth]{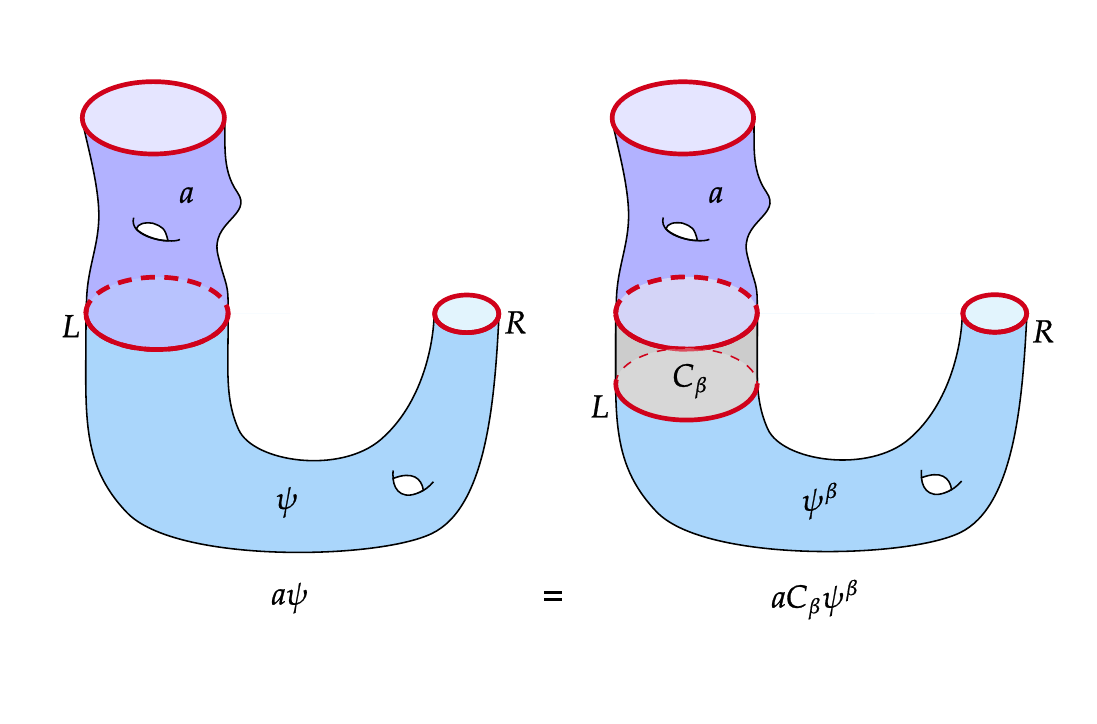}
\caption{  For rimmed surfaces $\psi\in Y^d_{LR}$ and $a\in Y^d_{LL}$, we define the action of $\hat a_L$ on $\ket \psi$ as $\ket {a\psi}$. As shown in the right panel, we can also separate out a cylinder $C_\beta$ from the left rim of $\psi$ and rewrite the surface $a\psi$ as $aC_\beta \psi^\beta$, where $\psi^\beta\in Y^d_{LR}$. \label{fig:phi}}
\end{figure}

\begin{proof}
Let us consider a rimmed surface $\psi \in Y^d_{B_L \sqcup B_R}$ and the corresponding state $|\psi\rangle \in \mathcal D_{LR}$.  The rim at the left boundary requires there to be a neighborhood of the left boundary whose closure coincides with some cylinder $C_\beta$ for some $\beta >0$. As shown in figure \ref{fig:phi}  we may thus write the surface $a\psi$ as $aC_\beta\psi^\beta$ for some $\psi^\beta \in Y^d_{B_L \sqcup B_R}$.  This observation yields the relations
\begin{equation}
\label{eq:null3}
    \hat a_L\ket \psi=\hat a_L\ket {C_\beta\psi^\beta}=\ket {aC_\beta \psi^\beta}=\hat \psi^\beta_R\ket {a C_\beta}=\hat\psi_R^\beta \hat a_L^{LL}\ket {C_\beta}\quad \forall \ket\psi\in \mathcal D_{LR},
\end{equation}
where $\hat a_L^{LL}$ is the representation of $a$ on the diagonal sector $\mathcal H_{LL}$. Thus $\hat a_L^{LL}=0$ requires $\hat a_L\ket \psi=0$.  Since $\mathcal D_{LR}$ is the linear span of $|\psi\rangle$ with $\psi \in Y^d_{B_L \sqcup B_R}$, it follows that $\hat a_L$ annihilates $\mathcal D_{LR}$. This establishes ${\mathfrak{N}}_L^{LL}\subseteq {\mathfrak{N}}_L^{LR}$ as claimed above.
\end{proof}

As a result, every left representation $\hat A^{LR}_L$ of the surface algebra $A_L^{B_L}$ is also a representation of the diagonal representation $\hat A^{LL}_L$.   This is equivalent to the statement that there is a surjective $*$-homomorphism $\hat \Phi_{LR}$ from $\hat A^{LL}_L$ to any $\hat A^{LR}_L$.  It follows that $\hat A^{LR}_L$ may be identified with the quotient of $\hat A^{LL}_L$ by an appropriate kernel, and thus that the diagonal representation contains all information about all representations $\hat A^{LR}_L$.  We will show below that analogous statements continue to hold for the von Neumann algebra completions.

\section{Completions and von Neumann algebras}
\label{sec:hom}

We saw above that, for a given surface algebra $A_L^{B_L}$,  the diagonal representation $\hat A^{LL}_L$ acts as universal `covering algebra' for any representation $\hat A^{LR}_L$ in the sense that there is a $*$-homomorphsim $\hat \Phi_{LR}$ from
$\hat A^{LL}_L$ to $\hat A^{LR}_L$.   Furthermore, as reviewed in section \ref{sec:rev}, completing the diagonal representation $\hat A^{LL}_L$   yields an associated  von Neumann algebra $\mathcal A_L^{B_L}$.  

We will now generalize this construction to define an off-diagonal von Neumann algebra $\mathcal A_L^{LR}=\mathcal A_L^{B_L\sqcup B_R}$ by completing 
$\mathcal A_L^{LR}$ with respect to the weak operator topology on $\mathcal B(\mathcal H_{B_L \sqcup B_R})$.  It is thus natural to ask if the results of section \ref{sec:alg} extend to such completions.  To make the notation more uniform between the diagonal and off-diagonal contexts, we will henceforth use the symbol $\mathcal A_L^{LL} = \mathcal A_L^{B_L \sqcup B_L}$ for the von Neumann algebra $\mathcal A_L^{B_L}$ defined using the weak operator topology on the diagonal Hilbert space $\mathcal H_{B_L \sqcup B_L}.$

The goal of this section is to show that this is indeed the case.  The subtlety, however, is that the weak operator topology is defined by the Hilbert space on which the operators act.  Thus, a priori, two completions might be very different even if we begin with isomorphic representations.  It will turn out, however, that our axioms in fact require there to be a simple relation between $\mathcal A_L^{LR}$ and $\mathcal A_L^{LL}$.

As a result, we will be able to extend the surjective *-homomorphism $\hat \Phi_{LR}: \hat A_L^{LL}\to \hat A_L^{LR}$ to 
a surjective *-homomorphism $\Phi_{LR}: \mathcal A_L^{LL}\to \mathcal A_L^{LR}$ that is continuous with respect to the weak operator topology.   Note that the extended $\Phi_{LR}$ is written without a hat decoration$\hat{\quad}$.  Extending $\hat \Phi_{LR}$ means in particular that, given any $a \in \mathcal A_L^{LL}$, we must construct an appropriate bounded operator $\Phi_{LR}(a)$ on the Hilbert space $\mathcal H_{LR}$.  This first step will be accomplished in section \ref{sec:extend}, which also verifies that the extension defines a $*$-homomorphism.  After an aside to introduce some useful technology in section \ref{sec:psi}, the desired continuity will then be established in section \ref{sec:con}.

\subsection{Extending our homomorphism to the von Neumann algebra $\mathcal A_L^{LL}$}
\label{sec:extend}

For each $a$ in the von Neumann algebra $\mathcal A_L^{LL}$, 
we will first define the desired $\Phi_{LR}(a)$ as an operator on the dense domain $\mathcal D_{LR}$.  Any state $|\psi \rangle \in \mathcal D_{LR}$ is a finite linear combination 
\begin{equation}
    \label{eq:span}
\sum_{i=1}^N c_i |\psi_i \rangle
\end{equation}
for states $|\psi_i\rangle$ defined by rimmed surfaces  $\psi_i$ with given right and left boundaries.  Since each such surface will have a neighborhood of its left boundary that coincides (up to closure) with some cylinder $C_\epsilon$ for $\epsilon >0$, by choosing $\epsilon >0$ sufficiently small we can write $\psi_i = C_\epsilon \psi_i^\epsilon$ for all $i$; see again figure \ref{fig:phi}.  Defining the state $\ket {\psi^\epsilon}  =  \sum_{i=1}^N c_i |\psi_i^\epsilon \rangle$, we may use the associated operator $\hat \psi^\epsilon_R := \Psi_R(|\psi^\epsilon \rangle)$ from $\mathcal H_{LL}$ to $\mathcal H_{LR}$ given by \eqref{eq:glue}.  We thus make the following definition for any $a\in \mathcal A_L^{LL}:$

\begin{equation}
\label{eq:dense}
    \forall \ket \psi \in \mathcal D_{LR}\quad {\rm we \ define} \quad \Phi_{LR}(a) \ket {\psi} := \hat \psi_R^\epsilon a |C_\epsilon\rangle\quad {\rm for \ small \ enough}\ \epsilon.
\end{equation}
It is manifest that $\Phi_{LR}(a) = \hat \Phi_{LR}(a)$ for $a\in \hat A^{LR}_L$, since then $\hat \Phi_{LR}(a)|\psi\rangle =|a\psi\rangle$ and \eqref{eq:dense} reduces to \eqref{eq:glue}. Furthermore, for all $a$ in the von Neumann algebra $\mathcal A_L^{LL}$, we will show below that the definition \eqref{eq:dense} is independent of $\epsilon$ for small enough $\epsilon.$
 It is then also manifest that  $\Phi_{LR}(ab) = \Phi_{LR}(a)\Phi_{LR}(b)$ and $\Phi_{LR}(\alpha a+ \beta b) = \alpha  \Phi_{LR}(a) + \beta \Phi_{LR}(b)$ for $\alpha, \beta \in \mathbb{C}$, so that $\Phi_{LR}$ is a homomorphism.

We now establish several claims regarding the definition \eqref{eq:dense}.

\begin{theorem}
\label{thm:indep}
    The definition \eqref{eq:dense} is independent of $\epsilon$ so long as $C_\epsilon$ is a cylinder common to all rimmed surfaces $\psi_i$ appearing in \eqref{eq:span}. 
\end{theorem}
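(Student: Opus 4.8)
The plan is to show that the right-hand side of \eqref{eq:dense} does not change if we replace $\epsilon$ by any smaller value $\epsilon'$, since any two admissible cylinder lengths can be compared via their common smaller value. So write $\epsilon = \epsilon' + \delta$ with $\delta > 0$, and note that the cylinder decomposes as $C_\epsilon = C_{\epsilon'} C_\delta$ (gluing a length-$\delta$ cylinder onto a length-$\epsilon'$ cylinder). Correspondingly each rimmed surface factors as $\psi_i = C_\epsilon \psi_i^\epsilon = C_{\epsilon'} C_\delta \psi_i^\epsilon = C_{\epsilon'}\psi_i^{\epsilon'}$, so that $\psi_i^{\epsilon'} = C_\delta \psi_i^\epsilon$ and hence $\ket{\psi^{\epsilon'}} = \ket{C_\delta \psi^\epsilon}$. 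The key observation is then that the operators built from these states are related by $\hat\psi_R^{\epsilon'} = \Psi_R(\ket{C_\delta \psi^\epsilon}) = \hat\psi_R^\epsilon \,\hat{(C_\delta)}_L^{LL}$, where $\hat{(C_\delta)}_L^{LL}$ is the diagonal representation of the cylinder $C_\delta \in A_L^{B_L}$ on $\mathcal H_{LL}$ — this follows directly from \eqref{eq:glue} applied to the surface $C_\delta \psi^\epsilon$, since gluing $\psi^\epsilon$'s right boundary onto $C_\delta$ and then onto an element of $H_{LL}$ is the same as gluing $C_\delta$ onto that element first. (One should check this at the level of surfaces in $Y^d$ and then extend by linearity and boundedness; the bounded-operator claim for $\Psi_R$ was already established via \eqref{eq:bound}.)

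Given this, the $\epsilon'$-version of the right-hand side of \eqref{eq:dense} is
\begin{equation}
\hat\psi_R^{\epsilon'} a \ket{C_{\epsilon'}} = \hat\psi_R^\epsilon\, \hat{(C_\delta)}_L^{LL}\, a\, \ket{C_{\epsilon'}}.
\end{equation}
Now I would use that $a \in \mathcal A_L^{LL}$ commutes with $\hat{(C_\delta)}_L^{LL}$? No — rather, the point is that $\hat{(C_\delta)}_L^{LL}$ lies in $\mathcal A_L^{LL}$ itself and $a$ is an abstract von Neumann algebra element, so $\hat{(C_\delta)}_L^{LL} a = a\, \hat{(C_\delta)}_L^{LL}$ fails in general; instead I would commute $a$ past $\hat{(C_\delta)}_L^{LL}$ only if one of them is central, which is not assumed. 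The cleaner route: since $\hat{(C_\delta)}_L^{LL} \in \mathcal A_L^{LL}$ and $a \in \mathcal A_L^{LL}$, the product $a\, \hat{(C_\delta)}_L^{LL}$ is again in $\mathcal A_L^{LL}$, and moreover $\hat{(C_\delta)}_L^{LL}\ket{C_{\epsilon'}} = \ket{C_\delta C_{\epsilon'}} = \ket{C_\epsilon}$ by the definition of the cylinder gluing. Hence
\begin{equation}
\hat\psi_R^\epsilon\, \hat{(C_\delta)}_L^{LL}\, a\, \ket{C_{\epsilon'}},
\end{equation}
and here I want to move $\hat{(C_\delta)}_L^{LL}$ to the right of $a$. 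This is legitimate because $\hat{(C_\delta)}_L^{LL}$ commutes with every element of the right algebra $\mathcal A_R^{LL}$ but that does not put it past $a$; the correct statement is that cylinders $C_\delta$ are central in $\mathcal A_L^{LL}$ — indeed in \cite{Colafranceschi:2023urj} the cylinders generate (a dense subalgebra of) the center, or at least $\lim_{\delta\to 0}\hat{(C_\delta)}_L^{LL}$ approaches the identity. I would therefore not rely on centrality but instead note $\hat{(C_\delta)}_L^{LL} a \ket{C_{\epsilon'}} = a\, \hat{(C_\delta)}_L^{LL}\ket{C_{\epsilon'}} = a\ket{C_\epsilon}$ using that $C_\delta$ as a left-multiplication operator coincides with the limit-of-cylinders construction defining $\mathcal A_L^{LL}$'s unit modifications, OR — most robustly — observe that on the diagonal sector $\hat{(C_\delta)}_L^{LL} = \hat{(C_\delta)}_R^{LL}$ as operators (gluing a cylinder on the left versus the right of a diagonal surface $C_\delta = C_\delta^\star$ gives the same operator up to the identification), so $\hat{(C_\delta)}_L^{LL}$ lies in $\mathcal A_L^{LL}\cap \mathcal A_R^{LL}$, the center, and thus commutes with $a$. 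Then $\hat{(C_\delta)}_L^{LL} a \ket{C_{\epsilon'}} = a \hat{(C_\delta)}_L^{LL}\ket{C_{\epsilon'}} = a\ket{C_\epsilon}$, giving $\hat\psi_R^\epsilon a \ket{C_\epsilon}$, the $\epsilon$-version.

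I expect the main obstacle to be pinning down exactly which property of the cylinders $C_\delta$ is being invoked — whether it is centrality of $\hat{(C_\delta)}_L^{LL}$ in $\mathcal A_L^{LL}$, or merely the operator identity $\hat\psi_R^{\epsilon'} = \hat\psi_R^\epsilon \hat{(C_\delta)}_L^{LL}$ combined with $\hat{(C_\delta)}_L^{LL}\ket{C_{\epsilon'}} = \ket{C_\epsilon}$ and the fact that $a$ (being in the left von Neumann algebra built by completing left-multiplications) commutes with right-multiplications which is what $\hat\psi_R^\epsilon$ and $\hat{(C_\delta)}_R^{LL}$ are — and then using $\hat{(C_\delta)}_L^{LL}=\hat{(C_\delta)}_R^{LL}$ on the diagonal. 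Getting the operator-ordering argument airtight, and making sure the manipulations valid on the dense domain $\mathcal D_{LR}$ survive the passage to general $a \in \mathcal A_L^{LL}$ (which is a weak-operator limit, not a finite surface combination), is the delicate point; but since every operator in sight is bounded and the $\epsilon$-independence is an identity between bounded operators applied to the fixed vector $\ket\psi$, continuity in the weak topology suffices. I would close by remarking that, in particular, $\Phi_{LR}(a)$ is well-defined on all of $\mathcal D_{LR}$ independent of the chosen decomposition $\psi_i = C_\epsilon\psi_i^\epsilon$.
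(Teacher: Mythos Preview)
Your approach is genuinely different from the paper's and, once a sign error is fixed, gives a cleaner proof. The paper argues by approximation: for $a\in\hat A_L^{LL}$ the identity is immediate from surface manipulation, and then one passes to general $a\in\mathcal A_L^{LL}$ by writing $a$ as a strong-operator limit of a net in $\hat A_L^{LL}$ and using boundedness of $\hat\psi_R^\epsilon,\hat\psi_R^\delta$ to push the limit through. Your route is a direct algebraic manipulation that never approximates $a$ at all, which is more elegant.

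The error is in your factorization $\hat\psi_R^{\epsilon'}=\hat\psi_R^\epsilon\,\hat{(C_\delta)}_L^{LL}$. Computing on $|\phi\rangle\in\mathcal D_{LL}$ using \eqref{eq:glue}: the left side gives $|\phi\,C_\delta\psi^\epsilon\rangle$, while the right side gives $\hat\psi_R^\epsilon|C_\delta\phi\rangle=|C_\delta\phi\,\psi^\epsilon\rangle$. These differ; the correct identity is
\[
\hat\psi_R^{\epsilon'}=\hat\psi_R^\epsilon\,\hat{(C_\delta)}_R^{LL},
\]
since right-attaching $C_\delta\psi^\epsilon$ is the same as right-attaching $C_\delta$ and then right-attaching $\psi^\epsilon$. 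With this fix your argument is immediate: $\hat{(C_\delta)}_R^{LL}\in\hat A_R^{LL}\subset\mathcal A_R^{LL}$ commutes with $a\in\mathcal A_L^{LL}$ (this commutation is already established in section~\ref{sec:rev}), so
\[
\hat\psi_R^{\epsilon'}a|C_{\epsilon'}\rangle=\hat\psi_R^\epsilon\,\hat{(C_\delta)}_R^{LL}\,a\,|C_{\epsilon'}\rangle=\hat\psi_R^\epsilon\,a\,\hat{(C_\delta)}_R^{LL}|C_{\epsilon'}\rangle=\hat\psi_R^\epsilon\,a\,|C_\epsilon\rangle.
\]
No centrality is needed, and your long detour about whether $\hat{(C_\delta)}_L^{LL}=\hat{(C_\delta)}_R^{LL}$ becomes irrelevant. (Your proposed justification for that equality, namely $C_\delta=C_\delta^\star$, is in any case insufficient: self-$\star$ elements need not have equal left and right representations.)
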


\begin{proof}
    For $a\in \hat A_L^{LL}$, this claim follows from the observation that  $\Phi_{LR}(a) = \hat \Phi_{LR}(a)$. Furthermore, a general $a$ in the von Neumann algebra $\mathcal A_L^{LL}$ can be written as the limit of a net of operators $\{\hat a_\alpha\}\subset \hat A_L^{LL}$ that converges to $a$ in the {\it strong} operator topology.\footnote{For convex sets of bounded operators, the closure taken in the weak operator topology agrees with that taken in the strong operator topology; see e.g. theorem 5.1.12 in~\cite{KR1}. Here we regard the von Neumann $\mathcal A_L^{LL}$ as the closure of $\hat A^{LL}_L$ in the strong operator topology, which then requires that we include the limits of all strongly-convergent nets.} As a result,  since Hilbert spaces are metrizable \footnote{Here we used an extended sequential property of metrizable spaces. Given two convergent nets in a metrizable space $\{\ket {\psi_\alpha}\}\to \ket \psi$ and $\{\ket {\psi'_\alpha}\}\to \ket{\psi'}$ with a common index set $\mathcal J$, there exist subsequences $\{\ket{\psi_{\alpha_n}}\}\to \ket \psi$ and $\{\ket{\psi'_{\alpha_n}}\}\to \ket {\psi'}$ 
    with common indices $\alpha_n\in \mathcal J$ that
    converge to the same limit point.  We presume the argument for this result` is standard, but reader's seeking an explict reference can consult the discussion around (3.42) in \cite{Colafranceschi:2023urj}.}, given any
    two common rims $C_\epsilon$ and $C_\delta$
    there must be a {\it sequence} $\{\hat a_n\}\subset \hat A_L^{LL}$ such that the sequence $\{\hat a_n |C_\epsilon\rangle\}$ converges in norm to $a|C_\epsilon\rangle$ while the sequence 
    $\{\hat a_n |C_\delta\rangle\}$ converges in norm to $a|C_\delta\rangle$.  Since $\hat \psi_R^\epsilon$ and 
   $\hat \psi_R^\delta$ are bounded, we similarly find the limits
\begin{eqnarray}   
\label{eq:lims}
   \hat \psi_R^\epsilon \hat a_n |C_\epsilon\rangle \rightarrow 
   \hat \psi_R^\epsilon a |C_\epsilon\rangle \nonumber\\
   {\rm and} \ \ \    \hat \psi_R^\delta \hat a_n|C_\delta\rangle \rightarrow 
   \hat \psi_R^\delta a |C_\delta\rangle. 
   \end{eqnarray}
But since $\hat a_n \in \hat{A}^{LL}_L$ we have   $\hat \psi_R^\epsilon \hat a_n |C_\epsilon\rangle = \hat \Phi_{LR}(\hat a_n) |\psi\rangle = \hat \psi_R^\delta \hat a_n |C_\delta\rangle$ as noted in the opening sentence of this proof.  Thus  \eqref{eq:lims} implies 
$\hat \psi_R^\epsilon a |C_\epsilon\rangle = 
   \hat \psi_R^\delta a |C_\delta\rangle$ as claimed.

\end{proof}

\begin{theorem}
\label{thm:bound}
    For any $a \in \mathcal A_L^{LL}$,  our $\Phi_{LR}(a)$ is a bounded operator on $\mathcal D_{LR}$ (with operator norm no larger than the norm $||a||$ of $a$).  It thus admits a unique bounded extension to the entire Hilbert space $\mathcal H_{LR}$.
\end{theorem}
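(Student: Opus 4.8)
The plan is to prove the operator-norm bound $\|\Phi_{LR}(a)\|\le\|a\|$ directly on the dense domain $\mathcal D_{LR}$; the unique bounded extension to $\mathcal H_{LR}$ then follows automatically by continuity. Given $|\psi\rangle\in\mathcal D_{LR}$, write it as in \eqref{eq:span} with a common left rim $C_\epsilon$, so that $|\psi\rangle=\hat\psi^\epsilon_R|C_\epsilon\rangle$ with $|C_\epsilon\rangle\in\mathcal H_{LL}$ and, by the definition \eqref{eq:dense}, $\Phi_{LR}(a)|\psi\rangle=\hat\psi^\epsilon_R\,a\,|C_\epsilon\rangle$; here $\hat\psi^\epsilon_R=\Psi_R(|\psi^\epsilon\rangle)\in\mathcal B(\mathcal H_{LL},\mathcal H_{LR})$ is bounded by \eqref{eq:bound}.

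The crucial input is that the positive operator $R:=(\hat\psi^\epsilon_R)^\dagger\hat\psi^\epsilon_R$ on $\mathcal H_{LL}$ lies in the \emph{right} von Neumann algebra $\mathcal A_R^{LL}$. Indeed, combining \eqref{eq:glue} and \eqref{eq:adjpsiR} one finds $R|\phi\rangle=|\phi\,\psi^\epsilon(\psi^\epsilon)^\star\rangle$ on surface states $|\phi\rangle\in H_{LL}$, so that $R$ is precisely the representation on $\mathcal H_{LL}$ of the right-surface-algebra element $\psi^\epsilon(\psi^\epsilon)^\star\in A_R^{B_L}$, whence $R\in\hat A_R^{LL}\subseteq\mathcal A_R^{LL}$. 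It therefore admits a self-adjoint square root $\sqrt R\in\mathcal A_R^{LL}$, and since the diagonal left and right surface-algebra representations commute (Section~\ref{sec:rev}), so do their weak-operator closures; in particular $[a,\sqrt R]=0$ for every $a\in\mathcal A_L^{LL}$. The desired bound will then follow from the chain
\begin{align}
\|\Phi_{LR}(a)|\psi\rangle\|^2
&=\langle C_\epsilon|\,a^\dagger R\,a\,|C_\epsilon\rangle
=\|\sqrt R\,a\,|C_\epsilon\rangle\|^2
=\|a\,\sqrt R\,|C_\epsilon\rangle\|^2 \nonumber\\
&\le\|a\|^2\,\|\sqrt R\,|C_\epsilon\rangle\|^2
=\|a\|^2\,\langle C_\epsilon|\,R\,|C_\epsilon\rangle
=\|a\|^2\,\|\hat\psi^\epsilon_R|C_\epsilon\rangle\|^2
=\|a\|^2\,\||\psi\rangle\|^2 ,
\end{align}
where the last equality uses $\hat\psi^\epsilon_R|C_\epsilon\rangle=|C_\epsilon\psi^\epsilon\rangle=|\psi\rangle$, and the third equality uses $[a,\sqrt R]=0$.

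The step I expect to require the most care — and which really carries the content of the claim — is the identification of $(\hat\psi^\epsilon_R)^\dagger\hat\psi^\epsilon_R$ as an element of the right von Neumann algebra $\mathcal A_R^{LL}$; once that is in hand, the rest is a routine assembly of the trace identities behind \eqref{eq:glue}--\eqref{eq:adjpsiR}, the boundedness of $\Psi_R$, the commutation of the diagonal left and right algebras, and the $\epsilon$-independence of Claim~\ref{thm:indep} (which we need so that the quantity being estimated, and hence $\Phi_{LR}(a)$ itself, is well defined on $\mathcal D_{LR}$). It is worth emphasizing that this argument nowhere approximates $a$ by surface operators, so it applies uniformly to all of the von Neumann algebra $\mathcal A_L^{LL}$, not merely to $\hat A_L^{LL}$.
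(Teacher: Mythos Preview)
Your argument is correct and is essentially the paper's own proof: both identify $(\hat\psi^\epsilon_R)^\dagger\hat\psi^\epsilon_R$ as an element of $\hat A_R^{LL}\subset\mathcal A_R^{LL}$, take its positive square root inside $\mathcal A_R^{LL}$, and commute it past $a\in\mathcal A_L^{LL}$ to obtain the bound $\|\Phi_{LR}(a)|\psi\rangle\|^2\le\|a\|^2\langle\psi|\psi\rangle$. The only cosmetic difference is notation ($\sqrt R$ versus the paper's $|\hat\psi^\epsilon_R|$).
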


\begin{proof}
We begin by computing the norm of  \eqref{eq:dense}.  Since $|\psi\rangle \in \mathcal D_{LR},$ we have
    \begin{equation}
    \label{eq:norm}
        |\Phi_{LR}(a) \ket\psi|^2=|\hat \psi^\epsilon_R a \ket{C_\epsilon}|^2=\bra{C_\epsilon} a^\dagger \hat \psi^{\epsilon,\dagger}_R \hat \psi^\epsilon_R a\ket{C_\epsilon}.
    \end{equation}

Recall now that $\hat \psi^\epsilon_R$ is a bounded operator from $\mathcal H_{LL}$ to $\mathcal H_{LR}$.  As a result, $\hat \psi^{\epsilon,\dagger}_R \hat \psi^\epsilon_R$ 
is a bounded operator on $\mathcal H_{LL}$.  Furthermore, since $|\psi\rangle$ is of the form \eqref{eq:span}, the operator $\hat \psi^{\epsilon,\dagger}_R \hat \psi^\epsilon_R$  is a finite linear combination of the operators $\hat \psi^{\epsilon,\dagger}_{i,R} \hat \psi^\epsilon_{j,R}$, each of which lies in the right representation $\hat A^{LL}_R$ of the surface algebra $A_R^{B_L}$.  Thus $\hat \psi^{\epsilon,\dagger}_R \hat \psi^\epsilon_R \in \hat A^{LL}_R \subset \mathcal A_R^{LL}$, so that its (unique) positive square root $|\hat \psi^\epsilon_R|$ also lies in the von Neumann algebra $\mathcal A_R^{LL}$.  We may then use the fact that $a$ and $a^\dagger$ are in the left von Neumann algebra $\mathcal A^{LL}_L$ to conclude that they commute with any operator in the right von Neumann algebra $\mathcal A_R^{LL}$, and in particular with $|\hat \psi^\epsilon_R|$.  This observation allows us to write
\begin{eqnarray}
\label{eq:bound2}
    \bra{C_\epsilon} a^\dagger \hat \psi^{\epsilon,\dagger}_R \hat \psi^\epsilon_R a\ket{C_\epsilon} &=&
        \bra{C_\epsilon} a^\dagger |\hat \psi^{\epsilon}_R|^2 a\ket{C_\epsilon}
    =    \bra{C_\epsilon} |\hat \psi^{\epsilon}_R| \, a^\dagger  a \, |\hat \psi^{\epsilon}_R|\ket{C_\epsilon} \cr
    &\le& ||a||^2 \bra{C_\epsilon} |\hat \psi^{\epsilon}_R|^2 \ket{C_\epsilon}
    = ||a||^2    \bra{C_\epsilon}  \hat \psi^{\epsilon,\dagger}_R \hat \psi^\epsilon_R \ket{C_\epsilon} 
    =   ||a||^2  \bra{\psi} \psi \rangle,
\end{eqnarray}
where $||a||$ is the operator norm of $a$.  Thus $\Phi_{LR}(a)$ is bounded as claimed and, in particular, its norm can be no larger than the norm of $a$.
\end{proof}

For each $a \in \mathcal A_L^{LL}$, we may now extend the domain of $\Phi_{LR}(a)$ to all of $\mathcal H_{LR}$ by continuity to define $\Phi_{LR}$ as a linear map from $\mathcal A_L^{LL}$ to the space $\mathcal B(\mathcal H_{LR})$ of bounded linear operators on $\mathcal H_{LR}$.  
Since bounded operators are determined by their matrix elements on the dense subspace $\mathcal D_{LR}$, this extension is again a homomorphism.  One can also quickly use \eqref{eq:dense} (and the fact that $a \in \mathcal A_L^{LL}$ commutes with  $\hat \psi^{\epsilon,\dagger}_{1,R} \hat \psi^\epsilon_{2,R} 
\in \mathcal A_R^{LL}$ for any $|\psi_1\rangle, |\psi_2\rangle \in \mathcal D_{LR}$) to show that $\Phi_{LR}$ is a $*$-homomorphism, meaning that $\Phi_{LR}(a^\dagger) = \left(\Phi_{LR}(a)\right)^\dagger$.  We will also mention that we will find in section \ref{sec:con} that $\Phi_{LR}(a)$ is continuous with respect to the weak operator topology.  Since we know that $\Phi_{LR}$ maps $\hat A_L^{LL} \subset \mathcal A_L^{LL}$ onto $\hat A_L^{LR}$, and since the $\mathcal A_L^{LR}$ is the closure of  $\hat A_L^{LR}$ in the weak operator topology, it will then follow that $\Phi_{LR}: \mathcal A_L^{LL} \rightarrow A_L^{LR}$ is a surjective $*$-homomorphism.  This is precisely the analogue of our result from section \ref{sec:alg} at the level of the von Neumann algebras $A_L^{LL}, A_L^{LR}$.

\subsection{Vectors in an off-diagonal sector as intertwining operators}
\label{sec:psi}

Before proceeding to the proof of continuity in section \ref{sec:con}, it will be useful to first derive some additional properties of the map $\Psi_R$.
Recall that section~\ref{sec:alg} defined $\Psi_R$ as  a linear map from $\mathcal D_{LR}$ to $\mathcal B(\mathcal H_{LL}, \mathcal H_{LR})$. The results of section \ref{sec:extend} will turn out to imply this $\Psi_R$ to be continuous in the strong operator topology.  In particular,  we now establish the following claim:

\begin{theorem}
\label{thm:strong}
The map $\Psi_R:\mathcal D_{LR}\to \mathcal B(\mathcal H_{LL}, \mathcal H_{LR})$ is continuous w.r.t. the norm topology on $\mathcal H_{LR}$ and the strong operator topology on $\mathcal B(\mathcal H_{LL}, \mathcal H_{LR})$. It then follows that $\Psi_R$ admits a unique continuous extension to the entire Hilbert space.
\end{theorem}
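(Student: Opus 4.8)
The plan is to establish the claimed continuity of $\Psi_R$ by controlling the operator-norm (hence strong-operator) size of $\Psi_R(|a\rangle) = \hat a_R$ in terms of the Hilbert-space norm $\||a\rangle\|$, and then to invoke the standard extension-by-continuity argument for bounded linear maps on a dense subspace. First I would recall that \eqref{eq:bound} already gives, for every $|\phi\rangle \in H_{LL}$, the pointwise bound $\langle\phi|\hat a_R^\dagger \hat a_R|\phi\rangle \le \langle\phi|\phi\rangle\,\langle a|a\rangle$, i.e. $\|\hat a_R\|_{\mathrm op} \le \||a\rangle\|$. Since $\Psi_R$ is linear on $\mathcal D_{LR}$, this is precisely the statement that $\Psi_R$ is a bounded linear map from $(\mathcal D_{LR}, \|\cdot\|)$ into $\mathcal B(\mathcal H_{LL},\mathcal H_{LR})$ equipped with the operator norm, with $\|\Psi_R\|\le 1$.

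The second step is to pass from norm continuity on the target to the strong-operator topology. This is automatic: if $|a_\alpha\rangle \to |a\rangle$ in norm in $\mathcal H_{LR}$, then for any fixed $|\phi\rangle \in \mathcal H_{LL}$ we have $\|\hat a_{\alpha,R}|\phi\rangle - \hat a_R|\phi\rangle\| = \|(\widehat{a_\alpha - a})_R\,|\phi\rangle\| \le \||a_\alpha\rangle - |a\rangle\|\cdot\||\phi\rangle\| \to 0$, using linearity of $|a\rangle \mapsto \hat a_R$ and the operator-norm bound from step one. Hence $\Psi_R$ is continuous for the norm topology on $\mathcal H_{LR}$ and the strong operator topology on $\mathcal B(\mathcal H_{LL},\mathcal H_{LR})$, as claimed. (A fortiori the map is norm-to-norm continuous, but the strong-operator statement is the one we will use.)

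For the final assertion — the existence of a unique continuous extension to all of $\mathcal H_{LR}$ — I would argue as follows. Given $|a\rangle \in \mathcal H_{LR}$, choose a sequence (or net) $|a_n\rangle \in \mathcal D_{LR}$ with $|a_n\rangle \to |a\rangle$ in norm; since $\mathcal D_{LR}$ is dense and $\mathcal H_{LR}$ is metrizable such a sequence exists. Then $\{\hat a_{n,R}\}$ is Cauchy in operator norm by $\|\hat a_{n,R} - \hat a_{m,R}\|_{\mathrm{op}} \le \||a_n\rangle - |a_m\rangle\|$, and $\mathcal B(\mathcal H_{LL},\mathcal H_{LR})$ is complete in the operator norm, so the limit $\Psi_R(|a\rangle) := \lim_n \hat a_{n,R}$ exists and is independent of the approximating sequence (difference of two such sequences tends to $0$ in norm, hence so do the associated operators). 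The extended map is linear and still satisfies $\|\Psi_R(|a\rangle)\|_{\mathrm{op}} \le \||a\rangle\|$, so it is continuous for the norm and strong-operator topologies; uniqueness follows since any continuous extension must agree with this limit on the dense set $\mathcal D_{LR}$.

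The only genuinely substantive input is the bound \eqref{eq:bound} derived from the trace inequality \eqref{eq:trIn}; everything else is the routine bounded-linear-extension machinery, so I do not expect a real obstacle here. The one point to be slightly careful about is that we are extending in the \emph{operator norm} on the target — which is why completeness of $\mathcal B(\mathcal H_{LL},\mathcal H_{LR})$ is available — and only afterward observing that operator-norm convergence implies strong-operator convergence; one should not try to run the extension argument directly in the (non-complete, for these purposes awkward) strong operator topology. I would also note for later use, as the section intends, that the extended $\Psi_R$ continues to satisfy $\Psi_R(|a\rangle)^\dagger = \Psi_L(|a^\star\rangle)$-type adjoint relations by continuity, since these hold on the dense domain $\mathcal D_{LR}$ and both sides are norm-continuous in $|a\rangle$.
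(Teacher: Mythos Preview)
Your proof is correct but proceeds differently from the paper. The paper argues strong-operator continuity in three steps: it first uses \eqref{eq:bound} only to obtain uniform boundedness of the sequence $\widehat{\psi_n}_R$, then establishes pointwise convergence on the dense set $\mathcal D_{LL}$ via the identity $\widehat{\psi_n}_R|x\rangle = \hat\Phi_{LR}(\hat x_L)|\psi_n\rangle$, and finally combines uniform boundedness with dense pointwise convergence to get strong convergence on all of $\mathcal H_{LL}$. You instead extract the full strength of \eqref{eq:bound}: since $\|\hat a_R\|_{\mathrm{op}} \le \||a\rangle\|$ and $\Psi_R$ is linear, the map is a contraction for the operator norm, so norm-to-norm (and a fortiori norm-to-strong) continuity and the unique extension follow immediately from completeness of $\mathcal B(\mathcal H_{LL},\mathcal H_{LR})$. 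Your route is shorter and yields the stronger operator-norm conclusion; the paper's route has the advantage of producing, along the way, the explicit formula $\hat\psi_R|x\rangle = \hat\Phi_{LR}(\hat x_L)|\psi\rangle$ on $\mathcal D_{LL}$ (their equation \eqref{eq:phi}), which they use immediately afterward in proving the intertwining Claim \ref{thm:intw}. With your approach that identity would still need to be checked separately, but it follows trivially by continuity from the surface-level relation on $\mathcal D_{LR}$.
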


\begin{proof}
    Since $\mathcal D_{LR}$ is a metric space, the map $\Psi_R$ is continuous if and only if it acts continuously on preserves all convergent sequences. Consider then an arbitrary sequence of vectors $\{\ket{ \psi_n}\}\subset \mathcal D_{LR}$ that converges to some $|\psi\rangle \in \mathcal H_{LR}$.
    The first step is to show that the corresponding sequence of operators $\{\widehat {\psi_n}_R=\Psi_R(\ket { \psi_n})\}$ is uniformly bounded. In particular, replacing $|a\rangle$ by $|\psi_n\rangle$ in~\eqref{eq:bound} shows that $||\widehat {\psi_n}_R||$ is bounded by $\sqrt{\braket { \psi_n| \psi_n}}$. In addition, since the sequence $\{\ket { \psi_n}\}$ converges in $\mathcal H_{LR}$, we know the sequence of norms $\{\braket { \psi_n| \psi_n}\}$ is convergent and must be bounded. The sequence of operator norms $\{||\hat  \psi_n||\}$ is thus bounded as well, so that the sequence $\{\hat  \psi_n\}$ is uniformly bounded.

    Next, for any $\ket x\in \mathcal D_{LL}$ we will show that the sequence of vectors $\{\widehat {\psi_n}_R\ket x\}$ converges in $\mathcal H_{LR}$. Let us first note that \eqref{eq:glue} implies
    \begin{equation}   
    \label{eq:preINT}
\widehat {\psi_n}_R\ket x=\ket {x\psi_n}=\hat\Phi_{LR}(\hat x_L)\ket {\psi_n},
    \end{equation}
    where $x\in H_{LL}$ is any representative of the vector $\ket x$, and where $\hat x_L :=\Psi_L(\ket x)\in \hat A_L^{LL}$ is the bounded operator associated with the representation of $x\in A_L^{B_L}$ on the diagonal sector $\mathcal H_{LL}$. Since the sequence of vectors $\ket {\psi_n}$ converges to $|\psi\rangle$, and since $\hat \Phi_{LR}(\hat x_L)$ is bounded, \eqref{eq:preINT} shows that the sequence of vectors $\widehat {\psi_n}_R\ket x$ also converges to $\hat \Phi_{LR}(\hat x_L) |\psi\rangle$. 
    
    Via a standard calculation, it then follows that, given any Cauchy sequence $\{|x_n\rangle\} \subset \mathcal D_{LR}$, the sequence $\widehat {\psi_n}_R|x_n\rangle$ is also Cauchy, and that it thus converges in $\mathcal H_{LR}$ (see e.g. theorem 6 in section 15.2 of~\cite{Lax}).  This means that the sequence of operators $\widehat {\psi_n}_R$ converges in the strong operator topology to an operator $\hat \psi_R$. In particular, for any $|\psi\rangle \in \mathcal H_{LR}$, the limit yields a bounded operator $\hat \psi_R $ satisfying
\begin{equation}
\label{eq:phi}
    \forall \ket x\in \mathcal D_{LL},\quad\hat \psi_R\ket x=\hat \Phi_{LR}(\hat x_L)\ket{\psi}.
\end{equation}
We may thus define $\Psi_R(|\psi\rangle) :=\hat \psi_R$.  
\end{proof}

We will now show the extended $\Psi_R$ to satisfy the following extension of \eqref{eq:glue}.

\begin{theorem}
\label{thm:intw} $\forall \ket \psi\in \mathcal H_{LR}$, the operator $\hat \psi_R=\Psi_R(\ket \psi):\mathcal H_{LL}\to\mathcal H_{LR}$ is an intertwining operator between the von Neumann algebras $\mathcal A_L^{LL}$ and $\mathcal A_L^{LR}$. Specifically, we have
    \begin{equation}
    \label{eq:com2}
    \Phi_{LR}(a)\hat \psi_R=\hat \psi_R  a\quad \forall a\in \mathcal A^{LL}_L.
    \end{equation}
\end{theorem}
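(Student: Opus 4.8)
The plan is to prove \eqref{eq:com2} first on the dense subspace $\mathcal D_{LL}$ and then extend to all of $\mathcal H_{LL}$ by boundedness. Both sides of \eqref{eq:com2} are bounded operators from $\mathcal H_{LL}$ to $\mathcal H_{LR}$ (using Claim~\ref{thm:bound} for $\Phi_{LR}(a)$, Claim~\ref{thm:strong} for $\hat\psi_R$, and the fact that $a\in\mathcal A_L^{LL}$ is bounded), so it suffices to check equality of their actions on an arbitrary $|x\rangle\in\mathcal D_{LL}$, and even just on vectors of the form $|x\rangle$ with $x\in Y^d_{LL}$ a single rimmed surface, since these span $\mathcal D_{LL}$.

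First I would use \eqref{eq:phi} from the proof of Claim~\ref{thm:strong}, which says $\hat\psi_R|x\rangle=\hat\Phi_{LR}(\hat x_L)|\psi\rangle$ for $|x\rangle\in\mathcal D_{LL}$, where $\hat x_L=\Psi_L(|x\rangle)\in\hat A_L^{LL}$. Applying $\Phi_{LR}(a)$ and using that $\Phi_{LR}$ is a homomorphism (established in section~\ref{sec:extend}) gives
\begin{equation}
\Phi_{LR}(a)\,\hat\psi_R|x\rangle=\Phi_{LR}(a)\hat\Phi_{LR}(\hat x_L)|\psi\rangle=\Phi_{LR}(a\hat x_L)|\psi\rangle.
\end{equation}
On the other side, $\hat\psi_R\,a|x\rangle$: here $a|x\rangle$ is again a vector in $\mathcal H_{LL}$, but to use \eqref{eq:phi} I would want it in $\mathcal D_{LL}$ with an explicit representative. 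The cleanest route is to avoid this by instead writing $a|x\rangle = a\hat x_L|C_\epsilon\rangle$ — splitting off a rim $C_\epsilon$ from $x$ so that $|x\rangle=\hat x_L^\epsilon|C_\epsilon\rangle$ in the notation of \eqref{eq:dense}, with $\hat x_L^\epsilon\in\hat A_L^{LL}$ — and then observing that $a$ commutes with $\hat x_L^\epsilon$ inside $\mathcal A_L^{LL}$. Then $a|x\rangle=\hat x_L^\epsilon(a|C_\epsilon\rangle)$, and I would apply \eqref{eq:dense}/\eqref{eq:glue} together with the intertwining relation \eqref{eq:preINT} to identify $\hat\psi_R$ acting on this vector with $\hat\Phi_{LR}(\hat x_L^\epsilon)$ acting on $\hat\psi_R|C_\epsilon\rangle$, and then recognize $\hat\psi_R|C_\epsilon\rangle$ via \eqref{eq:dense} as essentially $\Phi_{LR}$ of the surface built from $\psi$ and $C_\epsilon$. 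Both sides should collapse to $\Phi_{LR}(a\hat x_L)|\psi\rangle$, using that $a\hat x_L=\hat x_L a$ in $\mathcal A_L^{LL}$ so the two orderings agree.

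The main obstacle I anticipate is bookkeeping: carefully matching the "split off a rim" manipulation on the $\hat\psi_R a|x\rangle$ side with the definition \eqref{eq:dense} of $\Phi_{LR}(a)$, making sure the relevant cylinder $C_\epsilon$ is simultaneously a rim of $x$ and small enough to serve in \eqref{eq:dense}, and tracking which Hilbert space each intermediate vector lives in (some live in $\mathcal H_{LL}$, some in $\mathcal H_{LR}$). A slicker alternative that sidesteps the rim gymnastics is to prove \eqref{eq:com2} only for $a\in\hat A_L^{LL}$ first — there it is immediate from \eqref{eq:preINT} since $\hat\psi_R|x\rangle=\hat\Phi_{LR}(\hat x_L)|\psi\rangle$ and $\hat A_L^{LL}$ acts by concatenation — and then extend to all $a\in\mathcal A_L^{LL}$ by taking a net $\hat a_\alpha\to a$ in the strong operator topology: $\Phi_{LR}(\hat a_\alpha)\to\Phi_{LR}(a)$ strongly (this is exactly the content to be proven in section~\ref{sec:con}, or can be extracted from the boundedness estimate \eqref{eq:bound2} applied to $a-\hat a_\alpha$ on the dense domain), while $\hat a_\alpha\hat\psi_R\to a\hat\psi_R$ strongly because $\hat\psi_R$ is a fixed bounded operator, so \eqref{eq:com2} passes to the limit. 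I would present the direct dense-subspace computation as the main argument and relegate the continuity subtlety to a remark pointing forward to section~\ref{sec:con}.
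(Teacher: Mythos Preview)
There is a genuine error in your main argument: you claim that $a$ commutes with $\hat x_L^\epsilon$ ``inside $\mathcal A_L^{LL}$'', and later that $a\hat x_L=\hat x_L a$. Both $a$ and $\hat x_L^\epsilon$ (resp.\ $\hat x_L$) lie in the \emph{left} algebra $\mathcal A_L^{LL}$, which is certainly not abelian, so this fails in general. What you actually need is the \emph{right} operator: splitting the left rim as $x=C_\epsilon x^\epsilon$ gives $|x\rangle=\hat x^\epsilon_R\,|C_\epsilon\rangle$ with $\hat x^\epsilon_R\in\hat A_R^{LL}$, and it is the $L$--$R$ commutativity $a\,\hat x^\epsilon_R=\hat x^\epsilon_R\,a$ that lets you write $a|x\rangle=\hat x^\epsilon_R\,a|C_\epsilon\rangle$. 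With that correction the computation goes through: both sides become $\widehat{(x^\epsilon\psi)}_R\,a|C_\epsilon\rangle$, the left side by the very definition \eqref{eq:dense} applied to the state $|x\psi\rangle=|C_\epsilon\,(x^\epsilon\psi)\rangle$, the right side because $\hat\psi_R\hat x^\epsilon_R=\widehat{(x^\epsilon\psi)}_R$ as bounded operators. This is essentially the paper's proof, which first establishes the intertwining on $|C_\epsilon\rangle$ directly from \eqref{eq:dense} and then bootstraps to $|\kappa C_\epsilon\rangle$ using the homomorphism property of $\Phi_{LR}$.

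Your alternative route (prove \eqref{eq:com2} for $a\in\hat A_L^{LL}$, then pass to a strong-limit net $\hat a_\alpha\to a$) is circular as stated: the step $\Phi_{LR}(\hat a_\alpha)\to\Phi_{LR}(a)$ strongly is precisely the continuity proved in section~\ref{sec:con}, and that proof invokes Claim~\ref{thm:intw} (via \eqref{eq:weak}). The suggested fallback of using the norm bound \eqref{eq:bound2} on $a-\hat a_\alpha$ does not help, since strong convergence of $\hat a_\alpha$ does not imply $\|a-\hat a_\alpha\|\to 0$. The paper avoids this circularity by never invoking continuity of $\Phi_{LR}$ in the argument for Claim~\ref{thm:intw}; instead it works directly with the definition \eqref{eq:dense}, which already makes sense for every $a\in\mathcal A_L^{LL}$, and uses only the (already established) strong continuity of $\Psi_R$.
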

\begin{proof}
Any $\ket \psi \in \mathcal H_{LR}$ is the Hilbert space limit of a sequence $\{ \ket {\psi_n} \} \subset \mathcal D_{LR}$.  Note that for any $\epsilon >0$ we may construct the states $\ket{\phi_n} := \ket {C_\epsilon \psi_n} \in \mathcal D_{LR}$, and that $\widehat {\phi_n}_R^\epsilon = \widehat {\psi_n}_R$.  
For any $a\in \mathcal A_L^{LL}$ 
we may thus write
\begin{equation}
\Phi_{LR}(a) \widehat {\psi_n}_R |C_\epsilon  \rangle =
\Phi_{LR}(a) |C_\epsilon \psi_n \rangle =
\widehat {\psi_n}_R a |C_\epsilon \rangle,
\end{equation}
where the first equality is the definition of $\widehat {\psi_n}_R$ and the second is a direct application of 
\eqref{eq:dense}.  Recalling that $\widehat {\psi_n}_R = \Psi_R (|\psi_n\rangle)$, and that 
$\hat {\psi}_R = \Psi_R (|\psi \rangle)$, boundedness of $\Phi_{LR}(a)$ and the continuity property of Claim \ref{thm:strong} allow us to take the limit $n\rightarrow \infty$ to find 
\begin{equation}
\label{eq:intCe}
\Phi_{LR}(a) \hat {\psi}_R |C_\epsilon  \rangle =
\hat {\psi}_R a |C_\epsilon \rangle.
\end{equation}

To show that the above intertwining relation in fact holds when acting on arbitrary states in $\mathcal H_{LL}$, we simply consider any $\kappa \in \underline Y_{LL}^d$ and compute
\begin{eqnarray}
\label{eq:intkappa}
\hat {\psi}_R a | \kappa C_\epsilon \rangle
= 
\hat {\psi}_R a \hat \kappa_L |  C_\epsilon \rangle
&=&
\Phi_{LR}(a\hat \kappa_L ) \hat {\psi}_R |C_\epsilon  \rangle 
\cr
&=&
\Phi_{LR}(a) \Phi_{LR}(\hat \kappa_L ) \hat {\psi}_R |C_\epsilon  \rangle \cr
&=& \Phi_{LR}(a)  \hat {\psi}_R \hat \kappa_L |C_\epsilon  \rangle
= \Phi_{LR}(a)  \hat {\psi}_R |\kappa C_\epsilon  \rangle.
\end{eqnarray}
Here the last equality on the first line uses \eqref{eq:intCe} with the $a$ of \eqref{eq:intCe}  replaced by $a\hat \kappa_L$. We then pass to the second line using the fact that $\Phi_{LR}$ is a homomorphism as shown at the end of section \ref{sec:extend}.  The third line follows by applying \eqref{eq:intCe} with the $a$ of \eqref{eq:intCe} replaced by $\hat \kappa_L$. 
Since the operators $\hat \phi_R, a, \Phi_{LR}(a)$ are bounded, and since every state in the dense subspace $\mathcal D_{LR} \subset \mathcal H_{LR}$ is of the form $|\kappa C_\epsilon\rangle$ for some $\kappa, \epsilon$,  the general result \eqref{eq:com2} follows by taking corresponding limits of \eqref{eq:intkappa}.
\end{proof}

Before proceeding to the next section, we also wish to establish a further useful property of the
 extended map $\Psi_R$.  This property is a partial extension of \eqref{eq:trprod} associated with the fact that $\hat \psi^\dagger_R \hat \psi_R$ lies in $\mathcal A_R^{LL}$. In particular, we will find
\begin{equation}
\label{eq:trnorm}
    tr(\hat \psi^\dagger_R \hat \psi_R)=\braket{\psi|\psi}\quad \forall \ket \psi\in \mathcal H_{LR}.
\end{equation}
To see this, we will use the following two results:

\begin{theorem}
\label{thm:iden}
    For cylinder operators $\widehat{C_\beta}_L \in \hat A_L^{LL}$, the limit $\lim_{\beta\downarrow 0}\Phi_{LR}(\widehat{C_\beta}_L)$ converges in the strong operator topology to the identity 
 ${\mathds 1}\in B(\mathcal H_{LR})$.
\end{theorem}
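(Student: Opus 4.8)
The plan is to prove strong convergence by the usual two ingredients: (i) a uniform bound $\sup_{0<\beta\le\beta_0}\|\Phi_{LR}(\widehat{C_\beta}_L)\|<\infty$ for some $\beta_0>0$, and (ii) the pointwise statement $\Phi_{LR}(\widehat{C_\beta}_L)|\psi\rangle\to|\psi\rangle$ for all $|\psi\rangle$ in the dense domain $\mathcal D_{LR}$. Together these force $\Phi_{LR}(\widehat{C_\beta}_L)\to{\mathds 1}$ in the strong operator topology on all of $\mathcal H_{LR}$. Since $\mathcal H_{LR}$ is metrizable it suffices to argue along an arbitrary sequence $\beta_n\downarrow 0$, which also makes the uniform boundedness principle available.

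For (i), I would combine Claim~\ref{thm:bound}, which gives $\|\Phi_{LR}(\widehat{C_\beta}_L)\|\le\|\widehat{C_\beta}_L\|$, with the diagonal input established in \cite{Colafranceschi:2023urj}, namely that $\widehat{C_\beta}_L\to{\mathds 1}$ strongly on $\mathcal H_{LL}$ as $\beta\downarrow0$ --- equivalently, the $B_R=B_L$ special case of the present claim (recall $\Phi_{LL}$ is the identity on $\mathcal A_L^{LL}$), a fact already implicit in the very definition \eqref{eq:trvN} of the trace. Strong convergence on all of $\mathcal H_{LL}$ along a sequence $\beta_n\downarrow0$ yields uniform boundedness of $\{\widehat{C_{\beta_n}}_L\}$ by the uniform boundedness principle, hence of $\{\Phi_{LR}(\widehat{C_{\beta_n}}_L)\}$.

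For (ii), I would use the defining formula \eqref{eq:dense} directly. Given $|\psi\rangle\in\mathcal D_{LR}$, choose $\epsilon>0$ small enough that each rimmed surface contributing to $|\psi\rangle$ splits as $\psi_i=C_\epsilon\psi_i^\epsilon$ with $\psi_i^\epsilon$ still rimmed on the left, so that the bounded operator $\hat\psi^\epsilon_R:\mathcal H_{LL}\to\mathcal H_{LR}$ of \eqref{eq:glue} is defined. Then, since $\widehat{C_\beta}_L|C_\epsilon\rangle=|C_{\beta+\epsilon}\rangle$,
\begin{equation}
\Phi_{LR}(\widehat{C_\beta}_L)\,|\psi\rangle \;=\; \hat\psi^\epsilon_R\,\widehat{C_\beta}_L\,|C_\epsilon\rangle \;=\; \hat\psi^\epsilon_R\,|C_{\beta+\epsilon}\rangle ,
\end{equation}
which coincides with $\hat\Phi_{LR}$ of the diagonal cylinder (so equals $|C_\beta\psi\rangle$) when restricted to $\mathcal D_{LR}$. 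As $\beta\downarrow0$, the diagonal fact quoted in (i) --- or, for a self-contained version of just this point, a direct application of the continuity axiom to $\|\,|C_{\beta+\epsilon}\rangle-|C_\epsilon\rangle\,\|^2$, whose terms are all $\zeta$ evaluated on $B_L\times S^1$ with circumference depending continuously on $\beta$ --- gives $|C_{\beta+\epsilon}\rangle\to|C_\epsilon\rangle$ in $\mathcal H_{LL}$. Boundedness of $\hat\psi^\epsilon_R$ then yields $\Phi_{LR}(\widehat{C_\beta}_L)|\psi\rangle\to\hat\psi^\epsilon_R|C_\epsilon\rangle=|C_\epsilon\psi^\epsilon\rangle=|\psi\rangle$, establishing (ii).

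The step I expect to be the only real point requiring care is the diagonal input in (i): one must be sure that $\widehat{C_\beta}_L\to{\mathds 1}$ strongly on $\mathcal H_{LL}$ (and hence that the $\widehat{C_\beta}_L$ are uniformly bounded for small $\beta$) is genuinely available from \cite{Colafranceschi:2023urj} rather than only the weaker assertions phrased there in terms of normalized cylinders. Once that is granted, the remainder is routine manipulation of the boundedness and homomorphism properties of $\Phi_{LR}$ and $\Psi_R$ already established above.
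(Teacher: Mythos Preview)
Your proposal is correct and follows essentially the same two-step strategy as the paper (uniform bound plus pointwise convergence on the dense domain, then the standard extension to the full Hilbert space). The one place where the paper differs is precisely the point you flagged: rather than invoking the uniform boundedness principle via the diagonal strong convergence $\widehat{C_\beta}_L\to\mathds 1$, the paper cites directly from appendix~A of \cite{Colafranceschi:2023urj} that $\|\widehat{C_\beta}_L\|\to 1$ as $\beta\downarrow 0$, which immediately yields the uniform bound through $\|\Phi_{LR}(\widehat{C_\beta}_L)\|\le\|\widehat{C_\beta}_L\|$ and resolves your stated concern about what is ``genuinely available.''
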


\begin{proof}
    Since $\widehat{C_\beta}_L$ is the representation of the cylinder $C_\beta$ on $\mathcal H_{LL}$, we know from section \ref{sec:alg} that $\Phi_{LR}(\widehat{C_\beta}_L)$ is just the corresponding representation of $C_\beta$ on the off-diagonal Hilbert space $\mathcal H_{LR}$.  Furthermore, from claim \eqref{eq:bound} we have $||\Phi_{LR}(\widehat{C_\beta}_L)||\le||\widehat{C_\beta}_L||$.  And since it was shown in appendix A of \cite{Colafranceschi:2023urj} that $||\hat C_\beta||\to 1$ as $\beta\downarrow 0$, the operators  $\Phi_{LR}(\widehat{C_\beta}_L)$ are uniformly bounded at small $\beta$. We also observe that, since states in $\mathcal H_{LR}$ are determined by their inner products with states in $\mathcal D_{LR}$, for any $\ket x\in \mathcal D_{LR}$  the continuity axiom of \cite{Colafranceschi:2023urj} requires the vectors $\Phi_{LR}(\widehat{C_\beta}_L)\ket x=\ket {C_\beta x}$ to converge  to $\ket x$ in the limit $\beta\downarrow 0$.  Together, as in the proof of claim \ref{thm:strong}, these properties imply that $\Phi_{LR}(\widehat{C_\beta}_L)$ converges in the strong operator topology to ${\mathds 1}$ as $\beta \downarrow 0$; see e.g. theorem 15.2.6 of~\cite{Lax}.
\end{proof}
\begin{theorem}
\label{thm:Crep}
For any  $|\psi\rangle \in {\cal H}_{B_1 \sqcup B_2}$, we have
\begin{equation}
\label{eq:Crep`}
    |\psi \rangle  = \lim_{\beta \downarrow 0} \hat \psi_R |C_\beta \rangle
\end{equation}
for $C_\beta$ the cylinder of length $\beta$ with boundary $B_1 \sqcup B_1$.
\end{theorem}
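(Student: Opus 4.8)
The plan is to read this statement off almost directly from the defining property \eqref{eq:phi} of the extended map $\Psi_R$ established in Claim~\ref{thm:strong}, together with the cylinder-limit result of Claim~\ref{thm:iden}. With the notation $B_1=B_L$, $B_2=B_R$, the surface $C_\beta\in Y^d_{LL}$ defines a vector $|C_\beta\rangle$ in the dense subspace $\mathcal D_{LL}\subset\mathcal H_{LL}$, and $\hat\psi_R=\Psi_R(|\psi\rangle)$ is a bounded operator from $\mathcal H_{LL}$ to $\mathcal H_{LR}$. The key observation is simply that the operator $\hat x_L$ appearing in \eqref{eq:phi} is, for the choice $x=C_\beta$, nothing but the diagonal cylinder operator $\widehat{C_\beta}_L\in\hat A_L^{LL}$.

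Concretely, the first step is to evaluate \eqref{eq:phi} at $|x\rangle=|C_\beta\rangle$, which yields
\begin{equation}
\hat\psi_R|C_\beta\rangle=\hat\Phi_{LR}(\widehat{C_\beta}_L)\,|\psi\rangle=\Phi_{LR}(\widehat{C_\beta}_L)\,|\psi\rangle,
\end{equation}
where the second equality uses that $\Phi_{LR}$ extends $\hat\Phi_{LR}$ and so agrees with it on $\hat A_L^{LL}$.  The second step is to invoke Claim~\ref{thm:iden}, which states that $\Phi_{LR}(\widehat{C_\beta}_L)$ converges in the strong operator topology to $\mathds{1}\in B(\mathcal H_{LR})$ as $\beta\downarrow 0$; applying this convergent net to the fixed vector $|\psi\rangle\in\mathcal H_{LR}$ gives $\hat\psi_R|C_\beta\rangle\to|\psi\rangle$, which is exactly \eqref{eq:Crep`}.

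Given the technology already assembled in sections~\ref{sec:extend} and~\ref{sec:psi}, there is essentially no obstacle left: the substantive content—uniform boundedness of the cylinder operators and convergence on the dense subspace—has been packaged into the proofs of Claims~\ref{thm:iden} and~\ref{thm:strong}.  If one instead wanted a self-contained argument not routed through those claims, the natural alternative would be to verify \eqref{eq:Crep`} first for $|\psi\rangle$ defined by a rimmed surface $\psi=C_\epsilon\psi^\epsilon\in Y^d_{LR}$—for which $\hat\psi_R|C_\beta\rangle=|C_\beta\psi\rangle$, and the continuity axiom makes $\langle C_\beta\psi|C_\beta\psi\rangle$, $\langle C_\beta\psi|\psi\rangle$ and $\langle\psi|C_\beta\psi\rangle$ all converge to $\langle\psi|\psi\rangle$, hence $\||C_\beta\psi\rangle-|\psi\rangle\|\to 0$—and then to extend to general $|\psi\rangle\in\mathcal H_{LR}$ using that $\Psi_R$ is Lipschitz from the norm topology on $\mathcal H_{LR}$ to the operator-norm topology (an immediate consequence of the bound \eqref{eq:bound}), which turns the $n\to\infty$ and $\beta\downarrow 0$ limits into a routine interchange.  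The first argument is shorter, and is the one I would present.
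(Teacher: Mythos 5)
Your first argument is correct and is essentially the paper's own approach, just more direct: \eqref{eq:phi} is already stated for every $|\psi\rangle\in\mathcal H_{LR}$, so one may plug in $|x\rangle=|C_\beta\rangle$ directly and then apply Claim~\ref{thm:iden}, whereas the paper re-derives this identity via a two-step $\mathcal D_{LR}$-then-general argument involving a double limit. Your observation that $\Psi_R$ is in fact $1$-Lipschitz in operator norm (an immediate consequence of \eqref{eq:bound} and linearity) is also correct and is a strictly stronger statement than the strong-operator continuity proven in Claim~\ref{thm:strong}, so your sketched alternative route would work as well, though the first argument is cleaner given the technology already in place.
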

\begin{proof}
Since any such state $|\psi\rangle$ is fully determined by its inner products with states in $\mathcal D_{B_1 \sqcup B_2}$, is clear from the continuity axiom that \eqref{eq:Crep`} holds for $|\psi \rangle \in \mathcal D_{B_1 \sqcup B_2}$.  For more general     $|\psi \rangle$, we may consider a sequence of states $|\psi_n \rangle\in \mathcal D_{B_1 \sqcup B_2}$ that converge to $|\psi\rangle.$  Continuity of $\Psi_R$ and boundedness of each operator $\widehat {\left(C_\beta\right)}_L$ then gives
\begin{eqnarray}
    \lim_{\beta \downarrow 0} \hat \psi_R |C_\beta \rangle &=& \lim_{\beta \downarrow 0} \lim_{n\rightarrow \infty} \widehat {\left(\psi_n\right)}_R |C_\beta \rangle 
    \cr &=& \lim_{\beta \downarrow 0} \lim_{n\rightarrow \infty} \hat \Phi_{LR}\left( \widehat {\left(C_\beta\right)}_L  \right)|\psi_n \rangle\cr 
    &=& \lim_{\beta \downarrow 0}  \hat \Phi_{LR} \left(\widehat {\left(C_\beta\right)}_L\right) |\psi \rangle = |\psi\rangle, 
\end{eqnarray}
where we pass from the first to the second line using \eqref{eq:phi}, and where the final step uses claim~\ref{thm:iden}.
\end{proof} 

Equation~\eqref{eq:trnorm} then follows from \eqref{eq:Crep`} by applying definition \eqref{eq:trvN} of the trace on positive elements of $\mathcal A_R^{LL}$.

\subsection{Continuity of $\Phi_{LR}$ in the weak operator topology}
\label{sec:con}

The goal of this section is to show that our $*$-homomorphism $\Phi_{LR}$ is continuous with respect to the weak operator topology (imposed on both $\mathcal B(\mathcal H_{LL})$ and $\mathcal B(\mathcal H_{LR})$). In order to do so, we first establish the following intermediate claim.

\begin{theorem}
\label{thm:diag} 
For any
$\ket \gamma\in \mathcal H_{LR}$ and any $a \in \mathcal A_L^{LL}$, we have 
\begin{equation}
    \quad\braket{\gamma|\Phi_{LR}(a)|\gamma}=\braket{\gamma_{LL}|a|\gamma_{LL}},\quad
\end{equation}
where we have defined $|\gamma_{LL} \rangle := \lim_{\beta \downarrow 0}  |\hat \gamma_R| |C_\beta \rangle \in \mathcal H_{LL}$ and where $|\hat \gamma_R|$ is the positive square root of $\hat \gamma^\dagger_R \hat \gamma_R$.
\end{theorem}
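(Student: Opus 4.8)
The plan is to exploit the intertwining relation of Claim~\ref{thm:intw} together with the representation $|\psi\rangle = \lim_{\beta\downarrow 0}\hat\psi_R|C_\beta\rangle$ of Claim~\ref{thm:Crep}. Starting from $\ket\gamma\in\mathcal H_{LR}$, I would write $\ket\gamma = \lim_{\beta\downarrow 0}\hat\gamma_R|C_\beta\rangle$ and hence
\begin{equation}
\braket{\gamma|\Phi_{LR}(a)|\gamma} = \lim_{\beta,\beta'\downarrow 0}\bra{C_{\beta'}}\hat\gamma_R^\dagger\,\Phi_{LR}(a)\,\hat\gamma_R\ket{C_\beta}.
\end{equation}
The key move is to push $\Phi_{LR}(a)$ through $\hat\gamma_R$ using \eqref{eq:com2}, which gives $\Phi_{LR}(a)\hat\gamma_R = \hat\gamma_R\, a$, turning the matrix element into $\bra{C_{\beta'}}\hat\gamma_R^\dagger\hat\gamma_R\, a\ket{C_\beta}$. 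Now $\hat\gamma_R^\dagger\hat\gamma_R\in\mathcal A_R^{LL}$ (this was established in section~\ref{sec:extend} and used for \eqref{eq:trnorm}), so it equals $|\hat\gamma_R|^2$ and, crucially, commutes with $a\in\mathcal A_L^{LL}$. Writing $|\hat\gamma_R|^2 = |\hat\gamma_R|\cdot|\hat\gamma_R|$ and commuting one factor past $a$ yields $\bra{C_{\beta'}}\,|\hat\gamma_R|\, a\, |\hat\gamma_R|\ket{C_\beta}$.

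The next step is to identify the limits. By definition $|\gamma_{LL}\rangle = \lim_{\beta\downarrow 0}|\hat\gamma_R||C_\beta\rangle$, so I would like to conclude $\braket{\gamma|\Phi_{LR}(a)|\gamma} = \braket{\gamma_{LL}|a|\gamma_{LL}}$ directly. The subtlety is that $|\hat\gamma_R|$ is a fixed bounded operator on $\mathcal H_{LL}$ (it does not depend on $\beta$), so $|\hat\gamma_R| a|\hat\gamma_R|\ket{C_\beta}\to |\hat\gamma_R| a|\gamma_{LL}\rangle$ as $\beta\downarrow 0$ by boundedness of $|\hat\gamma_R| a$, and similarly the $\beta'$ limit converts $\bra{C_{\beta'}}|\hat\gamma_R|$ into $\langle\gamma_{LL}|$. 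One should be slightly careful that the double limit is being taken along a sequence (as in Claim~\ref{thm:Crep}, using metrizability of the Hilbert space) so that the interchange of limits is legitimate; since $|\hat\gamma_R|$, $a$ are bounded and the relevant vectors converge in norm, this is routine. Therefore the matrix element collapses to $\langle\gamma_{LL}|a|\gamma_{LL}\rangle$, as desired. One also needs $|\gamma_{LL}\rangle$ to actually exist in $\mathcal H_{LL}$; this follows because $\lim_{\beta\downarrow 0}|C_\beta\rangle$-type limits are controlled via the continuity axiom exactly as in the proof of Claims~\ref{thm:iden} and \ref{thm:Crep}, with $|\hat\gamma_R|$ bounded.

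I expect the main obstacle to be bookkeeping around the limits rather than anything conceptual: making sure that $\hat\gamma_R^\dagger\hat\gamma_R$ really is in $\mathcal A_R^{LL}$ for general $\ket\gamma\in\mathcal H_{LR}$ (not just for $\ket\gamma\in\mathcal D_{LR}$, where it is manifest from the surface-algebra description) — but this is precisely what the extended map $\Psi_R$ of section~\ref{sec:psi} delivers, since strong limits of operators in $\hat A_R^{LL}$ lie in $\mathcal A_R^{LL}$, and $\hat\gamma_R^\dagger\hat\gamma_R = |\hat\gamma_R|^2$ with $|\hat\gamma_R|\in\mathcal A_R^{LL}$ — and checking that the commutant property $[\mathcal A_L^{LL},\mathcal A_R^{LL}]=0$ is being used correctly in passing $|\hat\gamma_R|$ through $a$. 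A secondary point worth stating explicitly is that the answer is independent of which sequence $\ket{\psi_n}\to\ket\gamma$ and which $\beta$-family one uses; this follows from uniqueness of the strong limit defining $\Psi_R(\ket\gamma)$ in Claim~\ref{thm:strong} and uniqueness of the limit in Claim~\ref{thm:Crep}. With these in hand the proof is a short chain of equalities.
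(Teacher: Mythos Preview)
Your overall strategy matches the paper's: use the intertwining relation \eqref{eq:com2} to convert $\Phi_{LR}(a)\hat\gamma_R$ into $\hat\gamma_R\, a$, recognize $\hat\gamma_R^\dagger\hat\gamma_R = |\hat\gamma_R|^2 \in \mathcal A_R^{LL}$, commute one factor of $|\hat\gamma_R|$ past $a$, and take the $\beta\downarrow 0$ limits. The paper organizes the limits slightly differently (inserting $\lim_{\beta\downarrow 0}\Phi_{LR}(\widehat{C_\beta}_L)=\mathds 1$ once rather than using Claim~\ref{thm:Crep} on both sides), but this is cosmetic.

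There is, however, one genuine gap in your justification for the existence of $|\gamma_{LL}\rangle = \lim_{\beta\downarrow 0}|\hat\gamma_R|\,|C_\beta\rangle$. You say this follows ``via the continuity axiom exactly as in the proof of Claims~\ref{thm:iden} and~\ref{thm:Crep}, with $|\hat\gamma_R|$ bounded.'' But that is not the mechanism at work in those claims. Claims~\ref{thm:iden} and~\ref{thm:Crep} succeed because one can rewrite $\hat\psi_R|C_\beta\rangle$ as $\Phi_{LR}(\widehat{C_\beta}_L)|\psi\rangle$, i.e.\ a $\beta$-dependent \emph{operator} acting on a \emph{fixed} state, and then invoke strong convergence of the cylinder operators to $\mathds 1$. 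For $|\hat\gamma_R|\,|C_\beta\rangle$ there is no fixed target state available a priori, and boundedness of $|\hat\gamma_R|$ is not enough: the identity operator is bounded, yet $|C_\beta\rangle$ itself need not converge (its norm $\langle C_\beta|C_\beta\rangle = tr(\widehat{C_{2\beta}})$ is typically divergent as $\beta\downarrow 0$). The paper closes this gap by invoking Corollary~1 of \cite{Colafranceschi:2023urj}: since $tr(|\hat\gamma_R|^2)=\langle\gamma|\gamma\rangle<\infty$ by \eqref{eq:trnorm}, that corollary guarantees convergence of $|\hat\gamma_R|\,|C_\epsilon\rangle$. Once you cite this finite-trace input, your chain of equalities goes through.
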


\begin{proof}
    Let us first rewrite $\braket{\gamma| \Phi_{LR}(a)|\gamma}$ by inserting the identity operator  in the form $\lim_{\beta \downarrow 0}\Phi_{LR}(\widehat{C_\beta}_L)$ established in claim~\ref{thm:iden}.  As noted in the proof of that claim, if $\widehat{C_\beta}_L$ is a cylinder operator on $\mathcal H_{LL}$, then $\hat \Phi_{LR}(\widehat{C_\beta}_L)$ is the corresponding cylinder operator on $\mathcal H_{LR}$.   We thus have 
\begin{equation}
\label{eq:weak}
\braket{\gamma|\Phi_{LR}(a)|\gamma}=
    \lim_{\beta\downarrow 0}\braket{\gamma|\Phi_{LR}(a)  \Phi_{LR}(\widehat{ C_\beta}_L)|\gamma}=    
    \lim_{\beta\downarrow 0}\braket{\gamma|\Phi(a)\hat \gamma_R|C_\beta}=\lim_{\beta\downarrow 0}\braket{\gamma|\hat \gamma_R a|C_\beta}
\end{equation}
where in the second equality we have used equation~\eqref{eq:phi} with $|\psi\rangle=|\gamma\rangle$ and $|x\rangle =|C_\beta\rangle$, and where the third equality used~\eqref{eq:com2} (again with $|\psi\rangle=|\gamma\rangle$).

We now insert another identity operator and perform similar manipulations to write
\begin{equation}
\label{eq:5_3}
    \hat \gamma_R^\dagger \ket \gamma=\lim_{\epsilon\downarrow 0} \hat \gamma_R^\dagger \Phi_{LR}(\widehat{C_\epsilon}_L)\ket \gamma=\lim_{\epsilon\downarrow 0}\hat \gamma_R^\dagger\hat \gamma_R \ket {C_\epsilon}=\lim_{\epsilon\downarrow 0} |\hat \gamma_R|^2 \ket{C_\epsilon}.
\end{equation}
Since $tr(|\hat \gamma_R|^2 )=tr(\hat \gamma_R^\dagger\hat \gamma_R)=\braket{\gamma|\gamma}$ is finite, 
 Corollary 1 of~\cite{Colafranceschi:2023urj} then shows that the limit of $|\hat \gamma_R | |C_\epsilon \rangle$ converges as $\epsilon \downarrow 0$ to a state $|\gamma_{LL}\rangle \in \mathcal H_{LL}$.  Thus \eqref{eq:5_3} yields
 \begin{equation}
     \hat \gamma_R^\dagger \ket \gamma= |\hat \gamma_R| \ |\gamma_{LL}\rangle
 \end{equation}
 and we may evaluate \eqref{eq:weak} by writing
 \begin{equation}
\label{eq:5_4}
    \lim_{\beta\downarrow 0}\braket{\gamma|\hat \gamma_R a |C_\beta}=\lim_{\beta\downarrow 0}\braket{\gamma_{LL}|\ |\hat \gamma_R| a\ |C_\beta}=\lim_{\beta\downarrow 0}\braket{\gamma_{LL}|\ a |\hat \gamma_R|\ |C_\beta}=\braket{\gamma_{LL}| a|\gamma_{LL}},
\end{equation}
where the second step follows by noting that $|\hat \gamma_R|$ lies in the right von Neumann algebra $\mathcal A_R^{LL}$ and so necessarily commutes with $a \in \mathcal A_L^{LL}$.
Combining \eqref{eq:5_4} with \eqref{eq:weak} then completes the desired proof.
\end{proof}

We have now acquired all of the tools we need to prove continuity of the map $\Phi_{LR}$ in the weak operator topology. Recall that convergence in the weak operator topology  of the net of operators $\{\mathcal O_\alpha\}$ on a Hilbert space $\mathcal H$ is equivalent to convergence of the associated nets of matrix elements $\{\langle \phi_1|\mathcal O_{\alpha}|\phi_2\rangle\}$ for all $|\phi_1\rangle, |\phi_2\rangle \in \mathcal H$. In particular, let us consider any state $|\gamma\rangle \in \mathcal H_{LR}$ and the associated state $|\gamma_{LL}\rangle \in \mathcal H_{LL}$ defined as in Claim~\ref{thm:diag}. 
If a net of operators $\{  a_{\alpha}\}\subset \mathcal A^{LL}_L$ converges in the weak operator topology to $b \in \mathcal A^{LL}_L$, then the net of expectation values 
$\{\braket{\gamma_{LL}|a_{\alpha}|\gamma_{LL}}\}$ clearly converges to $\braket{\gamma_{LL}|b|\gamma_{LL}}$.  Claim~\ref{thm:diag} then implies that
the net of expectation values 
$\{\braket{\gamma| \Phi_{LR}(a_{\alpha})|\gamma}\}$ also converges to $\{\braket{\gamma|\Phi_{LR}(b)|\gamma} \}$.

Using the standard construction of general matrix elements from expectation values, this is enough to establish that the net of matrix elements $\{\langle \phi_1|\Phi_{LR}(a_{\alpha})|\phi_2\rangle\}$ also converges to $\langle \phi_1|\Phi_{LR}(b)|\phi_2\rangle.$  In particular, given any two states $\ket{\phi_1}, \ket{\phi_2}$, we can define $\ket \alpha=\ket {\phi_1}+\ket {\phi_2}$ and $\ket \beta=\ket {\phi_2}+i\ket {\phi_1}$ to write (for any operator $a$)
\begin{equation}
    \braket{\phi_1|a|\phi_2}=(\braket{\alpha| a|\alpha}+i\braket{\beta|a|\beta})/2-\braket{\phi_1|a|\phi_1}-\braket{\phi_2|a|\phi_2}.
\end{equation}
Convergence of expectation values of a net of operators thus implies convergence of all matrix elements.  In particular, we see from the above that the net $\{ 
\Phi_{LR}(a_\alpha) \}$ converges to $\Phi_{LR}(b)$ in the weak operator topology.

We thus see that $\Phi_{LR}$ is continuous in the weak operator topology.  This fact can be used to provide an alternate proof that $\Phi_{LR}$ is a $*$-homomorphism directly from the corresponding properties of the map $\hat \Phi_{LR}: \hat A_L^{LL} \rightarrow \hat A_L^{LR}$. And since the von Neumann algebras $\mathcal A_L^{LL}, \mathcal A_L^{LR}$ are the weak-operator-topology closures of $\hat A_L^{LL}, \hat A_L^{LR}$, surjectivity of $\hat \Phi_{LR}: \hat A_L^{LL} \rightarrow \hat A_L^{LR}$ implies 
surjectivity of $\Phi_{LR}:\mathcal A_L^{LL} \rightarrow \mathcal A_L^{LR}$.
As a result, the off-diagonal von Neumann algebra 
$\mathcal A_L^{LR}$ is isomorphic to the quotient $\mathcal A_L^{LL}/\ker(\Phi_{LR})$.
This establishes analogues of the results of section \ref{sec:alg} at the level of the corresponding von Neumann algebras.

Furthermore, since distinct sectors $\mathcal H_{\mathcal B}$ are orthogonal, the above observations achieve our primary goal.  In particular, they show agreement between the von Neumann algebra $\mathcal A_L^{LL} = \mathcal A_L^{B_L \sqcup B_L}$ (defined from the surface algebra $A_L^{B_L}$ by the diagonal sector $\mathcal H_{LL} = \mathcal H_{B_L \sqcup B_L}$) and the von Neumann algebra defined by using the representation of same surface algebra $A_L^{B_L}$ on the larger Hilbert space 
\begin{equation}
\label{eq:sumoverBR}
{\mathbb H}_{B_L}: =    
\oplus_{B_R} \mathcal H_{B_L \sqcup B_R}.
\end{equation} Here the sum on the right is over all possible right boundaries $B_R$ (including the empty set).  This is the largest Hilbert space defined by the Euclidean path integral on which $A_L^{B_L}$  can naturally be said to act.  In this sense the 
von Neumann algebra $\mathcal A_L^{LL} = \mathcal A_L^{B_L \sqcup B_L}$ defined by the diagonal sector $\mathcal H_{LL} = \mathcal H_{B_L \sqcup B_L}$ coincides with the von Neumann algebra acting on $B_L$ defined by the full quantum gravity Hilbert space.

\section{Off-diagonal Central Projections}
\label{sec:proj}

In order to pave the way for a discussion of entropy in section \ref{sec:TrEnt}, we devote this section to developing a better understanding of the off-diagonal central alebras and their relations to one another.
We have already seen in section \ref{sec:con} that the off-diagonal von Neumann algebras $\mathcal A_L^{LR}$ are quotients $\mathcal A_L^{LL}/\ker(\Phi_{LR})$ of the digaonal von Neumann algebra  $\mathcal A_L^{LL}$.  Section \ref{sec:centralproj} will make this more concrete by developing a better understanding of $\ker(\Phi_{LR})$.     Section \ref{sec:univ} 
then utilizes this understanding to 
show that our structure defines a universal central algebra, independent of any choice of boundaries, that can be said to contain all centers $\mathcal Z_{B_1\sqcup B_2} \subset \mathcal A_L^{B_1\sqcup B_2} \cap \mathcal A_R^{B_2 \sqcup B_2}$.  This will in turn help to organize our discussion of entropy in section \ref{sec:TrEnt}.

\subsection{Off-diagonal algebras as central projection of diagonal algebras}
\label{sec:centralproj}

The weak-operator-topology continuity of $\Phi_{LR}$ established in section \ref{sec:con} means that the inverse image of any weak-operator-topology-closed set is weak-operator-topology closed.  This is in particular true of $\ker(\Phi_{LR}) = \Phi_{LR}^{-1}(0)$, since any single point is weak-operator-topology closed.  Furthermore, as usual, the fact that $\Phi_{LR}$ is a homomorphsism implies that $\ker(\Phi_{LR})$ is a two-sided ideal.  We may thus make use of theorem 6.8.8 of e.g.  \cite{KR1}, which states that any weak-operator-topology-closed two-sided ideal in a von Neumann algebra $\mathcal A$ is of the form $\mathcal P \mathcal A =\mathcal A  \mathcal P$ for some central projection $\mathcal P \in \mathcal A$.  We denote the projection corresponding to $\ker(\Phi_{LR}) \subset \mathcal A_L^{LL}$ by $P_{\ker(\Phi_{LR})}$, and we also define $P^\perp_{\ker(\Phi_{LR})} = {\mathds 1} - P_{\ker(\Phi_{LR})}$, so that we have
\begin{eqnarray}
\label{eq:1minPA}
\ker(\Phi_{LR}) &=& P_{\ker(\Phi_{LR})} \mathcal A_L^{LL} \ \  \ {\rm and} \cr
    \mathcal A^{LL}_L/\ker(\Phi) & \simeq& P_{\ker(\Phi_{LR})}^\perp \mathcal A_L^{LL}
    =   \mathcal A_L^{LL} P_{\ker(\Phi_{LR})}^\perp= \mathcal A_L^{LR},
\end{eqnarray}
where the final step used the fact that $\Phi_{LR}$ defines an isomorphism between $P_{\ker(\Phi_{LR})}^\perp \mathcal A_L^{LL}$ and $\mathcal A_L^{LR}$ in order to identify these algebras.

Now, the diagonal von Neumann algebra $\mathcal A_L^{LL}$ admits a central decomposition into a direct sum which, 
due to the trace inequality \eqref{eq:trIn}, is in fact a sum over a discrete set  of type I factors indexed by $\mathcal I_{B_L}$ 
(see section 4.2 of \cite{Colafranceschi:2023urj}).  Each factor $\mathcal A^{LL}_{L,\mu}$ (for $\mu \in \mathcal I_{B_L}$) is associated with a central projection $P_\mu$ that is minimal  within the set of central projections (in the sense that there is no smaller non-trivial central projection) and which is orthogonal to $p_\nu$ when $\nu \neq \mu$. In particular, we have 
\begin{equation}
\label{eq:ALLLdecomp}
    \mathcal A_L^{LL}=\bigoplus_{\mu \in \mathcal I_{B_L}}\mathcal A^{LL}_{L,\mu} \ \ \ {\rm for} \ \ \ \mathcal A^{LL}_{L,\mu}=P_\mu \mathcal A_L^{LL}.
\end{equation}

Note that, since $P_\mu$ is a minimal central projection, we must have either $P_{\ker(\Phi_{LR})}^\perp  P_\mu=0$ or $P_{\ker(\Phi_{LR})} P_\mu=0$ (else one of these would be a smaller central projection). The former case requires $\Phi_{LR}(P_\mu)=0$, while in the latter case  we have $P_{\ker(\Phi_{LR})}^\perp P_\mu=P_\mu \neq 0$ so that 
$\Phi_{LR}(P_\mu)$ is a non-trivial minimal central projection in $\mathcal A^{LR}_L$.
We may thus write 
\begin{equation}
\label{eq:1minP}
    P_{\ker(\Phi_{LR})}^\perp =\bigoplus_{\mu\in \mathcal I_{B_L}} \chi^{LR}_{L,\mu} P_\mu,\quad 
\end{equation}
where $\chi^{LR}_{L,\mu} = 0$ when $\Phi_{LR}(P_\mu)=0$ and $\chi^{LR}_{L,\mu} = 1$ when 
$\Phi_{LR}(P_\mu)\neq 0$.  Furthermore, \eqref{eq:1minP} immediately leads to a corresponding decomposition of the algebras \eqref{eq:1minPA}.

As reviewed in section \ref{sec:rev}, the diagonal Hilbert space decomposes as a corresponding direct sum 
\begin{equation}
    \mathcal H_{B_L\sqcup B_L}=\bigoplus_{\mu \in \mathcal I_{B_L}} \mathcal H^{LL}_\mu
\ \ \ {\rm with} \ \ \ \mathcal H_\mu^{LL}=P_\mu \mathcal H_{LL}.
\end{equation} 
Furthermore, each $\mathcal H^{LL}_\mu$ admits a factorization 
\begin{equation}
    \mathcal H^{LL}_\mu=\mathcal H_{L,\mu}^{LL}\otimes \mathcal H^{LL}_{R,\mu},
\end{equation}
such that $\mathcal A^{LL}_{L,\mu}$ acts as $\mathcal B(\mathcal H_{L,\mu}^{LL})\otimes \mathds 1_\mu^R$, where $\mathds 1_\mu^R$ is the identity on the right factor $\mathcal H^{LL}_{R,\mu}$.
Since applying $\Phi_{LR}$ to \eqref{eq:ALLLdecomp} is equivalent to multiplying by  $P_{\ker(\Phi_{LR})}^\perp$, the relation \eqref{eq:1minP} yields a corresponding decomposition of the off-diagonal algebra
\begin{equation}
\label{eq:LRasum}
    \mathcal A_L^{LR} =\bigoplus_{\mu\in \mathcal I_{B_L}} \chi^{LR}_{L,\mu} \Phi_{LR}(\mathcal A^{LL}_{L,\mu})\ \ \ {\rm with}  \ \ \ \Phi_{LR}(\mathcal A^{LL}_{L,\mu})=\Phi_{LR}(P_\mu) \mathcal A^{LR}_L,
\end{equation}
and also for the off-diagonal Hilbert space
\begin{equation}
\label{eq:sum}
    \mathcal H_{LR}=\bigoplus_{\mu\in\mathcal I_{B_L} }\chi^{LR}_{L,\mu} \Phi_{LR}(P_\mu)\mathcal H_{LR}.
\end{equation}
In both \eqref{eq:LRasum} and \eqref{eq:sum} the factor of $\chi^{LR}_{L,\mu}$ is redundant (since e.g. $\chi^{LR}_{L,\mu} \Phi_{LR}(\mathcal A^{LL}_{L,\mu})=\Phi_{LR}(\mathcal A^{LL}_{L,\mu})$) but serves to emphasize which terms are non-zero.

We also emphasize that $\Phi_{LR}$ is an {\it isomorphism} when acting on any factor $\mathcal A^{LL}_{L,\mu}$ that it does not annihilate.  Thus each $\mathcal A^{LR}_{L,\mu} : = \Phi_{LR}(\mathcal A^{LL}_{L,\mu})$ is a type-I factor and the Hilbert space $\mathcal H^{LR}_\mu := \Phi_{LR}(P_\mu)\mathcal H_{LR}$ must again factorize as $\mathcal H^{LR}_{\mu}= {\mathcal  H}^{LR}_{L,\mu}\otimes \mathcal H^{LR}_{R, \mu}$ for some ${\mathcal  H}^{LR}_{L,\mu}$, $\mathcal H^{LR}_{R, \mu}$ such that 
$\mathcal A^{LR}_{L,\mu} : =\Phi_{LR}(\mathcal A^{LL}_{L,\mu})$ acts as $\mathcal B({\mathcal H}_{L,\mu}^{LR})\otimes \mathds 1_{\mu}^{R}$.  Up to the choice of an arbitrary overall phase, the isomorphism between $\mathcal A^{LL}_{L,\mu}$ and $\mathcal A^{LR}_{L,\mu}$ then also defines an isomorphism betwen the left Hilbert-space factors $\mathcal H^{LL}_{L,\mu}$ and ${\mathcal H}^{LR}_{L, \mu}$.  We will thus henceforth write $\mathcal H^{LL}_{L,\mu} = {\mathcal H}^{LR}_{L, \mu}$

The algebra $\mathcal B({\mathcal H}_{R,\mu}^{LR})$ of bounded operators on the right factor can be similarly associated with a type-I von Neumann factor given by a central projection of the right diagonal von Neumann algebra $\mathcal A^{RR}_R =\mathcal A^{B_R \sqcup B_R}_R$.  To do so, we recall that a corresponding result for diagonal Hilbert space sectors $\mathcal H_{B\sqcup B}$ was derived in~\cite{Colafranceschi:2023urj} by using the commutation theorem~\cite{Rieffel:1976} for Hilbert semi-birigged spaces derived in to show that the left and right von Neumann algebras are commutants\footnote{Though there was also a more direct proof in the diagonal case studied in \cite{Colafranceschi:2023urj}.}.  One may then check that the off-diagonal dense subspace $X=\mathcal D_{LR}$ equipped with  representations $C=\hat A^{LR}_L$ and $D=\hat A^{LR}_R$  also satisfies the four axioms of Hilbert-semi-birigged spaces from \cite{Rieffel:1976}, and where the notation $X,C,D$ comes from that reference.  One may also check that the so-called coupling condition from \cite{Rieffel:1976} is satisfied.  See appendix \ref{app:Riefel} for further details. 

As a result, theorem 1.3 of \cite{Rieffel:1976} implies that the von Neumann algebras $\mathcal A^{LR}_L$ and $\mathcal A^{LR}_R$ generated by $A^{LR}_L$ and $A^{LR}_R$ are commutants on $\mathcal H_{LR}$.  This fact has two immediate implications.  The first is that the central projections of $\mathcal A^{LR}_L$ are exactly the central projections of $\mathcal A^{LR}_R$.  In particular, the minimal central projections of  
$\mathcal A^{LR}_R$ must again be $\Phi_{LR}(P_\mu)$, so that the right algebra admits a decomposition of the form \eqref{eq:LRasum}:
\begin{equation}
\label{eq:LRasumR}
    \mathcal A_R^{LR} =\bigoplus_{\mu\in \mathcal I_{B_L}} \chi^{LR}_{L,\mu} \mathcal A^{LR}_{R,\mu},
\end{equation}
where for $\chi^{LR}_{L,\mu} =1$ the corresponding $A^{LR}_{R,\mu}=\Phi_{LR}(P_\mu)\mathcal A^{LR}_R$  is a type-I factor.

The second implication is that (again for  $\chi^{LR}_{L,\mu} =1$) each $A^{LR}_{R,\mu}$ must act as 
$\mathds 1_{\mu}^{L} \otimes B({\mathcal H}_{R,\mu}^{LR})$ on the Hilbert space $\mathcal H^{LR}_{\mu}= {\mathcal  H}^{LR}_{L,\mu}\otimes \mathcal H^{LR}_{R, \mu}$.  
We have thus arrived at an off-diagonal analogue of the structure derived for diagonal sectors in section 4 of \cite{Colafranceschi:2023urj}.  In particular, if we define $I_{B_L\sqcup B_R} \subset I_{B_L}$ as the set on which $\chi^{LR}_{L,\mu}=1$, we may write
\begin{equation}
\label{eq:fac}
    \mathcal H_{LR}=\bigoplus_{\mu\in\mathcal I_{B_L\sqcup B_R}} \mathcal H_{ \mu}^{LR} = \bigoplus_{\mu\in\mathcal I_{B_L\sqcup B_R}} \mathcal H^{LR}_{L,\mu}\otimes \mathcal H_{R,\mu}^{LR}.
\end{equation}
Now, as described above, we may use the isomorphism $\Phi_{LR}: \mathcal A^{LL}_{L,\mu} \rightarrow \mathcal A^{LR}_{L,\mu}$ for $\mu \in \mathcal I_{B_L\sqcup B_R}$ to write $\mathcal H_{L,\mu}^{LR} = \mathcal H_{L,\mu}^{LL}$ for such $\mu$.  But since all of the above discussion of the left von Neumann algebras $\mathcal A^{LL}_L$ can be repeated analogously for the right von Neumann algebras $\mathcal A^{RR}_R$, there must be another surjective $*$-homomorphism  from $\mathcal A_R^{RR} \rightarrow \mathcal A_R^{LR}$. We will call this new homomorphism $\Phi^R_{LR}$ and, for clarity, we will sometimes also write $\Phi^L_{LR}$ for the previous left map $\Phi_{LR}$.   Furthermore, for $\mu \in \mathcal I_{B_L \sqcup B_R}$, the right homomorphism $\Phi^R_{LR}$ must define an isomorphism between $\mathcal A_{R, \mu}^{LR}$ and some type-I von Neumann factor $\mathcal A_{R, \mu}^{RR}$ in  $\mathcal A_{R}^{RR}$.  We may then also write $\mathcal H^{LR}_{R,\mu} = \mathcal H^{RR}_{R,\mu}$, and thus
\begin{equation}
\label{eq:fac2}
    \mathcal H_{LR}= \bigoplus_{\mu\in\mathcal I_{B_L\sqcup B_R}} \mathcal H^{LR}_{L,\mu}\otimes \mathcal H_{R,\mu}^{LR} = \bigoplus_{\mu\in\mathcal I_{B_L\sqcup B_R}} \mathcal H^{LL}_{L,\mu}\otimes \mathcal H_{R,\mu}^{RR};
\end{equation}
i.e., the decomposition of the off-diagonal sectors involves the same left and right Hilbert spaces as the decomposition of the digaonal factors.

\subsection{A universal central algebra for $B$}
\label{sec:univ}

Our result \eqref{eq:fac2} imposes a compatibility constraint on our von Neumann algebras. Since the left algebra $\mathcal A_L^{B_1\sqcup B_2}$ and the right algebra $\mathcal A_R^{B_1\sqcup B_2}$ associated with two arbitrary boundaries $B_1$ and $B_2$ must be commutants, they share a common center $\mathcal Z^{B_1\sqcup B_2}$. Furthermore, we saw that the  center of such an off-diagonal algebra is identified with  central projections of the centers of either diagonal algebra $\mathcal A_L^{B_1 \sqcup B_1}$, $\mathcal A_R^{B_2 \sqcup B_2}$. We thus have the following relations:
\begin{equation}
\label{eq:compat}
    \mathcal Z^{B_1\sqcup B_2}\simeq  P^{B_1\sqcup B_2}_L \mathcal Z^{B_1\sqcup B_1}\simeq  P^{B_1\sqcup B_2}_R \mathcal Z^{B_2\sqcup B_2}.
\end{equation}

Let us now define a universal Abelian algebra
\begin{equation}
\label{eq:center}
\Tilde {\mathcal Z}_{univ}  =\bigoplus_B \mathcal Z^{B\sqcup B}
\end{equation}
 as the direct sum of all diagonal centers.  We may again regard this as a von Neumann algebra by taking the elements to act on the direct sum Hilbert space
 $(\bigoplus_B \mathcal H_{B\sqcup B})$.
 
As it stands, the algebra \eqref{eq:center} is a purely formal construction whose universality comes by fiat from our choice to sum over all source-manifolds $B$.  However, the result~\eqref{eq:compat} turns out to define an interesting equivalence relation $\sim$ on elements of $\Tilde{\mathcal Z}^{univ}$ so that the quotient ${\mathcal Z}^{univ}=\Tilde{\mathcal Z}^{univ}/\sim$ provides an algebraic encoding of the above compatibility condition. To see this, we begin by defining a relation $\sim$ that relates centrally-minimal projections in different centers:

\begin{definition}
Consider projections $P_1 \in \mathcal Z^{B_1\sqcup B_1}$, $P_2 \in \mathcal Z^{B_2\sqcup B_2}$ that are each minimal in their respective central algebras.  We will write $P_1 \sim P_2$ when there is some boundary $B$,  a projection $P\in \mathcal Z^{B\sqcup B}$, and non-zero states $|\phi\rangle \in {\cal H}_{B_1 \sqcup B}$, $|\psi\rangle \in {\cal H}_{B_2 \sqcup B}$  that satisfy
\begin{eqnarray}
\label{eq:phipsi}
\Phi^L_{B_1 B} (P_1) |\phi \rangle &=& \Phi^R_{B_1 B} (P)|\phi \rangle  \neq 0 \ \ \ {\rm and} \cr
\Phi^L_{B_2 B} (P_2) |\psi \rangle &=& \Phi^R_{B_2 B} (P)|\psi \rangle  \neq 0.
\end{eqnarray}
\end{definition}
Here $\Phi^L_{BB_2}, \Phi^L_{BB_1}$ are just $\Phi^L_{LR}$ for $B_L=B$ and $B_R= B_2, B_1$, and we will use analogous notation below. 

This definition is manifestly symmetric, meaning that $P_1 \sim P_2$ is equivalent to $P_2 \sim P_1$.  It also satisfies the reflexive property $P_1 \sim P_1$.  This may be seen by setting $B=B_2 = B_1$, $P = P_2 =P_1$, and $|\psi\rangle = |\phi\rangle = P_1 |C_\beta\rangle \in \mathcal H_{B_1 \sqcup B_1}$ 
for some $\beta >0$.  Since it was shown in \cite{Colafranceschi:2023urj} that $P_1 |C_\beta\rangle$ cannot vanish for any central projection $P_1 \in \mathcal Z^{B_1 \sqcup B_1}$, to establish $P_1 \sim P_1$ in this context we need only recall that any such $P_1$ can be interpreted as a member of both $\mathcal A_L^{B_1 \sqcup B_1}$ and $\mathcal A_R^{B_1 \sqcup B_1}$ and that $\Phi^L_{B_1 B_1}(P_1) = P_1 = \Phi^R_{B_1 B_1}(P_1)$.

Showing that $\sim$ is an equivalence relation thus requires only that we establish transitivity, which means that $P_2 \sim P_3$ whenever $P_1 \sim P_2$ and $P_1 \sim P_3$.  The conditions   $P_1 \sim P_2$ and $P_1 \sim P_3$ mean that there are boundaries $B,B'$, projections $P\in \mathcal Z^{B\sqcup B}$, $P'\in \mathcal Z^{B'\sqcup B'}$, and non-zero states $\phi_{B_1B}, \phi_{B_2B}, \phi_{B_1B'}, \phi_{B_3 B'}$ (which respectively lie in ${\cal H}_{B_1 \sqcup B}$, ${\cal H}_{B_2 \sqcup B}$, ${\cal H}_{B_1 \sqcup B'}$, ${\cal H}_{B_3 \sqcup B'}$) such that we have 
\begin{eqnarray}
\label{eq:phi1B2B}
\Phi^L_{B_1 B} (P_1) |\phi_{B_1B} \rangle &=& \Phi^R_{B_1 B} (P)|\phi_{B_1B} \rangle  \neq 0, \ \ \  \cr
\Phi^L_{B_2 B} (P_2) |\phi_{B_2B} \rangle &=& \Phi^R_{B_2 B} (P)|\phi_{B_2B} \rangle  \neq 0,
\cr
\Phi^L_{B_1 B'} (P_1) |\phi_{B_1B'} \rangle &=& \Phi^R_{B_1 B'} (P')|\phi_{B_1B'} \rangle  \neq 0, \ \ \ \rm{and}
\cr
\Phi^L_{B_3 B'} (P_3) |\phi_{B_3 B'} \rangle &=& \Phi^R_{B_3 B'} (P')|\phi_{B_3 B'} \rangle  \neq 0.
\end{eqnarray}

Given the existence of $|\phi_{B_3B'} \rangle$, we can show $P_1\sim P_3$ by demonstrating the existence of a non-zero $|\phi_{B_2 B'} \rangle \in {\cal H}_{B_2 \sqcup B'}$ such that
\begin{equation}
\Phi^L_{B_2 B'} (P_2) |\phi_{B_2 B'} \rangle = 
\Phi^R_{B_2 B'} (P')|\phi_{B_2 B'} \rangle  \neq 0.
\end{equation}

We will build such a $|\phi_{B_2 B'} \rangle$ by noting that
the construction of our map $\Psi_R: \mathcal H_{LR} \rightarrow \mathcal B(\mathcal H_{LL}, \mathcal H_{LR})$ readily generalizes to define a map  $\Psi_R^{B, B_1 \rightarrow B_2}: \mathcal H_{B_1 \sqcup B_2} \rightarrow \mathcal B(\mathcal H_{B \sqcup B_1}, \mathcal H_{B \sqcup B_2})$ involving arbitrary Hilbert spaces $B, B_1, B_2$.  We will again use the simplified notation $\hat a_R : = 
\Psi_R^{B, B_1 \rightarrow B_2}(|a\rangle).$
As with the original $\Psi_R$, the full $\Psi_R^{B, B_1 \rightarrow B_2}$ is first defined as a map
from 
$\mathcal D_{B_1 \sqcup B_2}$ to $\mathcal B(\mathcal D_{B \sqcup B_1}, \mathcal D_{B \sqcup B_2})$ where it is defined by sewing surfaces as in the definition of $\hat a_R$ in figure \ref{fig:glue}.  As described in appendix~\ref{app:Psi}, in the same manner that it was demonstrated for the original $\Psi_R$, each $\hat a_R$ defined by
the more general $\Psi_R^{B, B_1 \rightarrow B_2}$
is a bounded operator and, moreover, the map  $\Psi_R^{B, B_1 \rightarrow B_2}$ itself
can be shown to be continuous with respect to the Hilbert space topology on $\mathcal H_{B_1 \sqcup B_2}$  and the strong operator topology on $B(\mathcal H_{B \sqcup B_1}, \mathcal H_{B \sqcup B_2})$. The full map $\Psi_R^{B, B_1 \rightarrow B_2}$ is then given by the unique continuous extension to the full Hilbert space $\mathcal H_{B_1 \sqcup B_2}$.   Appendix~\ref{app:Psi} also derives two intertwining relations \eqref{eq:Bint}, \eqref{eq:fullPsiadj}, which we restate here for the convenience of the reader.  The first of these is that 
for all $d \in \mathcal A_L^{B \sqcup B}$ we have 
\begin{equation}
\label{eq:Bint2}
\Phi_{B B_2}^L(d) \hat a_R =  \hat a_R  \Phi^L_{B B_1}(d). 
\end{equation}
 The second is the relation
\begin{equation}
\label{eq:fullPsiadj2}
    {\left(a^\star\right)}_R: =\hat a_R^\dagger,
\end{equation}
where (as described in appendix~\ref{app:Psi}) the operation $\star$ has been extended from $\mathcal D_{B_1 \sqcup B_2}$ to the entire Hilbert space by continuity.  

There is also an analogous left map 
$\Psi_L^{B_2 \rightarrow B_1, B}: \mathcal H_{B_1 \sqcup B_2} \rightarrow \mathcal B(\mathcal H_{B_2 \sqcup B}, \mathcal H_{B_1 \sqcup B})$. Writing, 
$\hat a_L : = 
\Psi_L^{B_2 \rightarrow B_1,B}(|a\rangle)$, this map satisfies 
\begin{equation}
\label{eq:Bint2L}
\Phi^R_{B_1 B}(d) \hat a_L =  \hat a_L  \Phi^R_{B_2 B}(d) 
\end{equation}
 for all $d \in \mathcal A_R^{B \sqcup B}$. It also satisfies
\begin{equation}
\label{eq:fullPsiadj2L}
    {\left(a^\star\right)}_L: =\hat a_L^\dagger.
\end{equation}

By including the projections $\Phi^L_{B_2 B}(P_2), \Phi^R_{B_1 B'}(P')$ in their definitions,  we can take $|\phi_{B_2 B}\rangle$ and $|\phi_{B_1 B'}\rangle$ to satisfy
\begin{equation}
\label{eq:projconds}
\Phi^L_{B_2 B}(P_2) |\phi_{B_2 B}\rangle = |\phi_{B_2 B}\rangle \neq 0, \ \ \ 
\Phi^R_{B_1 B'}(P') |\phi_{B_1 B'}\rangle = |\phi_{B_1 B'}\rangle \neq 0.
\end{equation}
If we now choose any
$b_R \in \mathcal A^R_{B_1 B_1}, c_R \in \mathcal A^R_{BB}$,
the above structures allow us to define a state 
\begin{equation}
\label{eq:phiB2Bprime}
    |\phi_{B_2 B'} \rangle =  \widehat{\left(\phi_{B_1 B'}\right)}_R \ \Phi^R_{B_2B_1}(b_R) \ \widehat{\left(\phi_{B_1 B}\right)}_R^\dagger \Phi^R_{B_2 B}(c_R) \ |\phi_{B_2 B} \rangle,
\end{equation}
If all of the relevant states and operators above were defined directly by rimmed surfaces we could write
\begin{eqnarray}
\label{eq:phiB2Bprime1}
    |\phi_{B_2 B'} \rangle &=&  \widehat{\left(\phi_{B_1 B'}\right)}_R \ \Phi^R_{B_2B_1}(\hat b_R) \ \widehat{\left(\phi_{B_1 B}\right)}_R^\dagger \Phi^R_{B_2 B}(
    \hat c_R) \ |\phi_{B_2 B} \rangle \cr
= |  \phi_{B_2 B}\ c \ \phi_{B_1 B}^\star \ b \  \phi_{B_1 B'} \rangle
&=&  \widehat{\left(\phi_{B_2 B}\right)}_L \ \Phi^L_{BB'}(\hat c_L) \ \widehat{\left(\phi_{B_1 B}\right)}_L^\dagger \Phi^L_{B_1 B'}(\hat b_L) \  |\phi_{B_1 B'} \rangle,
\end{eqnarray}
As described in appendix \ref{app:Ldef}, this observation generalizes to arbitrary states and operators as above when written in the form
\begin{eqnarray}
\label{eq:phiB2Bprime2}
    |\phi_{B_2 B'} \rangle &=&  \widehat{\left(\phi_{B_1 B'}\right)}_R \ \Phi^R_{B_2B_1}(b_R) \ \widehat{\left(\phi_{B_1 B}\right)}_R^\dagger \Phi^R_{B_2 B}(c_R) \ |\phi_{B_2 B} \rangle \cr
&=&  \widehat{\left(\phi_{B_2 B}\right)}_L \ \Phi^L_{BB'}(c_L) \ \widehat{\left(\phi_{B_1 B}\right)}_L^\dagger \Phi^L_{B_1 B'}(b_L) \  |\phi_{B_1 B'} \rangle,
\end{eqnarray}
where $c_L,b_L$ are defined from $b_R, c_R$ as described in appendix \ref{app:Ldef}.

We may now compute
\begin{eqnarray}
    \Phi^L_{B_2B'}(P_2)    |\phi_{B_2 B'} \rangle &=&  \Phi^L_{B_2B'}(P_2) \ \widehat{\phi_{B_1 B'}}_R \ \Phi^R_{B_2B'}(b_R) \widehat{\phi_{B_1 B}}_R^\dagger\  \Phi^R_{B_2 B}(c_R)
  \ |\phi_{B_2 B} \rangle \cr &=& \widehat{\phi_{B_1 B'}}_R \ \Phi^R_{B_2B'}(b_R) \ \widehat{\phi_{B_1 B}}_R^\dagger \ \Phi^R_{B_2 B}(c_R)  \
 \Phi^L_{B_2B}(P_2)  |\phi_{B_2 B} \rangle \cr
 &=&  |\phi_{B_2 B'} \rangle,
\end{eqnarray}
where we pass from the first to the 2nd line using intertwining relations of the form \eqref{eq:Bint2} and commutivity of left- and right-acting operators, and where the final step uses \eqref{eq:projconds}. 
We may also similarly write 
\begin{eqnarray}
    \Phi^R_{B_2B'}(P')    |\phi_{B_2 B'} \rangle &=&  \Phi^R_{B_2B'}(P')    
    \widehat{\left(\phi_{B_2 B}\right)}_L \ \Phi^L_{BB'}(c_L) \ \widehat{\left(\phi_{B_1 B}\right)}_L^\dagger \Phi^L_{B_1 B'}(b_L) \    |\phi_{B_1 B'} \rangle \cr  &=& 
    \widehat{\left(\phi_{B_2 B}\right)}_L \ \Phi^L_{BB'}(c_L) \ \widehat{\left(\phi_{B_1 B}\right)}_L^\dagger \Phi^L_{B_1 B'}(b_L) \    \Phi^R_{B_1B'}(P')  |\phi_{B_1 B'} \rangle\cr
 &=&  |\phi_{B_2 B'} \rangle.
\end{eqnarray}

This will establish transitivity so long as we also show that we can choose $b_R,c_R$ such that $|\phi_{B_2 B'} \rangle \neq 0$.  We can do so by using the following two results (where the first will act as a Lemma that will be useful in proving the second.

\begin{theorem}
\label{thm:nonzero}
For non-zero $|\phi\rangle, |\kappa\rangle \in {\cal H}_{B_1 \sqcup B_2}$, the states
$\hat \phi_R^\dagger| \phi\rangle \in {\cal H}_{B_1 \sqcup B_1}$ and
$\hat \kappa_R |\kappa^\star\rangle \in {\cal H}_{B_2 \sqcup B_2}$ cannot vanish. Here $\hat \phi_R : = \Psi_R^{B_1, B_1 \rightarrow B_2}(|\phi\rangle)$ and
$\hat \kappa_R : = \Psi_R^{B_2, B_1 \rightarrow B_2}(|\kappa\rangle)$.
\end{theorem}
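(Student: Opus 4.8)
The plan is to derive both non-vanishing statements from a single elementary fact: if $T\colon\mathcal H_1\to\mathcal H_2$ is a bounded operator and $v\in\overline{\mathrm{Ran}(T)}$ is non-zero, then $T^\dagger v\neq 0$. Indeed $\ker(T^\dagger)=\mathrm{Ran}(T)^\perp$, so $\ker(T^\dagger)\cap\overline{\mathrm{Ran}(T)}=\{0\}$. All of the substantive input is then supplied by Claim~\ref{thm:Crep}, which states that any state is the limit as $\beta\downarrow 0$ of its associated $\Psi_R$-operator applied to the cylinders $|C_\beta\rangle$ --- i.e.\ that the relevant vector lies in the closure of the range of the relevant operator.

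First I would dispose of the statement about $\hat\phi_R^\dagger|\phi\rangle$. I would take $T=\hat\phi_R\colon\mathcal H_{B_1\sqcup B_1}\to\mathcal H_{B_1\sqcup B_2}$ and $v=|\phi\rangle$. Since the generalized map $\Psi_R^{B,B_1\to B_2}$ reduces to the original $\Psi_R$ precisely when the auxiliary boundary $B$ equals the left boundary, $\hat\phi_R=\Psi_R^{B_1,B_1\to B_2}(|\phi\rangle)$ is exactly the operator $\Psi_R(|\phi\rangle)$ to which Claim~\ref{thm:Crep} applies; that claim then gives $|\phi\rangle=\lim_{\beta\downarrow 0}\hat\phi_R|C_\beta\rangle\in\overline{\mathrm{Ran}(\hat\phi_R)}$. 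As $|\phi\rangle\neq 0$, the elementary fact above forces $\hat\phi_R^\dagger|\phi\rangle\neq 0$.

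For the statement about $\hat\kappa_R|\kappa^\star\rangle$ I would instead set $T=\hat\kappa_R^\dagger$ and $v=|\kappa^\star\rangle\in\mathcal H_{B_2\sqcup B_1}$, so that the vector of interest is $T^\dagger v=\hat\kappa_R|\kappa^\star\rangle$. The key input is the adjoint relation~\eqref{eq:fullPsiadj2}: applied with auxiliary boundary $B_2$, it identifies $\hat\kappa_R^\dagger$ with $\widehat{(\kappa^\star)}_R:=\Psi_R^{B_2,B_2\to B_1}(|\kappa^\star\rangle)$, and this last map again has auxiliary boundary equal to its left boundary, hence is an instance of the original $\Psi_R$. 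Applying Claim~\ref{thm:Crep} to the state $|\kappa^\star\rangle$ then gives $|\kappa^\star\rangle=\lim_{\beta\downarrow 0}\hat\kappa_R^\dagger|C_\beta\rangle\in\overline{\mathrm{Ran}(\hat\kappa_R^\dagger)}$. Since $\star$ is norm-preserving (a short consequence of~\eqref{eq:trprod} together with cyclicity of $tr$), $|\kappa^\star\rangle\neq 0$, and the elementary fact above yields $\hat\kappa_R|\kappa^\star\rangle=T^\dagger v\neq 0$.

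I do not expect a genuine conceptual obstacle; the care required is bookkeeping. One must check that Claim~\ref{thm:Crep} really does apply to the operators in question --- which it does, because each instance we invoke has its auxiliary boundary equal to the relevant left boundary and therefore coincides with the original $\Psi_R$ studied in sections~\ref{sec:psi}--\ref{sec:con} --- and one must track the domains and codomains of $\hat\phi_R$, $\hat\kappa_R$ and their adjoints through~\eqref{eq:fullPsiadj2} so that the identifications are made with the correct auxiliary boundaries.
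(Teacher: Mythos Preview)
Your argument is correct and takes a genuinely different route from the paper's. Both proofs rest on Claim~\ref{thm:Crep} to place the relevant vector in the closure of the range of the relevant operator, but from there they diverge. The paper computes the norm of $\hat\phi_R^\dagger|\phi\rangle$ directly as $tr\bigl((\hat\phi_R^\dagger\hat\phi_R)^2\bigr)$ via the definition~\eqref{eq:trvN}, and then invokes faithfulness of $tr$ (together with the observation that a non-zero self-adjoint operator has non-zero square) to conclude this trace is non-zero. You instead use only the standard Hilbert-space identity $\ker(T^\dagger)=\mathrm{Ran}(T)^\perp$, which immediately gives $T^\dagger v\neq 0$ once $v$ is known to lie in $\overline{\mathrm{Ran}(T)}$. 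Your approach is more elementary in that it bypasses the trace machinery and the faithfulness result imported from~\cite{Colafranceschi:2023urj}; the paper's approach, on the other hand, yields the explicit formula $\|\hat\phi_R^\dagger|\phi\rangle\|^2=tr\bigl((\hat\phi_R^\dagger\hat\phi_R)^2\bigr)$ as a by-product, which is in the spirit of the computations used elsewhere (e.g.\ in the proof of Claim~\ref{thm:trans1}). Your bookkeeping of domains, codomains, and auxiliary boundaries through~\eqref{eq:fullPsiadj2} is accurate.
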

\begin{proof}
Let us first use \eqref{eq:Crep`} and boundedness of
$\hat \phi_R$  (and thus of its adjoint) to
rewrite the first  state in the form
\begin{equation}
\label{eq:psiC}
\hat \phi_R^\dagger|\phi\rangle
= \lim_{\beta \downarrow 0} \hat \phi_R^\dagger \hat \phi_R  |C_\beta\rangle.
\end{equation}
The definition  \eqref{eq:trvN} of the trace $tr$ then shows that  the norm of our state is $tr\left((\hat\phi_R^\dagger \hat\phi_R)^2\right)$.  But $\hat\phi_R^\dagger \hat\phi_R$ is non-zero since deleting
$\hat \phi_R^\dagger$ from
\eqref{eq:psiC} and repeating the same argument gives
$tr(\hat\phi_R^\dagger \hat\phi_R)= \langle \phi |\phi\rangle \neq 0$. Since
$\hat\phi_R^\dagger \hat\phi_R$ is manifestly self-adjoint, the operator $(\hat\phi_R^\dagger \hat\phi_R)^2$ is again non-zero and faithfulness of $tr$ as established in \cite{Colafranceschi:2023urj} means that $tr[(\hat\phi_R^\dagger \hat\phi_R)^2] \neq 0.$  Thus our state has non-zero norm and cannot vanish. The analogous argument then also shows that $\hat \kappa_R |\kappa^\star\rangle \in {\cal H}_{B_2 \sqcup B_2}$ is non-zero. 
\end{proof}

\begin{theorem}
\label{thm:trans1}
Given a centrally-minimal projection $P_1 \in \mathcal A_L^{B_1 \sqcup B_1}$ and non-zero states $|\phi\rangle \in {\cal H}_{B_1 \sqcup B_2}$ and $|\kappa\rangle \in {\cal H}_{B_1 \sqcup B_3}$ that satisfy
\begin{equation}
\label{eq:cpcs}
\Phi^L_{B_1 B_2}(P_1) |\phi\rangle =  |\phi\rangle\quad
\Phi^L_{B_1 B_3}(P_1) |\kappa\rangle =  |\kappa\rangle, 
\end{equation}
there is some $a_R$ in the right von Neumann algebra $\mathcal A_R^{B_2 \sqcup B_1}$  for which the state 
$|\gamma\rangle: = \hat \kappa_R a_R |\phi^\star \rangle \in  
{\cal H}_{B_2 \sqcup B_3}$ is non-zero, where as usual $\hat \kappa_R = \Psi_R^{B_2, B_1\to B_3}(|\kappa\rangle)$
\end{theorem}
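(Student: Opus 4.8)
The plan is to prove the statement directly rather than by contradiction: produce an explicit $a_R$ and verify $\|\hat\kappa_R a_R|\phi^\star\rangle\|>0$ by reducing the computation to a single type-$I$ factor of the \emph{diagonal} sector $\mathcal H_{B_1\sqcup B_1}$, where nothing can accidentally vanish. Since surface operators are strongly dense in $\mathcal A_R^{B_2\sqcup B_1}$ and $\hat\kappa_R$ is bounded, it suffices to take $a_R=\hat\rho_R:=\Psi_R^{B_2,B_1\to B_1}(|\rho\rangle)$ for a suitable $\rho\in\mathcal D_{B_1\sqcup B_1}$, so that $|\gamma\rangle=\hat\kappa_R\hat\rho_R|\phi^\star\rangle$. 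First I would rewrite this norm: using $\hat\rho_R|\phi^\star\rangle=\widehat{(\phi^\star)}_L|\rho\rangle$ with $\widehat{(\phi^\star)}_L:=\Psi_L^{B_1\to B_2,B_1}(|\phi^\star\rangle)$, the fact (checked by continuity from the surface case) that $\hat\kappa_R^\dagger\hat\kappa_R=\Phi^R_{B_2B_1}(K)$ for $K:=\big(\Psi_R^{B_1,B_1\to B_3}(|\kappa\rangle)\big)^\dagger\,\Psi_R^{B_1,B_1\to B_3}(|\kappa\rangle)\in\mathcal A_R^{B_1\sqcup B_1}$, and the intertwining relation \eqref{eq:Bint2L} (which gives $\Phi^R_{B_2B_1}(d)\,\widehat{(\phi^\star)}_L=\widehat{(\phi^\star)}_L\,d$ for $d\in\mathcal A_R^{B_1\sqcup B_1}$), one gets
\[\|\gamma\|^2=\langle\rho|\,\Lambda\,K\,|\rho\rangle,\qquad \Lambda:=\widehat{(\phi^\star)}_L^{\,\dagger}\,\widehat{(\phi^\star)}_L\in\mathcal A_L^{B_1\sqcup B_1}.\]
Both $\Lambda$ and $K$ are positive and, lying respectively in $\mathcal A_L^{B_1\sqcup B_1}$ and $\mathcal A_R^{B_1\sqcup B_1}$, they commute, so $\Lambda K\ge0$.

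It then remains to show $\Lambda K\neq0$; then any $|\rho\rangle\in\mathcal D_{B_1\sqcup B_1}$ outside the closed proper subspace $\ker\big((\Lambda K)^{1/2}\big)$ gives $\|\gamma\|^2>0$. For this I would establish two things. \emph{(i) Non-vanishing:} by the generalization of \eqref{eq:trnorm} proved in appendix \ref{app:Psi}, $tr(\Lambda)=\|\phi^\star\|^2=\|\phi\|^2\neq0$ and $tr(K)=\|\kappa\|^2\neq0$, so $\Lambda\neq0$ and $K\neq0$ (this is where Claim~\ref{thm:nonzero}, equivalently faithfulness of $tr$, enters). \emph{(ii) Common support:} the hypothesis $\Phi^L_{B_1B_3}(P_1)|\kappa\rangle=|\kappa\rangle$ forces $|\kappa\rangle$ into the minimal sector $\mathcal H^{B_1\sqcup B_3}_{\mu_1}$ cut out by $\Phi^L_{B_1B_3}(P_1)$ (with $P_1=P_{\mu_1}$), whose left tensor factor is the $P_1$-subspace $\mathcal H^{B_1}_{\mu_1}$ of $\mathcal H_{B_1\sqcup B_1}$; an intertwining argument of the kind already used for $\hat a_R$ then shows $\Psi_R^{B_1,B_1\to B_3}(|\kappa\rangle)$ annihilates $P_\mu\,\mathcal H_{B_1\sqcup B_1}$ for every minimal $P_\mu\neq P_1$, whence $K=P_1KP_1$. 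Applying $\star$ to $\Phi^L_{B_1B_2}(P_1)|\phi\rangle=|\phi\rangle$, and using that $\star$ exchanges the left and right algebras while sending minimal central projections to minimal central projections, one gets $\Phi^R_{B_2B_1}(P_1)|\phi^\star\rangle=|\phi^\star\rangle$; the same intertwining argument then gives $\widehat{(\phi^\star)}_L\,P_\mu=0$ for $\mu\neq\mu_1$, hence $\Lambda=P_1\Lambda P_1$. So $\Lambda$ and $K$ both live on the single factor $P_1\,\mathcal H_{B_1\sqcup B_1}\cong\mathcal H^{B_1}_{\mu_1}\otimes\mathcal H^{B_1}_{\mu_1}$, on which $\mathcal A_L^{B_1\sqcup B_1}$ and $\mathcal A_R^{B_1\sqcup B_1}$ act as $\mathcal B(\mathcal H^{B_1}_{\mu_1})\otimes\mathds 1$ and $\mathds 1\otimes\mathcal B(\mathcal H^{B_1}_{\mu_1})$. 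There $\Lambda=\Lambda_L\otimes\mathds 1$ with $\Lambda_L\neq0$ and $K=\mathds 1\otimes K_R$ with $K_R\neq0$, so $\Lambda K=\Lambda_L\otimes K_R\neq0$, as needed, and $a_R=\hat\rho_R$ completes the proof.

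The main obstacle I anticipate is the bookkeeping in step (ii): verifying that $\star$ really does transport the minimal projection $P_1$ to the central projection cutting out the sector of $|\phi^\star\rangle$ (so that $\Lambda$ and $K$ genuinely share the factor $\mathcal H^{B_1}_{\mu_1}\otimes\mathcal H^{B_1}_{\mu_1}$ rather than landing in orthogonal sectors, which would force $\Lambda K=0$), and keeping careful track of which tensor factor each of $\Lambda$, $K$ acts on so that the product is non-degenerate. The other fiddly point is the passage to $\langle\rho|\Lambda K|\rho\rangle$ — in particular the identity $\hat\kappa_R^\dagger\hat\kappa_R=\Phi^R_{B_2B_1}(K)$ and the correct appearance of $\widehat{(\phi^\star)}_L$ rather than $\widehat{(\phi^\star)}_R$ — which must be extracted from \eqref{eq:trprod} and the generalized intertwining relations of appendix \ref{app:Psi}.
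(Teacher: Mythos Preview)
Your approach is correct but genuinely different from the paper's. The paper keeps both key positive operators in the \emph{same} algebra $\mathcal A_R^{B_1\sqcup B_1}$: it shows that $\hat\kappa_R^\dagger\hat\kappa_R$ and $\hat\phi_R^\dagger\hat\phi_R$ (both built from $\Psi_R^{B_1,B_1\to\cdot}$) lie in the $P_1$-factor of $\mathcal A_R^{B_1\sqcup B_1}$ via the trace/intertwining argument \eqref{eq:P1kappa}, and then constructs $a_R$ explicitly from the top eigenvectors of these two operators inside that single type-$I$ factor $\mathcal B(\mathcal H^{B_1\sqcup B_1}_{R,\mu})$. You instead place $\Lambda\in\mathcal A_L^{B_1\sqcup B_1}$ and $K\in\mathcal A_R^{B_1\sqcup B_1}$, exploit their commutativity, and use the tensor structure $\Lambda K=\Lambda_L\otimes K_R$ on $\mathcal H^{B_1}_{\mu_1}\otimes\mathcal H^{B_1}_{\mu_1}$ so that any surface $\rho$ outside a proper closed subspace works for $a_R=\hat\rho_R$ --- a cleaner endgame than the eigenvector construction.

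The obstacle you flag (that $\star$ transports the minimal central projection $P_1$ to itself) is genuine as your argument is written, and it is precisely what the paper's route avoids by never leaving $\mathcal A_R$. But you can sidestep the $\star$-bookkeeping entirely within your own framework: write $\Lambda=\hat\phi_L\widehat{(\phi^\star)}_L$ with $\hat\phi_L:=\Psi_L^{B_2\to B_1,B_1}(|\phi\rangle)=\widehat{(\phi^\star)}_L^{\,\dagger}$ and verify the identity
\[
d\,\hat\phi_L \;=\; \Psi_L^{B_2\to B_1,B_1}\big(\Phi^L_{B_1B_2}(d)\,|\phi\rangle\big)\qquad\text{for all }d\in\mathcal A_L^{B_1\sqcup B_1},
\]
which holds for surfaces ($|e\phi\chi\rangle=\widehat{(e\phi)}_L|\chi\rangle$) and extends by strong continuity of $\Psi_L$ together with normality of $\Phi^L_{B_1B_2}$. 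Setting $d=P_1^\perp$ gives $P_1^\perp\hat\phi_L=\Psi_L(0)=0$ directly from the hypothesis on $|\phi\rangle$, hence $P_1^\perp\Lambda=0$ and $\Lambda=P_1\Lambda P_1$. This is exactly the intertwining mechanism the paper uses for $\hat\phi_R$ (which is why the paper never meets your obstacle), and with it your common-support step goes through without needing to know how $\star$ acts on central projections.
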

\begin{proof}
We will also write  $\widehat {\left(\phi^\star\right)}_R := \Psi_R^{B_1,B_2\to B_1}(|\phi^\star\rangle) = \left[\Psi_R^{B_1,B_1\to B_2}(|\phi\rangle)\right]^\dagger =: \hat \phi_R^\dagger$ , where the 2nd equality uses \eqref{eq:fullPsiadj}.  
As in the proof of \ref{thm:nonzero} above, 
the operators $\hat \kappa_R^\dagger \hat \kappa_R$ and $\widehat {\left(\phi^\star\right)}_R \widehat {\left(\phi^\star\right)}_R^\dagger = \hat \phi_R^\dagger \hat \phi_R$ on $\mathcal A^L_{B_1 \sqcup B_1}$ are both non-zero. 

Furthermore, for $P_1^\perp = \mathds{1}-P_1$, we may write  e.g.,
\begin{equation}
\label{eq:P1kappa}
\lim_{\beta \downarrow 0} P_1^\perp  \hat \kappa_R^\dagger \hat \kappa_R  |C_\beta\rangle
=P_1^\perp \lim_{\beta \downarrow 0} \hat \kappa_R^\dagger \hat \kappa_R  |C_\beta\rangle
=P_1^\perp \hat \kappa_R^\dagger |\kappa\rangle
= \hat \kappa_R^\dagger \left[\Phi^L_{B_1B_2}\left(P_1^\perp\right)\right]|\kappa\rangle =0,
\end{equation}
where the first equality uses boundedness of $P_1^\perp$, the 2nd follows as in the proof of \ref{thm:nonzero}, the third uses the intertwining relation \eqref{eq:com2}, and the fourth follows from \eqref{eq:cpcs} since $\Phi^L_{B_1B_2}\left(P_1^\perp\right) = \mathds{1} - \Phi^L_{B_1B_2}\left(P_1\right)$. Using the definition \eqref{eq:trvN} of the trace $tr$ on positive elements of the von Neumann algebra $\mathcal A_R^{B_1 \sqcup B_1}$,  the above result requires $\left(\kappa_R^\dagger \hat \kappa_R  (\mathds{1}-P_1)  \hat \kappa_R^\dagger \hat \kappa_R \right)$ to have vanishing trace. 

Faithfulness of the trace now requires $\kappa_R^\dagger \hat \kappa_R  (\mathds{1}-P_1)  \hat \kappa_R^\dagger \hat \kappa_R =0$.  In particular, for all $|\gamma \rangle \in \mathcal H_{B_1 \sqcup B_1}$ we have 
\begin{equation}
\label{eq:prepforsqrt}
\langle \gamma | \kappa_R^\dagger \hat \kappa_R  (\mathds{1}-P_1)  \hat \kappa_R^\dagger \hat \kappa_R  | \gamma \rangle = 0.    
\end{equation}
But \eqref{eq:prepforsqrt} is the norm of the state 
$(\mathds{1}-P_1)  \hat \kappa_R^\dagger \hat \kappa_R  | \gamma \rangle$, so this state must vanish for all $\gamma$ and we have 
\begin{equation}
P_1  \hat \kappa_R^\dagger \hat \kappa_R = \hat \kappa_R^\dagger \hat \kappa_R.    
\end{equation}
  Since the same argument holds with $\kappa$ replaced with $\phi$, we must have
\begin{equation}
P_1 \hat \phi_R^\dagger \hat \phi_R
  = \hat \phi_R^\dagger \hat \phi_R, 
  \ \ \ {\rm or,\ equivalently, } \ \ \ 
P_1 \widehat {\left(\phi^\star\right)}_R \widehat {\left(\phi^\star\right)}_R^\dagger  =  \widehat {\left(\phi^\star\right)}_R \widehat {\left(\phi^\star\right)}_R^\dagger,
\end{equation}
so that both $\hat \kappa_R^\dagger \hat \kappa_R$ and $\widehat {\left(\phi^\star\right)}_R \widehat {\left(\phi^\star\right)}_R^\dagger$ in fact lie in the same $\mu$-sector $\mathcal A^{B_1 \sqcup B_1}_{R,\mu}$.

Due to \eqref{eq:cpcs} and the fact that $|\phi \rangle \neq 0$, we see that  $\mathcal A^{B_1 \sqcup B_1}_{R,\mu}$  is also isomorphic to some off-diagonal factor
$\mathcal A^{B_2 \sqcup B_1}_{R,\mu}$, and we may use $\Phi^R_{B_2 B_1}$ to denote this isomorphism. But let us also recall that $\mathcal A^{B_1 \sqcup B_1}_{R,\mu}$ is a type I von Neumann factor isomorphic to $\mathcal B(\mathcal H_{B_1 \sqcup B_1, R}^\mu)$.   Thus $\mathcal A^{B_2 \sqcup B_1}_{R,\mu}$ is also isomorphic to  $\mathcal B(\mathcal H_{B_1 \sqcup B_1, R}^\mu)$.  

Furthermore, since traces on type-I factors are unique up to multiplication by constants, up to such constants we can use the identification of $\mathcal A^{B_2 \sqcup B_1}_{R,\mu}$ with $\mathcal B(\mathcal H_{B_1 \sqcup B_1, R}^\mu)$ to evaluate traces on $\mathcal A^{B_2 \sqcup B_1}_{R,\mu}$.
We can then use the trace on $\mathcal B(\mathcal H_{B_1 \sqcup B_1, R}^\mu)$ to show that there is some $a_R \in \mathcal A^{B_2 \sqcup B_1}_{R,\mu}$ for which $tr(\widehat {\left(\phi^\star\right)}_R^\dagger  a_R^\dagger \hat \kappa_R^\dagger
\hat \kappa_R  a_R  \widehat {\left(\phi^\star\right)}_R )$ is non-zero.  
This is in particular true for the $a_R$ obtained by 
applying the isomorphism $\Phi^R_{B_2 B_1}$ to the operator $|\overline{\phi^\star}\rangle \langle \overline \kappa | \in \mathcal B(\mathcal H_{B_1 \sqcup B_1, R}^\mu)$ where $|\overline \kappa\rangle, |\overline{\phi^\star}\rangle \in H_{B_1 \sqcup B_1, R}^\mu$ are the largest-eigenvalue eigenvectors of 
$\hat \kappa_R^\dagger \hat \kappa_R$ and $\widehat {\left(\phi^\star\right)}_R \widehat {\left(\phi^\star\right)}_R^\dagger$. But by the same argument used in the proof of claim~\ref{thm:nonzero}, this trace is $\langle \gamma|\gamma\rangle$ for $|\gamma\rangle: = \hat \kappa_R a_R  |\phi^\star \rangle \in  
{\cal H}_{B_2 \sqcup B_3}$.  Thus $|\gamma \rangle$ is non-zero as claimed. 
\end{proof}

To show that the $|\phi_{B_2B'}\rangle$ defined in \eqref{eq:phiB2Bprime} is non-zero, we simply apply Claim \ref{thm:trans1} twice.  The first time uses $|\phi_1\rangle =|\phi_{B_2B}^\star\rangle$ and $|\kappa_1 \rangle = | \phi_{B_1 B}^\star\rangle $, while the second uses $|\kappa_2\rangle = | \phi_{B_1 B'}\rangle $ and 
\begin{equation}
    |\phi_2\rangle =  \widehat{\phi_{B_2 B}}_R^\dagger \hat c^\dagger_R |\phi_{B_1B}\rangle.
\end{equation}
The fact that $|\phi_2 \rangle \neq 0$ follows from the first application of Claim \ref{thm:trans1}, and the second application then gives $|\phi_{B_2B'}\rangle \neq 0$

This establishes the desired transitivity of $\sim$ and shows that it defines an equivalence relation on centrally-minimal projections.  We can then extend the relation $\sim$ to  more general elements of $Z^{B_1\sqcup B_1}$, $\mathcal Z^{B_2\sqcup B_2}$. We will write $z_1\sim z_2$ when $z_1 = \sum_{i} c_{i}P_{1i}$ and 
$z_2 = \sum_{i} c_{i}P_{2i}$ for centrally-minimal projections $P_{1i}, P_{2i}$ with $P_{1i} \sim P_{2i}.$
It is then manifest that this extension is again an equivalence.

We may thus take the quotient of $\Tilde{\mathcal Z}_{univ}$ by the relation $\sim$ to obtain a  smaller algebra
\begin{equation}
\label{eq:center2}
    \mathcal Z_{univ}=\Tilde{\mathcal Z}_{univ}/\sim.
\end{equation}
The algebra \eqref{eq:center2} is universal in the sense that any diagonal or non-diagonal center can be obtained as a projection of $\mathcal Z_{univ}$.  This property allows us to then identify any center $\mathcal Z^{B_1 \sqcup B_2}$ with a subalgebra of
$\mathcal Z_{univ}$.  In particular, this allows us to say that $z_{12} \in \mathcal Z^{B_1 \sqcup B_2}$ and $z_{34} \in \mathcal Z^{B_3 \sqcup B_4}$ are `the same' if they map to the same element of
$\mathcal Z_{univ}$.

\section{Traces and entropy}
\label{sec:TrEnt}

The remaining properties  (regarding entropy, hidden sectors, etc.) discussed in section 4 of \cite{Colafranceschi:2023urj} now follow by essentially the same arguments given there.  However, we can use the universal structure identified in section \ref{sec:univ} to provide a unified and clean discussion for any bipartite boundary $B_1 \sqcup B_2$.  We briefly summarize such results below.

Recall that we have a $*$-homomorphism $\Phi_{BB'}$ from $\mathcal A_L^{B \sqcup B}$ to the bounded operators on ${\mathcal H}_{B \sqcup B'}$.
Given any normalized state $|\psi\rangle \in {\mathcal H}_{B \sqcup B'}$, we can thus compute the expectation value $\langle \psi |\Phi_{BB'}(a)|\psi \rangle$ for any $a \in \mathcal A_L^{B \sqcup B}$.  This is a positive linear functional on the type-I von Neumann algebra $\mathcal A_L^{B \sqcup B}$.  Furthermore, our functional is normalized in the sense that 
\begin{equation}
\langle \psi |\Phi_{BB'}({\mathds 1})|\psi \rangle= 
\langle \psi | {\mathds 1}|\psi \rangle = 1.
\end{equation}
As a result, given any trace $tr$ on the set $\left(\mathcal A_L^{B \sqcup B}\right)^+$ of positive elements in $\mathcal A_L^{B \sqcup B}$,
there is a density operator $\rho^{tr}_\psi \in \left(\mathcal A_L^{B \sqcup B}\right)^+$ with $tr(\rho^{tr}_\psi)=1$ such that
\begin{equation}
\label{eq:densop}
tr\left( \left(\rho^{tr}_\psi\right)^{1/2} a \left(\rho^{tr}_\psi\right)^{1/2}\right) = \langle \psi |\Phi_{BB'}(a)|\psi \rangle
\end{equation}
for all $a \in \left(\mathcal A_L^{B \sqcup B}\right)^+$. A brute-force argument for \eqref{eq:densop} can be found in the discussion around (4.63) in \cite{Colafranceschi:2023urj} where $\rho^{tr}_\psi$ is constructed by tracing out the right Hilbert space factors from each term in the direct sum that defines $\mathbb{H}_B$.  However,  we suspect that the mathematical literature contains a more elegant derivation.

The notation $\rho^{tr}_\psi$ explicitly indicates the dependence of this operator on the choice of the trace $tr$.  Three traces on $\mathcal A_L^{B \sqcup B}$ were discussed in \cite{Colafranceschi:2023urj}.  The first,  called simply $tr$, is given by \eqref{eq:trvN}.  
The second trace, called $Tr$, was defined by choosing an arbitrary orthonormal basis $\{|i,\mu\rangle\}$ for each of
the Hilbert spaces $\mathcal H^{B \sqcup B}_{\mu,L}$  for $\mu \in \mathcal I_{B \sqcup B}$
and writing
\begin{equation}
Tr(a) := \sum_{i, \mu} \langle i,\mu|a|i,\mu \rangle
\end{equation}
for every $a \in \left(\mathcal A_L^{B \sqcup B}\right)^+$.

Now, on a given type-I factor $\mathcal A_{L,\mu}^{B \sqcup B}$, any two traces are proportional.  In this context we must thus have 
\begin{equation}
\label{eq:trTr}
    tr= n_\mu \, Tr.
\end{equation} It was shown in \cite{Colafranceschi:2023urj} that their axioms imply higher generalizations of the trace inequality
\eqref{eq:trIn} which require the $n_\mu$ to be positive integers.  The third trace $\widetilde {Tr}$ from \cite{Colafranceschi:2023urj} can be defined by constructing the Hilbert spaces
\begin{equation}
\label{eq:addHS}
    \undertilde {\mathcal  H}_B:= \bigoplus_\mu {\mathcal H}^{B \sqcup B}_{\mu,L} \otimes {\mathbb C}^{n_\mu},
\end{equation}
choosing an arbitrary orthonormal basis $\{\ket {\tilde i}\}$ of $\undertilde {\mathcal  H}^{B}$, and then writing
\begin{equation}
\label{eq:TildeTr}
\widetilde{Tr}(a) := \sum_{\tilde i} \langle \tilde i|\left(a\otimes \mathds 1\right)|\tilde i\rangle
\end{equation}
for every $a \in \left(\mathcal A_L^{B \sqcup B}\right)^+$.    Comparing \eqref{eq:trTr} with \eqref{eq:TildeTr} then shows that $tr=\widetilde{Tr}$, and thus that $\rho_{\psi}^{tr}=\rho_{\psi}^{\widetilde {Tr}}$.  Following \cite{Colafranceschi:2023urj}, to simplify the notation we will henceforth denote this density operator by  $\tilde \rho_{\psi}$.  The factors $ {\mathbb C}^{n_\mu}$ in \eqref{eq:addHS} were termed {\it hidden sectors} in \cite{Colafranceschi:2023urj}, where their physical interpretation was discussed in more detail and illustrated with examples\footnote{The structure of \eqref{eq:addHS} and the embedding \eqref{eq:tildeembed} of $\mathcal H_{B \sqcup B'}$ in $\undertilde{\mathcal H}_B \otimes \undertilde{\mathcal H}_{B'}$ define a (sum of) quantum error correcting codes with two-sided recovery of the type described in \cite{Harlow:2016vwg}.  This is not a coincidence, as the insertion of Hidden sectors can be taken to define the Hilbert space of a more fundamental theory in which $\mathcal H_{B \sqcup B'}$ is embedded.   Some differences, however, are that the QEC structure described in \cite{Harlow:2016vwg} is expected to hold only approximately and to require some notion of cut-off in the bulk.  In contrast, our $\mathcal H_{B \sqcup B'}$ is an exact construction.  It is also often noted that the same mathematical structure also appears in discussions of so-called edge modes (see e.g. \cite{Donnelly:2011hn,Donnelly:2014fua,Donnelly:2016auv}).  However, the physics of edge modes appears to be rather different.  In particular, in contrast to edge modes, our ${\mathbb C}^{n_\mu}$ hidden sectors play no role in the factorization of the Hilbert space.  Furthermore, there are interesting examples in which the hidden sectors are trivial.}.

An explicit formula for 
$\tilde \rho_{\psi}$ can be given by using an appropriate embedding of $\mathcal H_{B \sqcup B'}$ in $\undertilde{\mathcal H}_B \otimes \undertilde{\mathcal H}_{B'}$.    To do so, simply choose a maximally entangled state $|\chi_\mu\rangle$ in $ {\mathbb C}^{n_\mu} \otimes  {\mathbb C}^{n_\mu}$ for each $\mu$ and write
\begin{equation}
\label{eq:tildeembed}
|\tilde \psi\rangle = \sum_{P \in \mathcal Z_{univ}^{MP}} \left( \Phi_{BB'}(P)|\psi\rangle \right) \otimes |\chi_\mu\rangle,
\end{equation}
where  $\mathcal Z_{univ}^{MP}$ is the set of projections in $\mathcal Z_{univ}$ that are also minmal in $\mathcal Z_{univ}$.   In \eqref{eq:tildeembed}, we have also defined $\Phi_{BB'}(P) : =  \Phi_{BB'}(P')$ when there is some $P' \sim P$ with $P' \in \mathcal Z^{B \sqcup B}$.  When there is no such $P'$, we set $\Phi_{BB'}(P) : = 0$. Finally, we also regard $\mu$ as a function of $P$ defined by the fact that the above $P'$ here projects onto some $\mathcal H_{BB}^\mu$.

The density operator (or density matrix) $\tilde \rho_\psi$ can then be constructed from $|\tilde \psi\rangle\langle \tilde \psi |$ by `tracing out' the right factor $\undertilde {\mathcal H}_{B'}$ in the above decomposition (using its natural trace $\widetilde {Tr}$).   The result $\tilde \rho_\psi$ is a density matrix on $\undertilde {\mathcal H}_B$  whose  von Neumann entropy  is given by $S(B) : = -tr\left(\tilde \rho_\psi \ln \tilde \rho_\psi \right) = - \widetilde{Tr} \left(\tilde \rho_\psi \ln \tilde \rho_\psi \right)$.  On the other hand, the connection of $tr$ to our path integral means that, when $|\psi\rangle$ is defined by a finite linear combination of surfaces to which the Lewkowycz-Maldacena argument \cite{Lewkowycz:2013nqa} can be applied in an appropriate semiclassical limit, our entropy $S(B)$ in that limit will conicide with that computed by the Ryu-Takayanagi formula \cite{Ryu:2006bv,Ryu:2006ef}.  In this sense we obtain a Hilbert space interpretation of the Ryu-Takayangi formula for general states having the $(d-2)$-dimensional closed source-manifold $B$ as any part of their boundary.

\section{Discussion}
\label{sec:disc}

The main primary goal of our work above was to generalize the analysis in \cite{Colafranceschi:2023urj} of UV-completions of asmptotically locally AdS Euclidean gravitational path integrals to Hilbert space sectors 
 $\mathcal H_{B_L\sqcup B_R}$ defined by bipartite boundaries ${B_L\sqcup B_R}$ for general $B_L, B_R$.  In analogy with the diagonal case $B_L=B_R$, the defining path integral provides  a notion of entropy on both $B_L$ and $B_R$.  Furthermore, by the Lewkowycz-Maldacena argument \cite{Lewkowycz:2013nqa}, if the path integral admits an appropriate semi-classical limit then the entropy in that limit is given by the Ryu-Takayanagi formula with small corrections.  Our main result is that
  there are Hilbert spaces $\undertilde{\mathcal H}_{B_L}, \undertilde{\mathcal H}_{B_L}$ such that 
$\mathcal H_{B_L\sqcup B_R}$ can be embedded in $\undertilde{\mathcal H}_{B_L} \otimes  \undertilde{\mathcal H}_{B_R}$  so that the above entropy on $B_L$ can be computed by tracing out 
$\undertilde{\mathcal H}_{B_R}$ to define a density matrix $\tilde \rho$ and then calculating $-\widetilde{Tr} \left(\tilde \rho  \ \ln \tilde \rho\right)$ using the trace $\widetilde{Tr}$ defined by summing  diagonal matrix elements of operators over an orthonormal basis of $\undertilde{\mathcal H}_{B_L}$.  In this sense we have provided a Hilbert space interpretation of Ryu-Takayanagi entropy without assuming the existence of a holographic dual CFT.

At the technical level, we showed that the non-diagonal sectors admit left and right von Neumann algebras, each of which can be obtained by appying central projections to the corresponding diagonal von Neumann algebras on $\mathcal H_{B_L\sqcup B_L}$ and $\mathcal H_{B_R\sqcup B_R}$.  This property ultimately followed from our use of rimmed source-manifolds, which can always be well-approximated by the product of a cylinder ($B_L \times [0,1]$ and/or $B_R \times [0,1]$) with another manifold as illustrated for the source-manifold $\psi$ in figure \ref{fig:phi}. An immediate implication of the above result is that both the left and right von Neumann algebras are type-I.
We also showed the left and right von Neumann algebras on any non-diagonal sector to be commutants on
$\mathcal H_{B_L\sqcup B_R}$.  As in \cite{Colafranceschi:2023urj}, we did not need to assume any of the various Hilbert spaces to be separable, though in realistic models one might expect that to be the case.

Natural directions for further research include providing a corresponding Lorentz-signature analysis and/or studying the effect of dropping the reality condition from the list of axioms.  Such reality conditions imply that the theory is invariant under time-reversal which, based on analogy with quantum field theory, one does not expect to hold in general.
Since the arguments both here and in \cite{Colafranceschi:2023urj} generally involved only the operation $\star$ (as opposed to separately using either the transpose operation ${}^t$ or the complex-conjugation operator ${}^*$), we suspect that it will be straightforward to drop this axiom from the list of requirements.  However, this remains to be checked in detail.

It is perhaps more important to emphasize that our work focused on bipartitions of the boundary, meaning that the boundary was written as a disjoint union of precisely {\it two} pieces $B_L\sqcup B_R.$  It thus remains to study more complicated partitions of the boundary.  In particular, for boundaries of the form $B_1 \sqcup B_2 \sqcup B_3$, we would not necessarily expect $\mathcal H_{B_1 \sqcup B_2 \sqcup B_3}$ to embed in a useful way into the tensor product 
$\undertilde{\mathcal H}_{B_1} \otimes  \undertilde{\mathcal H}_{B_2}\otimes  \undertilde{\mathcal H}_{B_3}$ of our Hilbert spaces 
$\undertilde{\mathcal H}_{B_1}$, $\undertilde{\mathcal H}_{B_2}$, $\undertilde{\mathcal H}_{B_3}$.  It would be very interesting to understand if our definitions of these spaces can be modified in a way that makes the above property hold.  Establishing that this is the case would bring us one step closer to showing that AdS/CFT-like results hold in a generic theory of gravity satisfying the axioms of \cite{Colafranceschi:2023urj}.

\acknowledgments

DM thanks Eugenia Colafranceshi, Xi Dong, and Zhencheng Wang for many related discussions. 
 He also acknowledges support from NSF grant PHY-2107939 and by funds from the University of California. DZ thanks Zhencheng Wang for the insightful explanations regarding their preceding work and the assistance provided in enhancing his understanding of the mathematical context. He also thanks Shanwen Jiang, Morgan Qi, Drew Rosenblum and Bartek Czech for fruitful conversations. This research was supported in part by grant NSF PHY-2309135 to the Kavli Institute for Theoretical Physics (KITP).

\appendix

\section{The off-diagonal sector is a Hilbert semi-birigged space that satisfies the coupling condition}
\label{app:Riefel}

This appendix provides the details associated with the use of theorem 1.3 from \cite{Rieffel:1976} for Hilbert semi-birigged spaces as required for the arguments of section \ref{sec:proj}.  In particular, it was stated in section \ref{sec:proj} that, when equipped with representations $C:=\hat A^{LR}_L$ and $D:=\hat A^{LR}_R$, the off-diagonal dense subspace $X:=\mathcal D_{LR}$ satisfies the four axioms of Hilbert-semi-birigged spaces found in \cite{Rieffel:1976}, and that it also satisfies the so-called coupling condition from that reference. We explain these properties briefly below.

The axioms introduced in \cite{Rieffel:1976} for a Hilbert semi-birigged space require the existence of two sesquilinear forms on $X$ that we call $(,)_C$ and $(,)_D$, and which take values in $C$ and $D$ as indicated.  In our case, we take these to be given by the left and right gluing of any of their representative surfaces (or linear superpositions thereof) with one surface involuted according to
\begin{eqnarray}
\label{eq:ses}
    \nonumber \left(\ket x,\ket y\right)_C&=&[xy^\star]_L=\hat x_L\hat y_L^\dagger=\Psi_L\left(\ket x\right)\Psi_L\left(\ket y\right)^\dagger \ \ \ {\rm and}\\
    \left(\ket x,\ket y\right)_D&=&[x^\star y]_R=\hat y_R\hat x_R^\dagger=\Psi_R\left(\ket y\right)\Psi_R\left(\ket x\right)^\dagger.
\end{eqnarray}

There are then 6 axioms to check:

\begin{enumerate}

\item $C$ and $D$ are faithfully represented on $X$.

\item $( x,y )_C z = x(y,z)_D$ for all $x,y,z\in X$.

\item If $c\in C$ is of the form $\left( x, y\right)_C$, then so is $c^\dagger$.

\item The linear span of the objects of the form $\left( x, y\right)_C z$ is dense in $X$.

\item $\left( x, y \right)_C = \left( y, x \right)_C$ and
$\left( x, y \right)_D  = \left( y, x \right)_D$ for all $x,y\in X.$

\item For any $x\in X$, both $( x,x)_C$ and  $( x,x)_D$ act as non-negative operators on $X$.
\end{enumerate}

Five of these axioms are essentially trivial to verify.
Axiom 1 was established in section \ref{sec:alg}.  Axioms 2 and 5 follow from short computations using our definitions \eqref{eq:ses}.  Axioms 3 and 6 are also manifest from
\eqref{eq:ses}.

This leaves only axiom 4.  To verify this remaining axiom, we need only show\footnote{This argument was inspired by the related proof of proposition I.9.2 in \cite{Takesaki}.} that there is no state $|\xi \rangle \in \mathcal H_{LR}$ that is orthogonal to all states of the form
$( x, y)_C z$ for $x,y,z\in \mathcal D_{LR}$.  To do so, we recall that
$\mathcal D_{LR}$ is dense in $\mathcal H_{LR}$, so such a $|\xi\rangle$ would require there to be a sequence $\{|\xi_n\rangle \in \mathcal D_{LR}\}$ that converges in norm to $\xi.$  Failure of axiom 4 would in particular require $|\xi\rangle$ to be orthogonal to all states of the form $\left( |\xi_n\rangle, |\xi_m \rangle \right)_C|\xi_k \rangle$, so that
\begin{equation}
\label{eq:xineq}
0 = \langle \xi| \left( |\xi_n\rangle, |\xi_m\rangle \right)_C |\xi_k \rangle =  \langle \xi| \Psi_L( |\xi_n\rangle) \left[ \Psi_L (|\xi_m\rangle)\right]^\dagger |\xi_k \rangle \ \ \ {\rm for \ all} \ n,m,k,
\end{equation}
where in the last step we have used \eqref{eq:ses} and the analogue of~\eqref{eq:adjpsiR} for the left map $\Psi_L$.
But we can use the weak-operator-topology continuity of the map $\Psi_L$ (the left version of Claim \ref{thm:strong})  and of the adjoint map $\dagger$ to take the limits $n,m,k\to \infty$ (in any order) of the matrix elements \eqref{eq:xineq} and obtain
\begin{eqnarray}
\label{eq:xieq}
0&=& \langle \xi| \Psi_L(|\xi \rangle) \left[\Psi_L(|\xi \rangle)\right]^\dagger  |\xi\rangle =
\lim_{\beta \downarrow 0}
\langle \xi|  \Psi_L(|\xi \rangle) \widehat{C_{2\beta}}_R \left[\Psi_L(|\xi \rangle)\right]^\dagger  |\xi\rangle
\cr
&=&\lim_{\beta \downarrow 0} \langle \xi|\Phi^R_{LR}(\widehat{C_\beta}_R)  \Psi_L(|\xi \rangle)  \left[\Psi_L(|\xi \rangle)\right]^\dagger \Phi^R_{LR}(\widehat{C_\beta}_R) |\xi\rangle
\cr
&=&\lim_{\beta \downarrow 0} \langle {C_\beta}| \left[\Psi_L(|\xi \rangle)\right]^\dagger \Psi_L(|\xi \rangle)  \left[\Psi_L(|\xi \rangle)\right]^\dagger \Psi_L(|\xi \rangle) |C_\beta\rangle
\cr
&=&tr(\{\left[\Psi_L(|\xi \rangle)\right]^\dagger \left[\Psi_L(|\xi \rangle) \right]\}^2),
\end{eqnarray}
where the 2nd equality follows by inserting the identity operator in the form $\lim_{\beta \downarrow 0} \widehat{C_{2\beta}}_R$, the 3rd uses the intertwining relation \eqref{eq:intCe}, the 4th uses~\eqref{eq:phi} twice, and the final step then follows from the definition \eqref{eq:trvN} of the trace on positive elements of the von Neumann algebra $\mathcal A_L^{LL}$.

Now, as in the proof of Claim \eqref{thm:nonzero}, since $0 \neq \langle \xi|\xi\rangle = tr\left(\left[\Psi_L(|\xi \rangle)\right]^\dagger] \Psi_L(|\xi \rangle) \right)$, the operator
$\left[\Psi_L(|\xi \rangle)\right]^\dagger] \Psi_L(|\xi \rangle)$ cannot vanish.  And since this operator is self-adjoint, its square must also be non-zero.  Faithfulness of the trace (as derived in \cite{Colafranceschi:2023urj}) then requires that \eqref{eq:xieq} be non-zero, contradicting \eqref{eq:xieq}. We thus see that  states of the form
$( x, y)_C z$ are dense in $\mathcal D_{LR}$, and thus that full set of axioms is satisfied.

In order to use theorem 1.3 from \cite{Rieffel:1976}, we will
also need to verify the so-called coupling condition. This condition states that if \( m, n \in X \) and \( x, y \in X \), and if 
\begin{equation}
\label{eq:coup}
    \langle m (x,z)_D, w \rangle = \langle z, n (y, w)_D \rangle \text{ for all } z, w \in X,
\end{equation}
then for any fixed \( z, w \in X \) there is a net \( \{ c_\alpha \} \) of elements of \( C \) such that
\begin{eqnarray}
\label{eq:coupnet}
c_\alpha^\dagger z & \rightarrow & m ( x, z)_D, \cr
c_\alpha w & \rightarrow & n ( y, w )_D.
\end{eqnarray}

To see that the coupling condition holds in our context, let us use equation~\eqref{eq:trprod} to rewrite the two sides of equation~\eqref{eq:coup} in the form:
\begin{eqnarray}
    \label{eq:coup1_5}
    tr\left((m (x,z)_D)^\star w\right)&=&tr\left( (x,z)_D^\star m^\star w\right)
    =tr((x^\star z)^\star m^\star w) \cr &=& tr(z^\star x m^\star w)=\braket{z|\hat x_L\hat m_L^\dagger|w} \ \ \ {\rm and} \cr
    tr(z^\star n \braket{y, w}_D)&=&tr(z^\star n y^\star w)=\braket{z|\hat n_L \hat y_L^\dagger|w}.
\end{eqnarray}
Thus we have
\begin{equation}
\label{eq:coup2}
\braket{z|\hat x_L\hat m_L^\dagger|w} = \braket{z|\hat n_L \hat y_L^\dagger|w}.
\end{equation}

Since equation~\eqref{eq:coup2} holds for all $z,w\in \mathcal D_{LR}$, we must have $\hat x_L\hat m_L^\dagger = \hat n_L \hat y_L^\dagger$. In particular, both sides are elements of $C=\hat A_L^{LR}$.   We are therefore free to take $\{c_\alpha\}$ to be the constant ($\alpha$-independent) net $c_\alpha=\hat x_L\hat m_L^\dagger = \hat n_L \hat y_L^\dagger$, for which we trivially compute the limits
\begin{eqnarray}
\label{eq:coupnet2}
c_\alpha^\dagger z & \rightarrow & \hat m_L \hat x^\dagger_L z,  \cr
c_\alpha w & \rightarrow & \hat n_L \hat y^\dagger_L w.
\end{eqnarray}
It is then straightforward to see the result \eqref{eq:coupnet2} agrees with the condition \eqref{eq:coupnet} by writing
\begin{eqnarray}
    m (x, z)_D=\hat z_R \hat x_R^\dagger m=|mx^\star z\rangle=\hat m_L \hat x^\dagger_L z\cr
    n ( y, w )_D=\hat w_R \hat y_R^\dagger n=|ny^\star w\rangle=\hat n_L \hat y^\dagger_L w,
\end{eqnarray}
where the 3rd expression in each line has been written by choosing representatives of each object in $\underline{Y}^d_{B_L \sqcup B_R}$.

\section{A generalization of $\Psi_R$}
\label{app:Psi}

As described in section \ref{sec:univ}, the construction of our map $\Psi_R: \mathcal H_{LR} \rightarrow \mathcal B(\mathcal H_{LL}, \mathcal H_{LR})$ generalizes readily to define a map  $\Psi_R^{B, B_1 \rightarrow B_2}: \mathcal H_{B_1 \sqcup B_2} \rightarrow \mathcal B(\mathcal H_{B \sqcup B_1}, \mathcal H_{B \sqcup B_2})$ involving arbitrary Hilbert spaces $B, B_1, B_2$.  The original map $\Psi_R$ is then the special case of $\Psi_R^{B, B_1 \rightarrow B_2}$ for which $B = B_1 = B_L, B_2 = B_R$.  The more general map $\Psi_R^{B, B_1 \rightarrow B_2}$ naturally satisfies properties analogous to those of the original $\Psi_R$.  The purpose of this appendix is to state those properties explicitly and to describe the corresponding proofs.  Our treatment below will be brief since we have already provided detailed arguments for $\Psi_R$ in the main text.

We formalize the main result of this appendix as follows:
\begin{theorem}
\label{thm:Psi3}
There is a map $\Psi_R^{B, B_1 \rightarrow B_2}:  \mathcal H_{B_1 \sqcup B_2} \rightarrow \mathcal B(\mathcal H_{B \sqcup B_1}, \mathcal H_{B \sqcup B_2})$ that is continuous w.r.t. the norm topology on $\mathcal H_{B_1 \sqcup B_2}$ and the strong operator topology on $\mathcal B(\mathcal H_{B \sqcup B_1}, \mathcal H_{B \sqcup B_2})$ which, if we define $\hat a_R : = \Psi_R^{B, B_1 \rightarrow B_2}(|a\rangle)$, for $|a\rangle \in \mathcal D_{B_1 \sqcup B_2}$ and $|b\rangle \in \mathcal D_{B \sqcup B_1}$ satisfies
\begin{equation}
\label{eq:Psi3}
\hat a_R |b \rangle = |ba\rangle
\end{equation}
for any representatives $a \in H_{B_1 \sqcup B_2},b\in H_{B \sqcup B_1}$ of the equivalence classes defined by $|a\rangle, |b\rangle$.  The map $\Psi_R^{B, B_1 \rightarrow B_2}$ is then uniquely defined by \eqref{eq:Psi3} and the above continuity requirements. 
 Furthermore, it satisfies the intertwining relation
\begin{equation}
\label{eq:Bint}
\Phi^L_{B B_2}(d) \hat a_R =  \hat a_R  \Phi^L_{B B_1}(d) 
\end{equation}
for all $d \in \mathcal A_L^{B \sqcup B}$. Here $\Phi^L_{BB_2}, \Phi^L_{BB_1}$ are just $\Phi^L_{LR}$ for $B_L=B$ and $B_R= B_2, B_1$.

There is also a corresponding left-map $\Psi_L^{B_2 \rightarrow B_1, B}: \mathcal H_{B_1 \sqcup B_2} \rightarrow \mathcal B(\mathcal H_{B_2 \sqcup B}, \mathcal H_{B_1 \sqcup B})$.
\end{theorem}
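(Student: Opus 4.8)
The plan is to reproduce the construction of $\Psi_R$ from sections~\ref{sec:alg} and~\ref{sec:psi} essentially verbatim, with the diagonal boundary $B_L$ there replaced by the auxiliary boundary $B$ and with care taken that $B$, $B_1$, $B_2$ are now allowed to be distinct. The only external ingredient needed beyond what the main text already establishes is the generalized trace inequality of footnote~\ref{foot:genTrIn}: the proof of~\cite{Colafranceschi:2023txs} gives~\eqref{eq:trIn} for any $a \in \underline Y^d_{B_1 \sqcup B_2}$, $b \in \underline Y^d_{B \sqcup B_1}$, hence the operator bound~\eqref{eq:bounded} for the gluing maps associated with any triple of boundaries. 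First I would use this to show, for every such triple, that left- or right-gluing of a fixed surface against states of a neighboring sector defines a bounded operator that annihilates null vectors; the standard unique-bounded-extension argument then produces $\hat a_R := \Psi_R^{B, B_1 \rightarrow B_2}(|a\rangle) \in \mathcal B(\mathcal H_{B \sqcup B_1}, \mathcal H_{B \sqcup B_2})$ for $|a\rangle \in \mathcal D_{B_1 \sqcup B_2}$ satisfying~\eqref{eq:Psi3}, together with the analogous bounded left-gluing operators. It matters that this boundedness step is completed for all triples before the continuity and intertwining arguments, which will invoke boundedness of the $L$-type operators to control the $R$-type map and vice versa.

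Second I would establish that $|a\rangle \mapsto \hat a_R$ is continuous from the norm topology on $\mathcal H_{B_1 \sqcup B_2}$ to the strong operator topology, by rerunning the proof of Claim~\ref{thm:strong}. For a norm-convergent sequence $|a_n\rangle \to |a\rangle$ in $\mathcal H_{B_1 \sqcup B_2}$, the bound~\eqref{eq:bounded} gives $\|\hat a_{n,R}\| \le \sqrt{\langle a_n | a_n\rangle}$, so $\{\hat a_{n,R}\}$ is uniformly bounded; for each fixed $|b\rangle \in \mathcal D_{B \sqcup B_1}$ one has $\hat a_{n,R}|b\rangle = |b a_n\rangle = \bigl(\Psi_L^{B_1 \rightarrow B, B_2}(|b\rangle)\bigr)|a_n\rangle$, which converges since the left-gluing operator is bounded; uniform boundedness together with convergence on the dense set $\mathcal D_{B \sqcup B_1}$ then yields, via the standard Cauchy argument (e.g.\ theorem~6 in section~15.2 of~\cite{Lax}), strong convergence of $\{\hat a_{n,R}\}$ to an operator which we take as $\Psi_R^{B, B_1 \rightarrow B_2}(|a\rangle)$ for general $|a\rangle$. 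Uniqueness given~\eqref{eq:Psi3} and continuity is then immediate from density of $\mathcal D_{B_1 \sqcup B_2}$.

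Third I would prove the intertwining relation~\eqref{eq:Bint}. For $d$ in the surface algebra $\hat A_L^{B \sqcup B}$ acting on $\mathcal D_{B \sqcup B_1}$ it is just associativity of gluing. For general $d \in \mathcal A_L^{B \sqcup B}$ I would take $|x\rangle \in \mathcal D_{B \sqcup B_1}$, write $x = C_\epsilon x^\epsilon$ with $C_\epsilon$ a cylinder of boundary $B \sqcup B$, and combine the defining formula~\eqref{eq:dense} for $\Phi^L_{B B_1}$ and $\Phi^L_{B B_2}$ with the identity $\hat a_R \widehat{x^\epsilon}_R = \widehat{(x^\epsilon a)}_R$ (again associativity, valid as bounded operators on $\mathcal H_{B \sqcup B}$) to obtain $\Phi^L_{B B_2}(d)\,\hat a_R\,|x\rangle = \widehat{(x^\epsilon a)}_R\, d\, |C_\epsilon\rangle = \hat a_R\,\widehat{x^\epsilon}_R\, d\,|C_\epsilon\rangle = \hat a_R\, \Phi^L_{B B_1}(d)\,|x\rangle$; boundedness of all operators extends this from $\mathcal D_{B \sqcup B_1}$ to $\mathcal H_{B \sqcup B_1}$, and the strong continuity just proved extends it from $|a\rangle \in \mathcal D_{B_1 \sqcup B_2}$ to all $|a\rangle \in \mathcal H_{B_1 \sqcup B_2}$. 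The corresponding left map $\Psi_L^{B_2 \rightarrow B_1, B}$, its intertwining relation~\eqref{eq:Bint2L}, and the adjoint identities~\eqref{eq:fullPsiadj2},~\eqref{eq:fullPsiadj2L} then follow by the obvious left--right mirror of the above, the adjoint identity coming from the off-diagonal analogue of~\eqref{eq:adjpsiR} on $\mathcal D_{B_1 \sqcup B_2}$ extended by continuity.

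I do not expect a genuinely hard step; the only thing requiring care is bookkeeping. Since $B$, $B_1$, $B_2$ are now independent, one must keep straight which instance of the $\Psi_R$/$\Psi_L$ construction and which Hilbert-space sector each symbol denotes, and one must order the argument so that the boundedness of all gluing maps is secured before the strong-continuity and intertwining steps, which otherwise look circular. With that ordering respected, every remaining step is a direct rerun of a proof already given in sections~\ref{sec:alg}--\ref{sec:psi}, which is why the treatment in this appendix can be kept brief.
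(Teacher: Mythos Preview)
Your proposal is correct and follows essentially the same approach as the paper: boundedness from the generalized trace inequality of footnote~\ref{foot:genTrIn}, continuity by rerunning Claim~\ref{thm:strong}, and intertwining via associativity of gluing extended by continuity. The one minor difference is in your third step: you extend the intertwining relation to general $d \in \mathcal A_L^{B\sqcup B}$ by unwinding the defining formula~\eqref{eq:dense} through the cylinder decomposition $x = C_\epsilon x^\epsilon$, whereas the paper first verifies~\eqref{eq:Bint} for $d$ in the surface algebra and then extends using the weak-operator-topology continuity of $\Phi^L_{BB_1},\Phi^L_{BB_2}$ already established in section~\ref{sec:con}---the paper explicitly remarks that this shortcut was unavailable when Claim~\ref{thm:intw} was originally proven but is now in hand. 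Both arguments are valid; yours is self-contained and closer in spirit to the original proof of Claim~\ref{thm:intw}, while the paper's is a bit shorter given what has already been established.
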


\begin{proof}
The proof of this claim is directly analogous to our arguments for the corresponding properties of the original $\Psi_R: \mathcal H_{LR} \rightarrow \mathcal B(\mathcal H_{LL}, \mathcal H_{LR})$.  The key point is that, because $\Psi_R$ defines an operator in $\mathcal B(\mathcal H_{LL}, \mathcal H_{LR})$ that acts at the \textit{right} boundary of $\mathcal H_{LL}$, the left boundary of this $\mathcal H_{LL}$ plays no role in the arugments and may be replaced with an arbitrary boundary $B$.  In particular, using footnote \ref{foot:genTrIn} (with $a$ replaced by $a^\star$) and the corresponding version of the trace inequality, for all $a \in H_{B_1 \sqcup B_2}, b \in H_{B \sqcup B_1}$ we have
\begin{equation}
\langle ba | ba\rangle \le \langle a | a\rangle \langle b | b \rangle, 
\end{equation}
so that \eqref{eq:Psi3} requires that $\hat a_R=0$ when $|a\rangle$ is a null state and, in addition, for all states $|a\rangle$ \eqref{eq:Psi3} implies that $\hat a_R$ annihilates all null states in $\mathcal H_{B \sqcup B_1}$.  The condition \eqref{eq:Psi3} is thus well-defined for $|a\rangle \in \mathcal D_{B_1 \sqcup B_2}$ and $|b\rangle \in \mathcal D_{B_1 \sqcup B_2}$.  Furthermore, the operator norm of $\Psi_R^{B, B_1 \rightarrow B_2}(|a\rangle)$ is bounded by $\langle a|a\rangle$ and so admits a unique continuous extension to all $|b\rangle \in \mathcal H_{B \sqcup B_1}$.  

Continuity in the argument $|a\rangle$ of $\Psi_R^{B, B_1\rightarrow B_2}$ then follows as in the proof of Claim \ref{thm:strong}, implying uniqueness of the corresponding extension to all $|a\rangle \in \mathcal H_{B_1 \sqcup B_2}$.  The intertwining relation \eqref{eq:Bint} can then be derived  by noting that for $a \in H_{B_1 \sqcup B_2}, d \in H_{B \sqcup B}$, and $\phi \in H_{B \sqcup B_1}$ we have $\hat d\in \hat A_L^{B\sqcup B}$ and
\begin{equation}
\label{eq:BintD}
\Phi^L_{B B_2}(\hat d) 
\hat a_R |\phi\rangle = |d\psi a \rangle 
=  \hat a_R \Phi^L_{B B_1}(\hat d) |\phi \rangle. 
\end{equation}
Since $\Phi_{B B_2}(\hat d)$, $\Phi_{B B_1}(\hat d)$, and $\hat a_R$  are bounded operators, they are continuous.  The left and right sides of \eqref{eq:BintD} must thus in fact  agree for all $|\phi \rangle \in \mathcal H_{B \sqcup B_1}$.    Continuity of the maps  
$\Phi_{B B_1}, \Phi_{B B_2}, \Psi_R^{B, B_1 \rightarrow B_2}$ with respect their arguments then similarly requires the left and right sides of \eqref{eq:BintD} to agree for all  $a \in \mathcal H_{B_1 \sqcup B_2}, d \in \mathcal H_{B \sqcup B}$.  This yields the desired intertwining relation \eqref{eq:Bint}. Note that
this simple argument was not available when Claim \ref{thm:intw} was originally proven in section \ref{sec:psi} since
continuity of $\Phi_{LR}$ had not yet been established.
\end{proof}

We will also use this appendix to state a small additional observation.  Before doing so, let us first recall the operation $\star$ defined on surfaces in any $Y^d_{B_1\sqcup B_2}$, where for $a \in Y^d_{B_1\sqcup B_2}$ we have $a^\star \in Y^d_{B_2\sqcup B_1}$.  Note that $|a\rangle$ and $|a^\star\rangle$ have the same norm.  Thus $\star$ is an anti-linear continuous map that extends to all states $|a\rangle \in \mathcal H_{B_1\sqcup B_2}$.  The output of this map will be denoted $|a^\star \rangle \in \mathcal H_{B_2\sqcup B_1}$.  This leads to the following result.

\begin{theorem}
\label{thm:fullPsiadj}
    The maps $\Psi_R^{B, B_1 \rightarrow B_2}:  \mathcal H_{B_1 \sqcup B_2} \rightarrow \mathcal B(\mathcal H_{B \sqcup B_1}, \mathcal H_{B \sqcup B_2})$ and $\Psi_R^{B, B_2 \rightarrow B_1}:  \mathcal H_{B_2 \sqcup B_1} \rightarrow \mathcal B(\mathcal H_{B \sqcup B_2}, \mathcal H_{B \sqcup B_1})$ intertwine the above anti-linear map $\star:  \mathcal H_{B_1 \sqcup B_2} \rightarrow \mathcal H_{B_2 \sqcup B_1}$ with the adjoint operation on $B(\mathcal H_{B \sqcup B_1}, \mathcal H_{B \sqcup B_2})$.  In other words, for $|a\rangle \in \mathcal H_{B_1 \sqcup B_2}$ and defining $\hat a_R: = \Psi_R^{B, B_1 \rightarrow B_2}\left( |a\rangle \right)$, $\widehat {\left(a^\star\right)}_R: = \Psi_R^{B, B_2 \rightarrow B_1}\left( |a^\star\rangle \right)$, we have
\begin{equation}
\label{eq:fullPsiadj}
    \widehat {\left(a^\star\right)}_R: =\hat a_R^\dagger.
\end{equation}
    \end{theorem}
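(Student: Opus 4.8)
The plan is to reduce \eqref{eq:fullPsiadj} to matrix elements between states defined by rimmed surfaces, where it collapses to a short trace manipulation essentially identical to \eqref{eq:adj}. Since Claim \ref{thm:Psi3} already guarantees that $\hat a_R = \Psi_R^{B, B_1 \rightarrow B_2}(|a\rangle)$ and $\widehat{\left(a^\star\right)}_R = \Psi_R^{B, B_2 \rightarrow B_1}(|a^\star\rangle)$ are bounded operators, and bounded operators between Hilbert spaces are determined by their matrix elements on dense subspaces, it suffices to verify that $\braket{\hat a_R \phi | \psi} = \braket{\phi | \widehat{\left(a^\star\right)}_R | \psi}$ for all $|\phi\rangle \in \mathcal D_{B \sqcup B_1}$, $|\psi\rangle \in \mathcal D_{B \sqcup B_2}$. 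Moreover, since the map $\Psi_R^{B,B_1\to B_2}$ is linear in its argument while $|a\rangle \mapsto \widehat{\left(a^\star\right)}_R$ is anti-linear, and both sides of \eqref{eq:fullPsiadj} depend anti-linearly on $|a\rangle$, it is enough to take $|a\rangle, |\phi\rangle, |\psi\rangle$ represented by single rimmed surfaces $a \in Y^d_{B_1 \sqcup B_2}$, $\phi \in Y^d_{B \sqcup B_1}$, $\psi \in Y^d_{B \sqcup B_2}$.

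First I would use \eqref{eq:Psi3} (the defining gluing relation) to write $\hat a_R |\phi\rangle = |\phi a\rangle \in \mathcal D_{B \sqcup B_2}$ and $\widehat{\left(a^\star\right)}_R |\psi\rangle = |\psi a^\star\rangle \in \mathcal D_{B \sqcup B_1}$. Then I would evaluate both inner products with \eqref{eq:trprod} and use that the operation $\star$ of Definition \ref{def:inv} reverses the order of gluing, so that $(\phi a)^\star = a^\star \phi^\star$, to obtain
\begin{equation}
\begin{aligned}
\braket{\hat a_R \phi | \psi} &= \braket{\phi a | \psi} = tr((\phi a)^\star \psi) = tr(a^\star \phi^\star \psi) \\
&= tr(\phi^\star \psi a^\star) = \braket{\phi | \psi a^\star} = \braket{\phi | \widehat{\left(a^\star\right)}_R | \psi},
\end{aligned}
\end{equation}
where the passage from the first line to the second is cyclicity of the trace (the relation $tr(a^\star b) = tr(ba^\star)$ of \eqref{eq:trprod}, already exploited in \eqref{eq:adj}). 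This is exactly the generalization of \eqref{eq:adj} to the present three-boundary setting, and it establishes $\hat a_R^\dagger = \widehat{\left(a^\star\right)}_R$ first on the dense domain and then, by uniqueness of the bounded extension, as operators on all of $\mathcal H_{B \sqcup B_2}$; i.e. \eqref{eq:fullPsiadj} holds for $|a\rangle \in \mathcal D_{B_1 \sqcup B_2}$.

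Finally I would pass to arbitrary $|a\rangle \in \mathcal H_{B_1 \sqcup B_2}$ by continuity: choosing $|a_n\rangle \in \mathcal D_{B_1 \sqcup B_2}$ with $|a_n\rangle \to |a\rangle$ in norm, the map $\star$ is anti-linear and norm-preserving (since $|a^\star\rangle$ and $|a\rangle$ have equal norm, as noted just before the claim), so $|a_n^\star\rangle \to |a^\star\rangle$; the strong-operator continuity of $\Psi_R^{B, B_1 \rightarrow B_2}$ and $\Psi_R^{B, B_2 \rightarrow B_1}$ from Claim \ref{thm:Psi3} then gives $\widehat{\left(a_n\right)}_R \to \hat a_R$ and $\widehat{\left(a_n^\star\right)}_R \to \widehat{\left(a^\star\right)}_R$ strongly, hence weakly; and weak-operator continuity of the adjoint together with uniqueness of weak limits turns the already-established identity $\widehat{\left(a_n\right)}_R^\dagger = \widehat{\left(a_n^\star\right)}_R$ into $\hat a_R^\dagger = \widehat{\left(a^\star\right)}_R$. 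I do not expect a genuine obstacle here: the only points needing care are the order-reversal $(\phi a)^\star = a^\star \phi^\star$ under $\star$ and keeping straight which boundary component is glued to which (both immediate from Definition \ref{def:inv}), while the limiting step is standard and parallels arguments already given for the original $\Psi_R$.
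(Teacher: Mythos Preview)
Your proposal is correct and follows essentially the same approach as the paper: establish \eqref{eq:fullPsiadj} on the dense subspace $\mathcal D_{B_1\sqcup B_2}$ via the trace computation underlying \eqref{eq:adjpsiR}, then extend to all of $\mathcal H_{B_1\sqcup B_2}$ by continuity of $\star$ and of the maps $\Psi_R^{B,B_1\to B_2}$, $\Psi_R^{B,B_2\to B_1}$. The paper's own proof is a two-sentence sketch invoking exactly these two steps; you have simply spelled out the details (the matrix-element check and the passage through weak limits for the adjoint) that the paper leaves implicit.
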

\begin{proof}
    For $a\in \mathcal D_{B_1 B_2}$ this is just \eqref{eq:adjpsiR}.  Continuity of $\star$, $\Psi_R^{B, B_1 \rightarrow B_2}$, and $\Psi_R^{B, B_2 \rightarrow B_1}$ then imply the result for all $|a\rangle.$
\end{proof}

\section{Relating the left and right diagonal von Neumann algebras}
\label{app:Ldef}

Recall that our von Neumann algebras $\mathcal A_R^{B\sqcup B}$, 
$\mathcal A_L^{B\sqcup B}$ were defined by completing simpler algebras $\hat A_R^{B\sqcup B}$, 
$\hat A_L^{B\sqcup B}$ that were defined by concatenation of surfaces. The identification with surfaces then gives a natural map $\hat {\mathcal L}$ from $\hat A_R^{B\sqcup B}$ to 
$\hat A_L^{B\sqcup B}$.  In particular, any $\hat a_R \in \hat A_R^{B\sqcup B}$ is defined to act via the right-gluing of some $a \in \underline Y^d_{B \sqcup B}$ in the sense that, for $|\phi \rangle \in \mathcal D_{B \sqcup B}$, we have
\begin{equation}
 \hat a_R |\phi \rangle = |\phi a \rangle.   
\end{equation}
We may then define $\hat {\mathcal L}(\hat a_R) : = \hat a_L$, where  
\begin{equation}
 \hat a_L |\kappa \rangle = |a \kappa \rangle  
\end{equation}
for $|\kappa \rangle \in \mathcal D_{B \sqcup B}$.
For $|\phi\rangle, |\kappa\rangle \in
 \mathcal D_{B \sqcup B}$,  we find the relation 
\begin{equation}
\label{eq:Lprop0}
\langle \phi^\star| \hat a_L |\kappa^\star \rangle = tr(\phi a \kappa^\star) = tr(\kappa^\star \phi a )=  \langle \kappa| \hat a_R |\phi \rangle,  
\end{equation}
where we recall that the trace $tr$ is defined on the entire space $\underline Y^d_{B \sqcup B}$ and not just on the manifestly positive such elements.  We also find the relation
\begin{equation}
\label{eq:Lprop1}
\hat \kappa_R \hat a_R |\phi \rangle = |\phi a \kappa\rangle = \hat \phi_L \hat a_L |\kappa \rangle.
\end{equation}

We will use the relation \eqref{eq:Lprop0} to extend $\hat {\mathcal L}$ to a map $\mathcal L$ defined on all of $\mathcal A_R^{B \sqcup B}$.  In particular, for $a_R \in \mathcal A_R^{B \sqcup B}$ we define $a_L : = \mathcal L(a_R)$ to be the operator in  $\mathcal A_R^{B \sqcup B}$ whose matrix elements satisfy
\begin{equation}
\label{eq:Lprop2}
\langle \phi| a_L |\kappa \rangle = \langle \kappa^\star | a_R |\phi^\star \rangle  
\end{equation}
for all $|\phi\rangle, |\kappa \rangle \in \mathcal H_{B \sqcup B}$.  

To see that $a_L$ is a bounded operator on $\mathcal H_{B \sqcup B}$, let us introduce an orthonormal basis $|i \rangle$ and write
\begin{eqnarray}
\label{eq:Lprop3}
|  a_L |\kappa \rangle|^2  &=& 
\sum_i |\langle i | a_L |\kappa \rangle|^2 \cr
&=& \sum_i |\kappa^\star |  a_R |i^\star \rangle|^2 \cr &=& |  a_R |\kappa^\star \rangle|^2 \le ||a_R||^2 \langle \kappa | \kappa \rangle.
\end{eqnarray}
Furthermore, to see that $a_L \in \mathcal A_L^{B \sqcup B}$, let us write $a_R$ as the weak-operator-topology limit of a net $\{\widehat {(a_\alpha)}_R \} \subset \hat A_R^{B \sqcup B}$.  We then simply note that for $\widehat {(a_\alpha)}_L : = \mathcal L \left( \widehat {(a_\alpha)}_R \right)$, the net $\{\widehat {(a_\alpha)}_L \} \subset \hat A_L^{B \sqcup B}$ converges in the weak operator topology to $a_L.$

We now wish to show the following:
\begin{theorem}
    For any boundary $B$, any $a_R \in \mathcal A_R^{B \sqcup B}$, and any states $|\phi\rangle, |\kappa\rangle \in \mathcal H_{B \sqcup B}$ the operator $a_L : = \mathcal L(a_R)$ satisfies 
\begin{equation}
\label{eq:Lprop}    
\hat \kappa_R a_R |\phi \rangle = |\phi a \kappa\rangle = \hat \phi_L a_L |\kappa \rangle.  
\end{equation}
\end{theorem}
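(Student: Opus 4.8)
The plan is to prove the identity in the most concrete setting and then extend it by continuity in two stages, paralleling the construction of $\mathcal L$ itself. For $\hat a_R \in \hat A_R^{B\sqcup B}$ and surface-defined $|\phi\rangle,|\kappa\rangle \in \mathcal D_{B\sqcup B}$, the chain $\hat\kappa_R\hat a_R|\phi\rangle = |\phi a\kappa\rangle = \hat\phi_L\hat a_L|\kappa\rangle$ with $\hat a_L = \hat{\mathcal L}(\hat a_R)$ is exactly the surface-gluing relation \eqref{eq:Lprop1}; for a general $a_R \in \mathcal A_R^{B\sqcup B}$ the middle expression has no literal meaning and should be read as a name for the common value of the two outer vectors, so the content to be proven is $\hat\kappa_R a_R|\phi\rangle = \hat\phi_L a_L|\kappa\rangle$. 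The elementary move used throughout is to compare two vectors by testing both against an arbitrary $|\chi\rangle \in \mathcal H_{B\sqcup B}$ and pushing the left-most operator onto the bra via the adjoint relations \eqref{eq:fullPsiadj2}, \eqref{eq:fullPsiadj2L}, e.g. $\langle\chi|\hat\kappa_R a_R|\phi\rangle = \langle\widehat{(\kappa^\star)}_R\chi\,|\,a_R\,|\,\phi\rangle$ and $\langle\chi|\hat\phi_L a_L|\kappa\rangle = \langle\widehat{(\phi^\star)}_L\chi\,|\,a_L\,|\,\kappa\rangle$. This converts every limit I will take either into a norm limit of a fixed vector acted on by $\Psi_R$ or $\Psi_L$ (controlled by the strong-operator continuity of those maps, Claim~\ref{thm:strong}, Claim~\ref{thm:Psi3}, and their left analogues, together with continuity of $\star$) or into a weak-operator limit of a net of operators.

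Stage 1 fixes $\hat a_R \in \hat A_R^{B\sqcup B}$ and allows arbitrary $|\phi\rangle,|\kappa\rangle \in \mathcal H_{B\sqcup B}$, which I approximate in norm by $|\phi_n\rangle,|\kappa_n\rangle \in \mathcal D_{B\sqcup B}$. For each $n$, \eqref{eq:Lprop1} gives $\langle\chi|\hat\kappa_{n,R}\hat a_R|\phi_n\rangle = \langle\chi|\hat\phi_{n,L}\hat a_L|\kappa_n\rangle$, where $\hat\kappa_{n,R} = \Psi_R(|\kappa_n\rangle)$ and $\hat\phi_{n,L} = \Psi_L(|\phi_n\rangle)$. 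Pushing the left-most operator onto the bra and using $\widehat{(\kappa_n^\star)}_R\chi\to\widehat{(\kappa^\star)}_R\chi$, $\widehat{(\phi_n^\star)}_L\chi\to\widehat{(\phi^\star)}_L\chi$ in norm (continuity of $\star$ and of $\Psi_R$, $\Psi_L$) together with $\hat a_R|\phi_n\rangle\to\hat a_R|\phi\rangle$ and $\hat a_L|\kappa_n\rangle\to\hat a_L|\kappa\rangle$ in norm (boundedness of $\hat a_R$, $\hat a_L$), both sides converge as $n\to\infty$ and we obtain $\langle\chi|\hat\kappa_R\hat a_R|\phi\rangle = \langle\chi|\hat\phi_L\hat a_L|\kappa\rangle$ for all $|\chi\rangle$, hence $\hat\kappa_R\hat a_R|\phi\rangle = \hat\phi_L\hat a_L|\kappa\rangle$.

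Stage 2 treats general $a_R \in \mathcal A_R^{B\sqcup B}$ by writing it as the weak-operator limit of a net $\{\widehat{(a_\alpha)}_R\}\subset\hat A_R^{B\sqcup B}$, so that, as observed just before the statement, $\widehat{(a_\alpha)}_L := \mathcal L(\widehat{(a_\alpha)}_R)$ converges weakly to $a_L$. Stage 1 applies to each $\widehat{(a_\alpha)}_R$ and gives $\langle\chi|\hat\kappa_R\widehat{(a_\alpha)}_R|\phi\rangle = \langle\chi|\hat\phi_L\widehat{(a_\alpha)}_L|\kappa\rangle$ for every $\alpha$; pushing the left-most operators onto the bras rewrites the two sides as $\langle\widehat{(\kappa^\star)}_R\chi\,|\,\widehat{(a_\alpha)}_R\,|\,\phi\rangle$ and $\langle\widehat{(\phi^\star)}_L\chi\,|\,\widehat{(a_\alpha)}_L\,|\,\kappa\rangle$, which by weak-operator convergence of the two nets converge to $\langle\widehat{(\kappa^\star)}_R\chi\,|\,a_R\,|\,\phi\rangle = \langle\chi|\hat\kappa_R a_R|\phi\rangle$ and $\langle\widehat{(\phi^\star)}_L\chi\,|\,a_L\,|\,\kappa\rangle = \langle\chi|\hat\phi_L a_L|\kappa\rangle$ respectively. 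Equating limits for all $|\chi\rangle$ yields $\hat\kappa_R a_R|\phi\rangle = \hat\phi_L a_L|\kappa\rangle$, which is \eqref{eq:Lprop}.

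The only point requiring care is organizational rather than conceptual: each limit must be set up so that the varying object is an operator and the fixed object a vector (or so that the norm-convergence is split between a convergent vector and a bounded operator), so that the continuity results already in hand apply verbatim and no two genuinely independent limits are interchanged. The surface-gluing identity \eqref{eq:Lprop1}, the adjoint relations, and the bra-ket rearrangements are then all routine.
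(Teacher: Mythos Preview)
Your proof is correct and follows essentially the same approach as the paper's: the paper's argument is precisely your Stage~2, starting from \eqref{eq:Lprop1} and passing to the weak-operator limit along a net $\{\widehat{(a_\alpha)}_R\}$ approximating $a_R$ (after pairing with a fixed bra $\langle\gamma|$, which plays the same role as your move of pushing $\hat\kappa_R$, $\hat\phi_L$ onto the bra). Your Stage~1, which first extends \eqref{eq:Lprop1} from $|\phi\rangle,|\kappa\rangle\in\mathcal D_{B\sqcup B}$ to all of $\mathcal H_{B\sqcup B}$ via the norm continuity of $\Psi_R$, $\Psi_L$, and $\star$, makes explicit a step the paper leaves tacit.
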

\begin{proof}
To see this, we again consider a net $\{\widehat {(a_\alpha)}_R \} \subset \hat A_R^{B \sqcup B}$ that converges to $a_R$ in the weak operator topology and the associated net $\{\widehat {(a_\alpha)}_L \} \subset \hat A_L^{B \sqcup B}$ that converges to $a_L$. Using relation \eqref{eq:Lprop1} for each $\alpha$ and taking the inner product with a fixed state $|\gamma \rangle\in \mathcal H_{B \sqcup B}$ gives  
\begin{equation}
\label{eq:Lprop4}
\langle \gamma | \hat \kappa_R \widehat {(a_\alpha)}_R |\phi \rangle  = \langle \gamma |\hat \phi_L  \widehat {(a_\alpha)}_R |\kappa \rangle.
\end{equation}
Taking limits then gives 
\begin{equation}
\label{eq:Lprop5}    
\langle \gamma | \hat \kappa_R a_R |\phi \rangle = \langle \gamma |\phi a \kappa\rangle = \hat \phi_L a_L |\kappa \rangle  
\end{equation}
for all $|\gamma\rangle$, which is equivalent to \eqref{eq:Lprop}.
\end{proof}

\addcontentsline{toc}{section}{References}


\begin{thebibliography}{10}

    \bibitem{Colafranceschi:2023urj}
    E.~Colafranceschi, X.~Dong, D.~Marolf and Z.~Wang, \emph{{Algebras and Hilbert
      spaces from gravitational path integrals: Understanding Ryu-Takayanagi/HRT as
      entropy without invoking holography}},
      \href{https://arxiv.org/abs/2310.02189}{{\ttfamily 2310.02189}}.
    
    \bibitem{Almheiri:2019qdq}
    A.~Almheiri, T.~Hartman, J.~Maldacena, E.~Shaghoulian and A.~Tajdini,
      \emph{{Replica Wormholes and the Entropy of Hawking Radiation}},
      \href{https://doi.org/10.1007/JHEP05(2020)013}{\emph{JHEP} {\bfseries 05}
      (2020) 013} [\href{https://arxiv.org/abs/1911.12333}{{\ttfamily
      1911.12333}}].
    
    \bibitem{Penington:2019kki}
    G.~Penington, S.H.~Shenker, D.~Stanford and Z.~Yang, \emph{{Replica wormholes
      and the black hole interior}},
      \href{https://doi.org/10.1007/JHEP03(2022)205}{\emph{JHEP} {\bfseries 03}
      (2022) 205} [\href{https://arxiv.org/abs/1911.11977}{{\ttfamily
      1911.11977}}].
    
    \bibitem{Lewkowycz:2013nqa}
    A.~Lewkowycz and J.~Maldacena, \emph{{Generalized gravitational entropy}},
      \href{https://doi.org/10.1007/JHEP08(2013)090}{\emph{JHEP} {\bfseries 08}
      (2013) 090} [\href{https://arxiv.org/abs/1304.4926}{{\ttfamily 1304.4926}}].
    
    \bibitem{Faulkner:2013ana}
    T.~Faulkner, A.~Lewkowycz and J.~Maldacena, \emph{{Quantum corrections to
      holographic entanglement entropy}},
      \href{https://doi.org/10.1007/JHEP11(2013)074}{\emph{JHEP} {\bfseries 11}
      (2013) 074} [\href{https://arxiv.org/abs/1307.2892}{{\ttfamily 1307.2892}}].
    
    \bibitem{Dong:2016hjy}
    X.~Dong, A.~Lewkowycz and M.~Rangamani, \emph{{Deriving covariant holographic
      entanglement}}, \href{https://doi.org/10.1007/JHEP11(2016)028}{\emph{JHEP}
      {\bfseries 11} (2016) 028}
      [\href{https://arxiv.org/abs/1607.07506}{{\ttfamily 1607.07506}}].
    
    \bibitem{Dong:2017xht}
    X.~Dong and A.~Lewkowycz, \emph{{Entropy, Extremality, Euclidean Variations,
      and the Equations of Motion}},
      \href{https://doi.org/10.1007/JHEP01(2018)081}{\emph{JHEP} {\bfseries 01}
      (2018) 081} [\href{https://arxiv.org/abs/1705.08453}{{\ttfamily
      1705.08453}}].
    
    \bibitem{Penington:2019npb}
    G.~Penington, \emph{{Entanglement Wedge Reconstruction and the Information
      Paradox}}, \href{https://doi.org/10.1007/JHEP09(2020)002}{\emph{JHEP}
      {\bfseries 09} (2020) 002}
      [\href{https://arxiv.org/abs/1905.08255}{{\ttfamily 1905.08255}}].
    
    \bibitem{Almheiri:2019psf}
    A.~Almheiri, N.~Engelhardt, D.~Marolf and H.~Maxfield, \emph{{The entropy of
      bulk quantum fields and the entanglement wedge of an evaporating black
      hole}}, \href{https://doi.org/10.1007/JHEP12(2019)063}{\emph{JHEP} {\bfseries
      12} (2019) 063} [\href{https://arxiv.org/abs/1905.08762}{{\ttfamily
      1905.08762}}].
    
    \bibitem{Marolf:2020rpm}
    D.~Marolf and H.~Maxfield, \emph{{Observations of Hawking radiation: the Page
      curve and baby universes}},
      \href{https://doi.org/10.1007/JHEP04(2021)272}{\emph{JHEP} {\bfseries 04}
      (2021) 272} [\href{https://arxiv.org/abs/2010.06602}{{\ttfamily
      2010.06602}}].
    
    \bibitem{Witten:2021unn}
    E.~Witten, \emph{{Gravity and the crossed product}},
      \href{https://doi.org/10.1007/JHEP10(2022)008}{\emph{JHEP} {\bfseries 10}
      (2022) 008} [\href{https://arxiv.org/abs/2112.12828}{{\ttfamily
      2112.12828}}].
    
    \bibitem{Chandrasekaran:2022eqq}
    V.~Chandrasekaran, G.~Penington and E.~Witten, \emph{{Large N algebras and
      generalized entropy}},  \href{https://arxiv.org/abs/2209.10454}{{\ttfamily
      2209.10454}}.
    
    \bibitem{Penington:2023dql}
    G.~Penington and E.~Witten, \emph{{Algebras and States in JT Gravity}},
      \href{https://arxiv.org/abs/2301.07257}{{\ttfamily 2301.07257}}.
    
    \bibitem{Kudler-Flam:2023qfl}
    J.~Kudler-Flam, S.~Leutheusser and G.~Satishchandran, \emph{{Generalized Black
      Hole Entropy is von Neumann Entropy}},
      \href{https://arxiv.org/abs/2309.15897}{{\ttfamily 2309.15897}}.
    
    \bibitem{Ryu:2006bv}
    S.~Ryu and T.~Takayanagi, \emph{{Holographic derivation of entanglement entropy
      from AdS/CFT}},
      \href{https://doi.org/10.1103/PhysRevLett.96.181602}{\emph{Phys. Rev. Lett.}
      {\bfseries 96} (2006) 181602}
      [\href{https://arxiv.org/abs/hep-th/0603001}{{\ttfamily hep-th/0603001}}].
    
    \bibitem{Ryu:2006ef}
    S.~Ryu and T.~Takayanagi, \emph{{Aspects of Holographic Entanglement Entropy}},
      \href{https://doi.org/10.1088/1126-6708/2006/08/045}{\emph{JHEP} {\bfseries
      08} (2006) 045} [\href{https://arxiv.org/abs/hep-th/0605073}{{\ttfamily
      hep-th/0605073}}].
    
    \bibitem{Hubeny:2007xt}
    V.E.~Hubeny, M.~Rangamani and T.~Takayanagi, \emph{{A Covariant holographic
      entanglement entropy proposal}},
      \href{https://doi.org/10.1088/1126-6708/2007/07/062}{\emph{JHEP} {\bfseries
      07} (2007) 062} [\href{https://arxiv.org/abs/0705.0016}{{\ttfamily
      0705.0016}}].
    
    \bibitem{Colafranceschi:2023txs}
    E.~Colafranceschi, D.~Marolf and Z.~Wang, \emph{{A trace inequality for
      Euclidean gravitational path integrals (and a new positive action
      conjecture)}},  \href{https://arxiv.org/abs/2309.02497}{{\ttfamily
      2309.02497}}.
    
    \bibitem{Harlow:2015lma}
    D.~Harlow, \emph{{Wormholes, Emergent Gauge Fields, and the Weak Gravity
      Conjecture}}, \href{https://doi.org/10.1007/JHEP01(2016)122}{\emph{JHEP}
      {\bfseries 01} (2016) 122}
      [\href{https://arxiv.org/abs/1510.07911}{{\ttfamily 1510.07911}}].
    
    \bibitem{Saad:2019lba}
    P.~Saad, S.H.~Shenker and D.~Stanford, \emph{{JT gravity as a matrix
      integral}},  \href{https://arxiv.org/abs/1903.11115}{{\ttfamily 1903.11115}}.
    
    \bibitem{Marolf:2020xie}
    D.~Marolf and H.~Maxfield, \emph{{Transcending the ensemble: baby universes,
      spacetime wormholes, and the order and disorder of black hole information}},
      \href{https://doi.org/10.1007/JHEP08(2020)044}{\emph{JHEP} {\bfseries 08}
      (2020) 044} [\href{https://arxiv.org/abs/2002.08950}{{\ttfamily
      2002.08950}}].
    
    \bibitem{KR1}
    R.V.~Kadison and J.R.~Ringrose, \emph{{Fundamentals of the Theory of Operator
      Algebras. Volume I: Elementary Theory}}, American Mathematical Society
      (1983).
    
    \bibitem{Lax}
    P.D.~Lax, \emph{{Functional Analysis}}, Pure and Applied Mathematics: A Wiley
      Series of Texts, Monographs and Tracts, Wiley (1966).
    
    \bibitem{Rieffel:1976}
    M.A.~Rieffel, \emph{Commutation theorems and generalized commutation
      relations}, \href{https://doi.org/10.24033/bsmf.1826}{\emph{Bulletin de la
      Soci\'et\'e Math\'ematique de France} {\bfseries 104} (1976) 205}.
    
    \bibitem{Harlow:2016vwg}
    D.~Harlow, \emph{{The Ryu\textendash{}Takayanagi Formula from Quantum Error
      Correction}}, \href{https://doi.org/10.1007/s00220-017-2904-z}{\emph{Commun.
      Math. Phys.} {\bfseries 354} (2017) 865}
      [\href{https://arxiv.org/abs/1607.03901}{{\ttfamily 1607.03901}}].
    
    \bibitem{Donnelly:2011hn}
    W.~Donnelly, \emph{{Decomposition of entanglement entropy in lattice gauge
      theory}}, \href{https://doi.org/10.1103/PhysRevD.85.085004}{\emph{Phys. Rev.
      D} {\bfseries 85} (2012) 085004}
      [\href{https://arxiv.org/abs/1109.0036}{{\ttfamily 1109.0036}}].
    
    \bibitem{Donnelly:2014fua}
    W.~Donnelly and A.C.~Wall, \emph{{Entanglement entropy of electromagnetic edge
      modes}}, \href{https://doi.org/10.1103/PhysRevLett.114.111603}{\emph{Phys.
      Rev. Lett.} {\bfseries 114} (2015) 111603}
      [\href{https://arxiv.org/abs/1412.1895}{{\ttfamily 1412.1895}}].
    
    \bibitem{Donnelly:2016auv}
    W.~Donnelly and L.~Freidel, \emph{{Local subsystems in gauge theory and
      gravity}}, \href{https://doi.org/10.1007/JHEP09(2016)102}{\emph{JHEP}
      {\bfseries 09} (2016) 102}
      [\href{https://arxiv.org/abs/1601.04744}{{\ttfamily 1601.04744}}].
    
    \bibitem{Takesaki}
    M.~Takesaki, \emph{Theory of Operator Algebras I, II, III}, Springer (1979),
      \href{https://doi.org/https://doi.org/10.1007/978-1-4612-6188-9}{https://doi.org/10.1007/978-1-4612-6188-9}.

\end{thebibliography}
\end{document}